\documentclass[a4paper,UKenglish,cleveref, autoref]{lipics-v2021}
\usepackage{hyperref}

\pdfoutput=1 
\hideLIPIcs 

\title{Relational Dualities and Bisimulation} 

\author{Piotr Kozicki}{University of Bristol, United Kingdom}{piotr.kozicki@bristol.ac.uk}{https://orcid.org/0009-0003-5730-6895}{}

\author{G. A. Kavvos}{University of Bristol, United Kingdom}{alex.kavvos@bristol.ac.uk}{https://orcid.org/0000-0001-7953-7975}{}

\authorrunning{P. Kozicki and G. A. Kavvos}

\Copyright{Piotr Kozicki and G. A. Kavvos}

\ccsdesc[500]{Theory of computation~Modal and temporal logics}
\ccsdesc[500]{Theory of computation~Categorical semantics}

\keywords{bisimulation, modal logic, duality, categorical semantics}

\relatedversion{} 

\funding{This work was supported by a grant from the Research Institute on
Verified Trustworthy Software Systems (VeTSS) and a grant from the Advanced
Research + Invention Agency (ARIA).}

\acknowledgements{The second author benefitted greatly from conversations with
Kohei Kishida.}

\nolinenumbers 

\EventEditors{Frank Pfenning}
\EventNoEds{1}
\EventLongTitle{11th International Conference on Formal Structures for Computation and Deduction (FSCD 2026)}
\EventShortTitle{FSCD 2026}
\EventAcronym{FSCD}
\EventYear{2026}
\EventDate{July 20--23, 2026}
\EventLocation{Lisbon, Portugal}
\EventLogo{}
\SeriesVolume{378}
\ArticleNo{23}

\usepackage{preamble}
\newtheorem*{notation}{Notation}

\begin{document}

\maketitle

\begin{abstract}
  The Kripke semantics of various logics arises via categorical
  dualities between a category of relational frames and their maps,
  and a category of algebras and logical homomorphisms. When the
  relational frames are considered as computational systems (e.g.\ the
  states of a machine), the corresponding algebra is one of logical
  predicates on these systems (e.g.\ predicates on these states, i.e.\
  program logics). Our aim is to extend this phenomenon to relations,
  putting well-behaved relations between systems (e.g.\ bisimulations)
  in correspondence with relations between predicates. This is
  achieved by constructing particular relational extensions of Tarski
  duality (for infinitary classical propositional logic) and Thomason
  duality (for infinitary classical modal logic). We sketch how these
  dualities give rise to a proof system that relates formulae between
  different systems.
\end{abstract}

\section{Introduction}\label{section:introduction}

It is a well-known secret that relational semantics (e.g.\ Kripke semantics)
arises from \emph{dualities}, i.e.\ categorical equivalences of the form $F :
\Op{\CC} \Equiv \DD$. The idea is that $\CC$ is a `spatial' category whose
objects are \emph{frames}, and $\DD$ is an `algebraic' category whose objects
are \emph{algebras of predicates}, which allow reasoning about the elements of
the frames.

For example, truth table semantics for infinitary classical propositional logic
arises from the \emph{Tarski duality} $\Op{\SET} \Equiv \CABA$ between the
category of sets and the category of \emph{complete atomic Boolean algebras}
(CABAs) \cite[\S 1]{kishida_2018}.\footnote{The semantics of \emph{finitary} classical
  propositional logic arises from the duality 
  \(\Op{\SET_{\text{fin}}} \Equiv \BOOL_{\text{fin}}\)
  between finite sets and finite Boolean algebras. In
  fact, we can arrive at the Tarski duality \(\Op{\SET} \Equiv \CABA\) by taking
the Pro-completion of this duality \cite{kurz_2012}.}
The Kripke semantics of infinitary classical modal logic
\cite{kripke_1963,kripke_1963b} arises from the \emph{Thomason duality}
$\Op{\FRMOPEN} \Equiv \CABAO$ between the category of Kripke frames and the category
of CABAs with operators (i.e.\ modalities) \cite{thomason_1975} \cite[\S
2.4]{kishida_2018}. The Kripke semantics of infinitary intuitionistic logic
\cite{kripke_1965b} arises from the duality $\Op{\POSET} \Equiv \PRIMEALGLATT$
between posets and prime algebraic lattices \cite{nielsen_1981,kavvos_2024a}.
Famous dualities of this form include Stone duality (for Boolean algebras)
\cite{johnstone_1982}, J\'{o}nsson-Tarski duality
\cite{jonsson_1951,hansoul_1983,sambin_1988} (for Boolean algebras with
operators), and Esakia duality \cite{esakia_2019} (for Heyting algebras). See
the chapter by Kishida \cite{kishida_2018} for further references.

The overwhelming majority of work in this area has not paid much attention to
the logical r\^{o}le of \emph{morphisms} of these categories.\footnote{One
exception is the work of Abramsky \cite{abramsky_1991} on domain theory in
logical form.} In many cases the `frame' morphisms $f : W \to W'$ of $\CC$ can
be thought of as `truth-preserving': they map worlds to worlds in a way that
preserves and reflects truth, i.e.\ $w \vDash \varphi$ iff $f(w) \vDash
\varphi$. The corresponding morphism $\Pre{f} : F(W') \to F(W)$ of $\DD$ then
preserves the logical structure, e.g.\ $\Pre{f}(\varphi \land \psi) =
\Pre{f}(\varphi) \land \Pre{f}(\psi)$. Thus $\Pre{f}$ can be considered as
mapping predicates over $W'$ to predicates over $W$, in a manner that
preserves logical connectives.

Obtaining this property in the case of intuitionistic and modal logics requires
strong properties on frame morphisms. For example, in the Kripke semantics of
modal logic the induced map $\Pre{f} : \Powerset*{W'} \to \Powerset*{W}$
preserves the modality $\Box$ iff $f : W \to W'$ is an \emph{open map}, i.e.\ a
transition-preserving function whose graph is a \emph{(functional) bisimulation}
\cite[\S 2.4]{kishida_2018}.\footnote{Open maps are also called
  \emph{p-morphisms} \cite[\S 2.3]{chagrov_1996} or \emph{bounded morphisms}
\cite[\S 2.1]{blackburn_2001}.}

However, functionality is incidental to the idea of bisimulation. It is natural
to wonder what happens if we relax the morphisms of $\CC$, requiring them to
merely be \emph{relations} instead of functions. Answering this question would
allow us to have categories $\CC$ consisting of frames and well-behaved
relations, e.g.\ bisimulations. A duality of this form would then put
bisimulations in correspondence with a different kind of morphism between
algebras, enabling reasoning `across' a bisimulation.

Such results, which may be called \emph{relational dualities}, can already be found
in the literature. However, most of them put relations between frames in
correspondence with \emph{hemimorphisms}, i.e.\ morphisms that preserve
most---but not all!---logical connectives. For example, a relation $R \subseteq
W \times W'$ uniquely corresponds to a function $W \to \Powerset*{W'}$, which uniquely
corresponds to a function $\Pre{R} : \Powerset*{W'} \to \Powerset*{W}$ that
preserves all joins (but not necessarily all meets) \cite{jonsson_1951} \cite[\S
2.3]{kishida_2018}. 

This is unsatisfactory from a computational perspective. Traditional dualities
allow us to compute the `action' of a morphism of frames $f : W \to W'$ on the
\emph{syntax} of a formula. Given a formula $\varphi$ (over~$W'$) we can define
a formula $\Pre{f}(\varphi)$ (over~$W$) by recursion. For example, in Tarski
duality this would be computed by defining $\Pre{f}(\varphi \land \psi) \defeq
\Pre{f}(\varphi) \land \Pre{f}(\psi)$ and so on. Thus, $\Pre{f}$ can be pushed
through all connectives until it reaches a propositional variable $p$, where it
returns a predicate $f^{-1}(p)$ which is true at $w$ exactly when $p$ is true at
$f(w)$. Hence, a function $f : W \to W'$ allows us to map predicates on $W'$ to
predicates on $W$ in a logic-preserving manner. As hemimorphisms do \emph{not}
preserve all logical connectives, it is impossible to adapt this definition to a
relation $R \subseteq W \times W'$.

To tackle this difficulty this paper develops a different type of duality, viz.
one that puts relations $R \subseteq W \times W'$ between frames in
correspondence with relations $\mathcal{F}(R) \subseteq F(W) \times F(W')$
between predicates. This allows us to view $\mathcal{F}(R)$ as a
\emph{relational judgment} between formulae pertaining to different systems,
with $R$ playing the r\^{o}le of a `background theory' which specifies how
propositions over the two systems are related. The ultimate goal is to develop a
formal system for synthetically reasoning about the \emph{relation between
formulae over different frames}. It is not difficult to imagine that this might
have applications in program logics.

If the relation $R \subseteq W \times W'$ is also a \emph{bisimulation} between
frames, it can be thought of as a notion of equivalence between the states of
two systems, with $w \Rel{R} w'$ meaning that $w$ and $w'$ have the same
behaviours. Viewed as a judgment, the induced relation $\mathcal{F}(R)
\subseteq F(W) \times F(W')$ then satisfies `inference rules,' which we wish to identify.

As we are interested in bisimulations, we take a hint from the relevant
literature on coalgebra \cite{thijs_1996,levy_2011,jacobs_2016} and base all the
results in this paper on the \emph{lower lifting}, which maps a relation $R
\subseteq W \times W'$ to the \emph{lower relation} \(\Lower{R} \subseteq
\Powerset*{W} \times \Powerset*{W'}\), defined by
\[
  S \Lower{R} T\ \defequiv\ \forall s \in S.\ \exists t \in T.\ s \Rel{R} t.
\]
Famously, this construction is \emph{not} a functor $\REL \fto \REL$, as it does
not preserve identities. However, we can define an appropriate category
$\CABAREL$ with CABAs as objects, so that it becomes a functor $\Lower : \REL
\fto \CABAREL$. Surprisingly, this functor is an equivalence. Moreover, $\Lower$
`extends' Tarski duality, in the sense that the diagram
\[
  \begin{tikzcd}
    \Op{\SET} \ar[r, hook] \ar[d, "\Equiv" left]
    & \Op{\REL} \ar[d, "\Equiv" left, "\Lower \circ (-)^\dagger" right] \\
    \CABA \ar[r, hook] & \CABAREL
  \end{tikzcd}
\]
commutes, where ${(-)}^\dagger : \Op{\REL} \cong \REL$ is the obvious formal
duality that reverses a relation. This follows a known pattern wherein
relational dualities are not really `dualities' per se, but become so only after
composition with a formal duality \cite[\S 2.3]{kishida_2018} \cite[Remark
4.1]{kurz_2023}. In this light the $\Op{(-)}$ is merely an artifact of
functionality.

After recalling some definitions to fix notation
(\cref{section:preliminaries}), we briefly recap the elements of the Tarski
duality $\Op{\SET} \Equiv \CABA$ (\cref{section:tarski}). Then, we show that the
lower lifting given above can be used to extend it to a \emph{relational Tarski
duality} (\cref{section:relational-tarski}). We achieve this by precisely
characterising the relations between CABAs that are obtained as lower liftings,
and hence establish an equivalence. Furthermore, we show that this is an
extension of the usual Tarski duality, and sketch the formal system that this
duality induces.

Following this, we show how the same lifting gives rise to a relational version
of the Thomason duality $\Op{\FRMOPEN} \Equiv \CABAO$
(\cref{section:relational-thomason}). This is somewhat more challenging: while
the Tarski duality acts on arbitrary functions, the Thomason duality associates
open maps (i.e.\ functional bisimulations) with modal complete Boolean
homomorphisms. We show that this has an analogue on the relational level,
putting (bi)simulations in unique correspondence with what we call
\emph{(bi)simulatory relations} between CABAs. We show that this
\emph{relational Thomason duality} extends the duality originally given by
Thomason, and again sketch the formal system that this duality induces.
We close with a discussion of related work (\cref{section:related-work}).

\section{Preliminaries}\label{section:preliminaries}

A \emph{relation} $R : A \relto B$ from a set $A$ to a set $B$ is a subset $R
\subseteq A \times B$. We write $a \Rel{R} b$ to mean $(a, b) \in R$. If $R : A
\relto B$ and $S : B \relto C$ their composition $R \then S : A \relto C$ is
defined in the usual manner, i.e.\ $a \Rel*{R \then S} c$ just if there exists
some $b \in B$ with $a \Rel{R} b \Rel{S} c$. Note that we write the composition
of relations using diagrammatic order. Given a relation $R : A \relto B$ we
define its \emph{opposite} $\Rev{R} : B \relto A$ by $b \Rel{\Rev{R}} a$ iff $a
\Rel{R} b$. Sets and relations form a category $\REL$ under composition, with
the identity $\Id_A : A \relto A$ given by $a_1 \Rel{\Id_A} a_2$ iff $a_1 =
a_2$. Moreover, the opposite construction extends to a functor \(\Rev{(-)} :
\Op{\REL} \to \REL\) which constitutes a formal duality $\Op{\REL} \cong \REL$
given by taking the opposite. This shows that $\REL$ is a \emph{dagger
category}, i.e.\ equipped with an involutive contravariant functor \cite[\S
2.3.1]{heunen_2019}.

A partial order $(P, \sqsubseteq_P)$ is a set $P$ equipped with a relation
$\mathord{\sqsubseteq_P} : P \relto P$ which is reflexive, transitive, and
anti-symmetric. We will commonly refer to partial orders by their carriers ($P$,
$Q$, \ldots) and often elide the subscript of $\sqsubseteq$. A function $f : P
\to Q$ is \emph{monotonic} just if $x \sqsubseteq_P y$ implies $f(x)
\sqsubseteq_Q f(y)$. A \emph{lattice} is a partial order that has all finite
joins and meets. We will use $\mathcal{L}, \mathcal{L}', \ldots$ to denote
lattices. A \emph{complete lattice} $\mathcal{L}$ is a partial order that has
all joins, which also implies that it has all meets. A pair of monotonic
functions $f : P \to Q$ and $g : Q \to P$ is an \emph{adjunction} (which in this
setting is also called a \emph{Galois connection}) just if $f(x) \sqsubseteq_Q
y$ iff $x \sqsubseteq_P g(y)$. This is often denoted by $f \adjoint g$. It is a
consequence that $f$ preserves any joins that exist in $P$ and $g$ preserves any
meets that exist in $Q$. The \emph{adjoint functor theorem} \cite[\S
7.34]{davey_2002} \cite[\S I.4.2]{johnstone_1982} says that if $f : P \to Q$ is
monotonic, $P$ is a complete lattice, and $f$ preserves all joins, then $f$ has
a right adjoint $g : Q \to P$. Dually, if \(g : Q \to P\) is monotonic, \(Q\) is
a complete lattice, and \(g\) preserves all meets, then \(g\) has a left adjoint
\(f : P \to Q\). We refer the reader to the book by Davey and Priestley
\cite{davey_2002} for background on orders, and to the books by Mac Lane,
Awodey, and Riehl \cite{mac_lane_1978,awodey_2010,riehl_2016} for category
theory.

\section{A Primer on Tarski Duality}\label{section:tarski}

We recap the Tarski duality $\Op{\SET} \Equiv \CABA$, and show how it induces an
infinitary classical propositional logic. Recall that a \emph{Boolean algebra}
is a distributive lattice $\mathcal{B}$ in which every element $x \in
\mathcal{B}$ has a \emph{complement} $\lnot x \in \mathcal{B}$ for which $x \lor
\lnot x = \top$ and $x \land \lnot x = \bot$ \cite{johnstone_1982}.

For any set $X$, its powerset $\Powerset*{X}$ is a Boolean algebra \cite[\S
5.2]{davey_2002}. In fact, it is a \emph{complete} Boolean algebra, i.e.\ a complete
lattice with joins and meets of $(S_i)_{i \in I}$ given by $\bigcup_{i \in I} S_i$ and
$\bigcap_{i \in I} S_i$ respectively. Moreover, $\Powerset*{X}$ is \emph{atomic}, in the
following sense.

\begin{definition}[Atom]
  Let $(P, \sqsubseteq)$ be a partial order with a bottom element $\bot$. An
  element $a \in P$ is an \emph{atom} of $P$ if $a \neq \bot$ and $x \sqsubseteq
  a$ implies that either $x = \bot$ or $x = a$. We write $\At{P}$ for the set of
  atoms of $P$.
\end{definition}

\begin{definition}[Atomic complete lattice]
  A complete lattice $\mathcal{L}$ is \emph{atomic} just if every element is
  equal to the join of atoms below it, i.e.\ for every $x \in \mathcal{L}$ we
  have $x = \bigsqcup \SetComp{ a \in \At{\mathcal{L}}}{ a \sqsubseteq x }$.
\end{definition}

A \emph{complete atomic Boolean algebra (CABA)} is then a complete Boolean
algebra $\mathcal{B}$ which is moreover atomic. For any set $X$, the powerset $\Powerset*{X}$
is a CABA: its atoms are exactly the singleton sets $\{ x \}$ for $x \in X$:
every $S \subseteq X$ can be reconstructed as $S = \bigcup_{x \in S} \{ x \}$.

A \emph{morphism of CABAs} $\Pre{f} : \mathcal{B} \to \mathcal{B}'$ is
a monotonic function that preserves all meets and joins. This
definition has some significant implications. First, as $\mathcal{B}$
and $\mathcal{B}'$ are Boolean algebras, every such
$\Pre{f} : \mathcal{B} \to \mathcal{B}'$ is a \emph{complete Boolean
  homomorphism}, i.e.\ preserves \emph{all} Boolean connectives,
including negation. Second, by the adjoint functor theorem the map
$\Pre{f} : \mathcal{B} \to \mathcal{B}'$ has both a left and a right
adjoint:
\[
  \begin{tikzcd}[column sep=huge]
    \mathcal{B}' \arrow[r, bend left=40, "f_!"] \arrow[r, bend right=40, "f_*"'] & \mathcal{B} \arrow[l, "f^*" description] \arrow[phantom, from=1-1, to=1-2, "{\rotatebox[origin=c]{-90}{$\dashv$}}" pos=0.5, shift left=3.5, description] \arrow[phantom, from=1-1, to=1-2, "{\rotatebox[origin=c]{-90}{$\dashv$}}", pos=0.5, shift right=3.5, description]
  \end{tikzcd}
\]
CABAs and their morphisms form a category $\CABA$.

Given any function $f : X \to Y$ we can define a complete Boolean homomorphism
$\Pre{f} : \Powerset*{Y} \to \Powerset*{X}$ by $\Pre{f}(B) = \SetComp{ x \in X
}{ f(x) \in B}$ \cite{lawvere_1970,mellies_2016,kishida_2018}. By the adjoint
functor theorem, $\Pre{f}$ has both a left and a right adjoint, respectively
given by
\begin{alignat*}{2}
  \LKan{f}(A) &= \SetComp{ y \in Y }{\exists x \in X.\ f(x) = y \land x \in A }
  &&= \SetComp{ f(x) }{ x \in A } \\
  \RKan{f}(A) &= \SetComp{ y \in Y }{ \forall x \in X.\ f(x) = y \Rightarrow x \in A }
  &&= \SetComp { y \in Y }{ \Pre{f}(\{ y \}) \subseteq A }.
\end{alignat*}
Thus, we obtain a functor $\Powerset : \Op{\SET} \fto \CABA$.

In order to invert this functor we use some properties of its adjoints. The
following lemma is standard: it follows from the facts that atoms and primes
coincide in CABAs, and that the left adjoint of a complete lattice
homomorphism preserves primes \cite[Lemma 1.23 and Exercise 1.3.10.e]{gehrke_2024}.

\begin{lemma}
  \label{lemma:left-adj-atoms}
  Let $\Pre{f} : \mathcal{B} \to \mathcal{B}'$ be a morphism of CABAs. Then its
  left adjoint $\LKan{f} : \mathcal{B}' \to \mathcal{B}$ maps atoms to atoms.
\end{lemma}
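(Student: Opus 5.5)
The plan is to show directly that for an atom $a \in \At{\mathcal{B}'}$ the element $\LKan{f}(a)$ is an atom of $\mathcal{B}$. We use two facts about CABAs: atoms are exactly the completely join-prime elements, and the left adjoint preserves joins while $\Pre{f}$ preserves all meets, joins and complements.

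\emph{Step 1: $\LKan{f}(a) \neq \bot$.} Suppose $\LKan{f}(a) = \bot$. By the adjunction $\LKan{f} \adjoint \Pre{f}$ we have $a \sqsubseteq \Pre{f}(\bot)$. Since $\Pre{f}$ preserves the empty join, $\Pre{f}(\bot) = \bot$. Hence $a = \bot$, which contradicts $a$ being an atom.

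\emph{Step 2: anything strictly below $\LKan{f}(a)$ is $\bot$.} Let $x \sqsubseteq \LKan{f}(a)$ with $x \neq \bot$; we aim to show $x = \LKan{f}(a)$. Complementation in $\mathcal{B}$ gives $\top = x \lor \lnot x$. Because $\Pre{f}$ is a complete Boolean homomorphism, $\top = \Pre{f}(x) \lor \Pre{f}(\lnot x)$, and $\Pre{f}(\lnot x) = \lnot \Pre{f}(x)$. An atom $a$ of a Boolean algebra lies below either $y$ or $\lnot y$ for any $y$: if $a \not\sqsubseteq y$, then $a \land y = \bot$ because $a$ is an atom, so $a \sqsubseteq \lnot y$. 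This gives two cases.
\begin{itemize}
\item If $a \sqsubseteq \Pre{f}(x)$, the adjunction gives $\LKan{f}(a) \sqsubseteq x$, so $x = \LKan{f}(a)$.
\item If $a \sqsubseteq \Pre{f}(\lnot x)$, the adjunction gives $\LKan{f}(a) \sqsubseteq \lnot x$. Together with $x \sqsubseteq \LKan{f}(a)$ this yields $x \sqsubseteq x \land \lnot x = \bot$, contradicting $x \neq \bot$.
\end{itemize}
So $\LKan{f}(a)$ is an atom.

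The only delicate point is Step 2. There we need that $\Pre{f}$ preserves negation, which follows from preserving finite meets and joins in Boolean algebras since complements are unique. We also need the dichotomy $a \sqsubseteq y$ or $a \sqsubseteq \lnot y$ for atoms, which follows from distributivity. Atomicity and completeness are used only to guarantee, via the adjoint functor theorem, that $\LKan{f}$ exists.
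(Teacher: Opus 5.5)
Your proof is correct. It uses the same mechanism as the paper, but packaged differently and more directly.

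The paper does not prove the lemma itself. It reduces it to two standard facts: atoms of a CABA are exactly its (completely) join-prime elements, and the left adjoint of a complete lattice homomorphism preserves join-primes. The second fact is the usual transposition: if $f_!(a) \sqsubseteq \bigsqcup_i x_i$, then $a \sqsubseteq f^*(\bigsqcup_i x_i) = \bigsqcup_i f^*(x_i)$, so $a \sqsubseteq f^*(x_i)$ for some $i$, i.e.\ $f_!(a) \sqsubseteq x_i$.

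Your Step 2 runs this same transposition across $f_! \dashv f^*$, but only for the complementary join $x \lor \lnot x$. The dichotomy ``$a \sqsubseteq y$ or $a \sqsubseteq \lnot y$'' replaces join-primality of $a$. The fact that $f^*$ preserves complements replaces preservation of arbitrary joins. Case 2 inlines the usual ``join-prime implies atom'' argument.

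What each approach buys:
\begin{itemize}
\item Your version is a self-contained check that uses only finitary Boolean structure.
\item The paper's factorisation has a key step (left adjoints of complete lattice homomorphisms preserve complete primes) that never mentions complements. That step therefore transfers verbatim to non-Boolean settings, such as the prime algebraic lattices of the intuitionistic duality mentioned in the introduction, where primes rather than atoms are the relevant points.
\end{itemize}

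Two small remarks on your write-up. First, your opening plan lists ``atoms are exactly the completely join-prime elements'' as a tool, but the argument never uses it. Second, contrary to your closing sentence, atomicity is not needed even for the existence of $f_!$: the adjoint functor theorem only requires $\mathcal{B}$ to be complete and $f^*$ to preserve meets.
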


We can thus define a functor $\At : \CABA \to \Op{\SET}$, which maps a CABA
$\mathcal{B}$ to its set of atoms $\At{\mathcal{B}}$, and use
\Cref{lemma:left-adj-atoms} to map a morphism
$\Pre{f} : \mathcal{B} \to \mathcal{B}'$ to the restriction
$\LKan{f}|_{\At{\mathcal{B}'}} : \At{\mathcal{B}'} \to \At{\mathcal{B}}$ of its
left adjoint to atoms.

This is a pseudo-inverse to $\Powerset$, and we obtain a categorical
equivalence.

\begin{theorem}[Tarski duality]
  $\Op{\SET} \Equiv \CABA.$
\end{theorem}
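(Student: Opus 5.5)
We exhibit $\Powerset : \Op{\SET} \fto \CABA$ and $\At : \CABA \fto \Op{\SET}$ as pseudo-inverses by constructing natural isomorphisms $\eta : \Id \cong \At \circ \Powerset$ on $\SET$ and $\varepsilon : \Powerset \circ \At \cong \Id$ on $\CABA$. First we check that $\At$ is a functor. Given $\Pre{f} : \mathcal{B} \to \mathcal{B}'$ and $\Pre{g} : \mathcal{B}' \to \mathcal{B}''$, left adjoints compose contravariantly, so $\LKan{(g^* \circ f^*)} = \LKan{f} \circ \LKan{g}$. The left adjoint of the identity is the identity. Restricting to atoms, which \Cref{lemma:left-adj-atoms} permits, gives functoriality into $\Op{\SET}$.

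For $\eta$, we send $x \in X$ to $\{x\} \in \At{\Powerset*{X}}$. This is a bijection, since the atoms of $\Powerset*{X}$ are exactly the singletons. For naturality, take $f : X \to Y$. The formula for $\LKan{f}$ gives $\LKan{f}(\{x\}) = \{f(x)\}$, so $\At(\Powerset(f))$ acts on singletons exactly as $f$ does.

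For $\varepsilon_{\mathcal{B}} : \Powerset*{\At{\mathcal{B}}} \to \mathcal{B}$, we take $S \mapsto \bigsqcup S$. Its inverse is $x \mapsto \SetComp{a \in \At{\mathcal{B}}}{a \sqsubseteq x}$. One composite is the identity because $\mathcal{B}$ is atomic. For the other composite we need that an atom $a \sqsubseteq \bigsqcup S$ must lie in $S$. Here is the argument. Since $a$ is an atom, $a \sqcap b \in \{\bot, a\}$ for every atom $b$, so distinct atoms are disjoint. By the infinite distributive law $a \sqcap \bigsqcup S = \bigsqcup_{b \in S}(a \sqcap b)$, which holds in any complete Boolean algebra, we get $a = \bigsqcup_{b\in S}(a\sqcap b)$. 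If $a \notin S$, every term $a \sqcap b$ is $\bot$, so $a = \bot$, a contradiction. Both maps are monotone and mutually inverse, so $\varepsilon_{\mathcal{B}}$ is an order isomorphism and hence a CABA isomorphism.

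The main obstacle is naturality of $\varepsilon$. For $\Pre{f} : \mathcal{B} \to \mathcal{B}'$, write $h = \LKan{f}|_{\At{\mathcal{B}'}}$. We must show that $\Pre{f}(\bigsqcup S) = \bigsqcup \SetComp{a' \in \At{\mathcal{B}'}}{h(a') \in S}$ for $S \subseteq \At{\mathcal{B}}$. Since both sides are joins of atoms in the atomic algebra $\mathcal{B}'$, it suffices to show that they have the same atoms below them. For an atom $a'$ of $\mathcal{B}'$, the adjunction gives $a' \sqsubseteq \Pre{f}(\bigsqcup S)$ iff $h(a') \sqsubseteq \bigsqcup S$. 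By the atom-in-join fact above, this holds iff $h(a') \in S$. Again by that fact, now in $\mathcal{B}'$, this holds iff $a'$ lies below the right-hand side. So the two sides agree, and this completes the equivalence.
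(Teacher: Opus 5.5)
Your proof is correct and follows the paper's approach. The paper defines $\At$ on morphisms by restricting left adjoints to atoms (via \Cref{lemma:left-adj-atoms}) and rests the equivalence on $\mathcal{B} \cong \Powerset*{\At{\mathcal{B}}}$, where each element is determined by the atoms below it; you fill in what the paper leaves implicit, namely both natural isomorphisms, their naturality squares, and a proof of complete join-primality of atoms from infinite distributivity, which the paper treats as standard.
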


The gist is that every CABA $\mathcal{B}$ is isomorphic to the powerset
$\Powerset*{\At{\mathcal{B}}}$ of its atoms: its elements are uniquely
determined by the atoms below them.

This duality induces (infinitary) classical propositional logic. Let $W$ be a
set of \emph{worlds}. This set corresponds to the powerset $\Powerset*{W}$,
which can be seen as a set of \emph{predicates}. A world $w \in W$ satisfies the
predicate $\varphi \in \Powerset*{W}$ just if $w \in \varphi$. As
$\Powerset*{W}$ is a Boolean algebra, predicates are closed under all Boolean
operations.\footnote{The fact that $\Powerset*{W}$ is a \emph{complete} Boolean
  algebra means that predicates are also closed under \emph{infinite}
conjunction $\bigwedge_{i \in I} \varphi_i$ and disjunction $\bigvee_{i \in I}
\varphi_i$.} Individual worlds $w \in W$ correspond to singleton predicates $\{
w \} \in \Powerset*{W}$ which uniquely characterise them. If we interpret $W$ as
the \emph{set of states} of a computer, logics of this ilk have been in
continuous employment since the pioneering work of Dijkstra
\cite{dijkstra_1976}.

\section{A Relational Tarski Duality}\label{section:relational-tarski}

If instead of a function $f : W \to W'$ we were to have an arbitrary
\emph{relation} $R : W \relto W'$, what would the corresponding morphism between
$\Powerset*{W}$ and $\Powerset*{W'}$ be?

It is possible to obtain such a duality by enlarging $\CABA$ to $\CABAJoin$,
whose morphisms only preserve joins, and then show that $\REL \Equiv \CABAJoin$
\cite{jonsson_1951,scott_1980} \cite[\S 2.3]{kishida_2018}. Indeed, if $h :
\Powerset*{X} \to \Powerset*{Y}$ preserves all joins, then---by writing any set
as a union of singletons---we see that $h$ is completely determined by a
function $X \to \Powerset*{Y}$, i.e.\ a relation $X \relto Y$. Hofmann and Nora
\cite[\S 4.5]{hofmann_2015} show that this duality is a special case of a
general one. However, these more general morphisms do \emph{not} preserve the
logical structure of predicates. For example it may be that $h(\varphi \land
\psi) \neq h(\varphi) \land h(\psi)$. We have thereby lost the benefit of being
able to recursively compute $h$ on the syntax of formulae.

Instead, we would like a duality whose logical side consists of predicates and
\emph{relations}. To construct one we consider a simple way of \emph{lifting}
relations to powersets \cite{kurz_2016}.

\begin{definition}[Lower Relation]
  Given a relation \(R : X \relto Y\), the associated \emph{lower relation}
  \(\Lower{R} : \Powerset*{X} \relto \Powerset*{Y}\) is defined by
  \[
    S \Lower{R} T\ \defequiv\ \forall s \in S.\ \exists t \in T.\ s \Rel{R} t.
  \]
\end{definition}

This lifting has a long history. It is one half of the \emph{Egli-Milner
lifting}, which first appeared in the powerdomain literature
\cite{plotkin_1976,plotkin_1979}. It then resurfaced in the coalgebra literature
\cite{thijs_1996}, where post-fixed-points amount to \emph{simulations} between
transition systems \cite[\S 3]{jacobs_2016}. Intuitively, \(S \Lower{R} T\)
means every state of \(S\) can be `$R$-simulated' by a state of \(T\).

\(\Lower\) maps a relation between sets to a relation between CABAs. However, it
is \emph{not} a functor $\REL \fto \REL$, as it maps the identity relation \(\id :
X \relto X\) to the containment relation \(\mathbin{\subseteq} : \Powerset*{X}
\relto \Powerset*{X}\). We will define a category \(\CABAREL\) so that
\(\Lower\) becomes a functor \(\REL \fto \CABAREL\). The relations between CABAs
will be characterised as follows.

\begin{definition}[Directionally atomic relation]
  Let $P, Q$ be partial orders, $\mathcal{L}$ be a complete lattice, and
  $\mathcal{B}, \mathcal{B}'$ be CABAs.
  \begin{enumerate}
    \item A relation \(R : P \relto Q\) is a \emph{bimodule} just when \(p'
      \sqsubseteq p \Rel{R} q \sqsubseteq q'\) implies \(p' \Rel{R} q'\).
    \item A relation \(R : \mathcal{L} \relto Q\) is \emph{left-disjunctive}
      just if \(a_i \Rel{R} b\) for all $i \in I$ implies \(\Par{\bigsqcup_{i
      \in I} a_i} \Rel{R} b\).
    \item A relation \(R : \mathcal{B} \relto \mathcal{B}'\) is
      \emph{atomic-founded} when for any atom \(a \in \mathcal{B}\), if \(a
      \Rel{R} b\) then there exists an atom \(b' \sqsubseteq b\) with \(a
      \Rel{R} b'\).
    \item A relation \(R : \mathcal{B} \relto \mathcal{B}'\) is
      \emph{directionally atomic} if it is left-disjunctive,
      atomic-founded, and a bimodule.
  \end{enumerate}
\end{definition}

It is simple to show that directionally atomic relations compose, and that
$\sqsubseteq$ is the identity.
\begin{proposition}
  If \(R : \mathcal{B} \relto \mathcal{B}'\) and
  \(S : \mathcal{B}' \relto \mathcal{B}''\) are directionally atomic, then so is
  their composition \(R \then S : \mathcal{B} \relto \mathcal{B}''\).
\end{proposition}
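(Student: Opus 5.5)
We verify the three defining properties of a directionally atomic relation for $R \then S$ in turn. The bimodule property and atomic-foundedness are direct; left-disjunctivity is the only step needing the CABA structure, via atoms.

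\emph{Bimodule.} Suppose $a' \sqsubseteq a \Rel*{R \then S} c \sqsubseteq c'$. Pick a witness $b$ with $a \Rel{R} b \Rel{S} c$. Since $R$ is a bimodule and $b \sqsubseteq b$, we get $a' \Rel{R} b$. Likewise $S$ being a bimodule gives $b \Rel{S} c'$. Hence $a' \Rel*{R \then S} c'$.

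\emph{Atomic-founded.} Let $a$ be an atom of $\mathcal{B}$ with $a \Rel{R} b \Rel{S} c$. Because $R$ is atomic-founded, there is an atom $b' \sqsubseteq b$ with $a \Rel{R} b'$. Since $S$ is a bimodule, $b' \Rel{S} c$. Because $S$ is atomic-founded and $b'$ is an atom, there is an atom $c' \sqsubseteq c$ with $b' \Rel{S} c'$. Therefore $a \Rel*{R \then S} c'$, as required.

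\emph{Left-disjunctive.} This is the main obstacle. Suppose $a_i \Rel*{R \then S} c$ for all $i \in I$, with witnesses $b_i$ satisfying $a_i \Rel{R} b_i \Rel{S} c$. The natural candidate witness for $\bigsqcup_i a_i$ is $b \defeq \bigsqcup_i b_i$.
\begin{itemize}
  \item For each $i$, $b_i \sqsubseteq b$, so the bimodule property of $R$ gives $a_i \Rel{R} b$. Left-disjunctivity of $R$ then gives $\bigsqcup_i a_i \Rel{R} b$.
  \item Left-disjunctivity of $S$ applied to $b_i \Rel{S} c$ for all $i$ gives $b \Rel{S} c$.
\end{itemize}
Together these yield $\bigsqcup_i a_i \Rel*{R \then S} c$.

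There is one subtlety: the witnesses $b_i$ must be chosen, which uses the axiom of choice for arbitrary $I$. If we want to avoid choice, we can instead take $b$ to be the join of all $b'$ such that $b' \Rel{S} c$ and $a_i \Rel{R} b'$ for some $i$; the same argument goes through. The empty join case is covered by left-disjunctivity of $S$ at $\bot$, which gives $\bot \Rel{S} c$.

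Along the same lines, one checks that $\sqsubseteq$ is directionally atomic and is a two-sided unit for composition, using the bimodule property. That is the companion claim in the text, but it is not needed here.
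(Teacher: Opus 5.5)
Your proof is correct, and it is the direct property-by-property check that the paper leaves to the reader; the paper gives no proof beyond calling it simple. Its atomic-foundedness step is the same move the paper uses to show that restriction to atoms is functorial: replace the witness by an atom using atomic-foundedness of $R$ and bimodularity of $S$. One small correction: contrary to your opening remark, the left-disjunctivity step uses only completeness of $\mathcal{B}'$ and bimodularity, not atoms, and the choice issue disappears most cleanly if you take the single witness $b \defeq \bigsqcup\{b' \in \mathcal{B}' : b' \Rel{S} c\}$, which satisfies $b \Rel{S} c$ by left-disjunctivity of $S$ and lies above every admissible $b_i$.
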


\begin{proposition}
  \(\mathbin{\sqsubseteq} : \mathcal{B} \relto \mathcal{B}\) is directionally
  atomic, and is the identity for composition of directionally atomic relations.
\end{proposition}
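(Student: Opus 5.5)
We verify the three defining properties of a directionally atomic relation for $\sqsubseteq$, and then show that pre- and post-composition with $\sqsubseteq$ leave any directionally atomic $R : \mathcal{B} \relto \mathcal{B}'$ unchanged.

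\textbf{$\sqsubseteq$ is directionally atomic.} Being a bimodule is transitivity: $p' \sqsubseteq p \sqsubseteq q \sqsubseteq q'$ gives $p' \sqsubseteq q'$. Left-disjunctivity is the defining property of a join: if $a_i \sqsubseteq b$ for all $i \in I$, then $\bigsqcup_{i \in I} a_i \sqsubseteq b$. For atomic-foundedness, let $a$ be an atom with $a \sqsubseteq b$. Take $b' \defeq a$. It is an atom, $b' \sqsubseteq b$, and $a \sqsubseteq b'$ by reflexivity.

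\textbf{Identity laws.} Let $R : \mathcal{B} \relto \mathcal{B}'$ be directionally atomic. We show $\sqsubseteq \then R = R$ and $R \then \sqsubseteq = R$.

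For $\sqsubseteq \then R = R$: if $a \Rel{R} b$, then $a \sqsubseteq a \Rel{R} b$ by reflexivity, so $R \subseteq \sqsubseteq \then R$. Conversely, if $a \sqsubseteq a' \Rel{R} b$, then the bimodule property (with $b \sqsubseteq b$) gives $a \Rel{R} b$.

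For $R \then \sqsubseteq = R$ the argument is symmetric. If $a \Rel{R} b$, then $a \Rel{R} b \sqsubseteq b$. Conversely, if $a \Rel{R} b' \sqsubseteq b$, the bimodule property gives $a \Rel{R} b$.

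Only reflexivity and the bimodule condition are used here. Left-disjunctivity and atomic-foundedness are needed only to check that $\sqsubseteq$ itself lies in the class. There is no substantive obstacle. The one point to state explicitly is that "identity" refers to composition within the class of directionally atomic relations. The preceding proposition guarantees this class is closed under composition, and the argument above shows that $\sqsubseteq$ is a two-sided unit in it.
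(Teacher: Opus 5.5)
Your proof is correct: the three defining properties of $\sqsubseteq$ follow from transitivity, from the universal property of joins, and from taking the atom itself as the witness; the unit laws then follow from reflexivity together with the bimodule condition. The paper gives no explicit proof and only remarks that this is ``simple to show''; your direct verification is the routine check that remark refers to.
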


We can therefore define a category \(\CABAREL\) with CABAs as objects, and
directionally atomic relations as morphisms. We can then construct a functor:

\begin{lemma}\label{lemma:lower-then-cabarel}
   \(\Lower{R} : \Powerset*{X} \relto \Powerset*{Y}\) is directionally atomic
   for any relation \(R : X \relto Y\).
\end{lemma}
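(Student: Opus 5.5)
We verify the three defining conditions of a directionally atomic relation directly from the definition of $\Lower{R}$, recalling that in $\Powerset*{X}$ the order is inclusion, joins are unions, and the atoms are exactly the singletons $\{x\}$ for $x \in X$. Each condition reduces to a one-line unfolding of the quantifier pattern $\forall s \in S.\ \exists t \in T.\ s \Rel{R} t$, so no earlier result beyond the description of atoms of a powerset is needed.

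\emph{Bimodule.} Suppose $S' \subseteq S$, $S \Lower{R} T$ and $T \subseteq T'$. Given $s \in S'$ we have $s \in S$, so there is some $t \in T$ with $s \Rel{R} t$. Since $t \in T'$, this gives $S' \Lower{R} T'$. Shrinking the universally quantified set and enlarging the existentially quantified one both preserve the statement.

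\emph{Left-disjunctive.} Suppose $S_i \Lower{R} T$ for all $i \in I$. Take $s \in \bigcup_{i \in I} S_i$; then $s \in S_i$ for some $i$. Hence there is $t \in T$ with $s \Rel{R} t$, so $\Par{\bigcup_{i} S_i} \Lower{R} T$. When $I = \emptyset$ the join is $\emptyset$, and $\emptyset \Lower{R} T$ holds vacuously, so the empty case is covered as well.

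\emph{Atomic-founded.} An atom of $\Powerset*{X}$ is a singleton $\{x\}$. If $\{x\} \Lower{R} T$, there is some $t \in T$ with $x \Rel{R} t$. Then $\{t\}$ is an atom of $\Powerset*{Y}$, $\{t\} \subseteq T$, and $\{x\} \Lower{R} \{t\}$. This is the only step that uses atoms, and the only place needing any care is identifying the atoms of $\Powerset*{Y}$ as singletons, which the paper has already noted. I expect no genuine obstacle: the whole proof is routine unfolding, with the atomic-founded condition being where the existential witness $t$ is turned into the required atom.
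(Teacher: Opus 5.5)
Your proof is correct. The paper states this lemma without giving a proof, and your direct unfolding of the definition is the argument it evidently intends: shrinking and enlarging sets gives the bimodule property, taking unions (including the vacuous empty join) gives left-disjunctivity, and turning the witness $t$ into the atom $\{t\}$ gives atomic-foundedness.
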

%
%
%

\begin{lemma}\label{lemma:lower-functor}
  \(\Lower\) is a functor \(\REL \fto \CABAREL\).
\end{lemma}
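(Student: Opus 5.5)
By \cref{lemma:lower-then-cabarel}, $\Lower$ sends each morphism $R : X \relto Y$ of $\REL$ to a morphism $\Lower{R} : \Powerset*{X} \relto \Powerset*{Y}$ of $\CABAREL$. On objects we send $X$ to $\Powerset*{X}$. Two functor laws then remain to be checked: preservation of identities and preservation of composition. Both are direct set-theoretic computations from the definition of the lower relation.

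\emph{Identities.} The identity of $\Powerset*{X}$ in $\CABAREL$ is $\subseteq$, by the earlier proposition. We have $S \Lower{\Id_X} T$ iff for all $s \in S$ there is $t \in T$ with $s = t$, i.e.\ iff $S \subseteq T$. Hence $\Lower{\Id_X} = \mathord{\subseteq}$.

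\emph{Composition.} Given $R : X \relto Y$ and $S : Y \relto Z$, we must show $\Lower{(R \then S)} = \Lower{R} \then \Lower{S}$ as relations $\Powerset*{X} \relto \Powerset*{Z}$.

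For the inclusion $\supseteq$, suppose $A \Lower{R} B \Lower{S} C$. Take $a \in A$. By the first relation there is $b \in B$ with $a \Rel{R} b$, and by the second there is $c \in C$ with $b \Rel{S} c$. Then $a \Rel*{R \then S} c$, so $A \Lower{(R \then S)} C$.

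For the inclusion $\subseteq$, suppose $A \Lower{(R \then S)} C$. We need a witness $B \subseteq Y$. The natural choice is the set of intermediate points actually used:
\[
  B \defeq \SetComp{ y \in Y }{ \exists a \in A.\ \exists c \in C.\ a \Rel{R} y \Rel{S} c }.
\]
We check both relations for this $B$.
\begin{itemize}
  \item $A \Lower{R} B$: for each $a \in A$ the hypothesis gives some $y$ and $c \in C$ with $a \Rel{R} y \Rel{S} c$, and that $y$ lies in $B$.
  \item $B \Lower{S} C$: each $y \in B$ has, by construction, some $c \in C$ with $y \Rel{S} c$.
\end{itemize}
So $A \Lower{R} B \Lower{S} C$.

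The main point needing care is this choice of witness $B$ in the $\subseteq$ direction of composition. The tempting choice of the full image of $A$ under $R$ fails, because some of those points need not be $S$-related to anything in $C$; restricting to points that continue into $C$ fixes this. Everything else is routine unfolding of definitions.
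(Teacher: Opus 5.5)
Your proof is correct and matches the paper's argument. The identity case is the same observation that $\Lower{\Id_X}$ is $\subseteq$. For composition, your witness $B$ is exactly the paper's $\bigcup_{a \in A} B_a$ with $B_a = \SetComp{b \in Y}{a \Rel{R} b \wedge \exists c \in C.\ b \Rel{S} c}$, and the converse inclusion is unfolded directly from the definitions, as in the paper.
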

\begin{proof}

  By \Cref{lemma:lower-then-cabarel} every \(\Lower{R}\) is a morphism in
  \(\CABAREL\).

  First, we show that \(\Lower\) preserves the identity. If \(\id : X \relto X\)
  then \(\Lower{\id}\) is just the relation \(\subseteq\), which is the identity
  \(\Powerset*{X} \relto \Powerset*{X}\) in \(\CABAREL\).

  Second, we show that \(\Lower\) preserves composition. Let \(R : X \relto Y\)
  and \(S : Y \relto Z\) be relations. Suppose \(A \Lower{R \then S} C\). For
  any \(a \in A\) define
  \begin{align*}
    B_a \defeq
    \SetComp{b \in Y}{a \Rel{R} b \wedge \exists c \in C.\ b \Rel{S} c},
    &&
       B \defeq \bigcup_{a \in A} B_a.
  \end{align*}
  Then $A \Lower{R} B$ and $B \Lower{S} C$, so
  \(A \Lower{R} \then \Lower{S} C\).

  For the converse, suppose instead that \(A \Lower{R} \then \Lower{S} C\), so
  there exists some \(B\) with \(A \Lower{R} B \Lower{S} C\). Then for every
  \(a \in A\) there must exist \(b \in B\) and \(c \in C\) with
  \(a \Rel{R} b \Rel{S} c\). Hence \(A \Lower{R \then S} C\).
\end{proof}

\begin{lemma}\label{lem:lower-faithful}
  \(\Lower : \REL \fto \CABAREL\) is faithful.
\end{lemma}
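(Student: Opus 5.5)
To prove that \(\Lower\) is faithful, I will show that a relation \(R : X \relto Y\) can be recovered from \(\Lower{R}\) by testing it on atoms, i.e.\ on singletons. Faithfulness says that for relations \(R, R' : X \relto Y\), the equality \(\Lower{R} = \Lower{R'}\) forces \(R = R'\). So it suffices to exhibit an explicit reconstruction of \(R\) from \(\Lower{R}\).

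The key observation is that for \(x \in X\) and \(y \in Y\) we have
\[
  x \Rel{R} y \iff \{x\} \Lower{R} \{y\}.
\]
This follows directly from the definition of the lower relation. The statement \(\{x\} \Lower{R} \{y\}\) unfolds to \(\forall s \in \{x\}.\ \exists t \in \{y\}.\ s \Rel{R} t\). Since both quantifiers range over singletons, this is equivalent to \(x \Rel{R} y\).

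Given this equivalence, the argument is immediate. Suppose \(\Lower{R} = \Lower{R'}\). Then for all \(x, y\),
\[
  x \Rel{R} y \iff \{x\} \Lower{R} \{y\} \iff \{x\} \Lower{R'} \{y\} \iff x \Rel{R'} y,
\]
and hence \(R = R'\).

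No step here is a genuine obstacle; the only point requiring care is the unfolding of the quantifiers over singletons. One could also remark that this shows \(R\) is precisely the restriction of \(\Lower{R}\) to atoms, \(\At{\Powerset*{X}} \times \At{\Powerset*{Y}}\). That observation is what one expects to use again later when proving fullness, via atomic-foundedness.
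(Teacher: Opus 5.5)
Your proposal is correct and matches the paper's proof: it likewise recovers the relation from its lower lifting on singletons, using that $x$ is $R$-related to $y$ iff $\{x\}$ is related to $\{y\}$ by the lower relation. It then chains these equivalences through the assumed equality of the lower relations.
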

\begin{proof}
  Suppose \(R, S : X \relto Y\) are two relations and that
  \(\Lower{R} = \Lower{S}\). Then \(x \Rel{R} y\) iff
  \(\Set{x} \Lower{R} \Set{y}\) iff \(\Set{x} \Lower{S} \Set{y}\) iff
  \(x \Rel{S} y\).
\end{proof}

\begin{lemma}\label{lem:cabarel-then-lower}
  \(\Lower : \REL \fto \CABAREL\) is full. In other words, any
  directionally atomic \(R : \Powerset*{X} \relto \Powerset*{Y}\) is
  \(\Lower{S}\) for some relation \(S : X \relto Y \).
\end{lemma}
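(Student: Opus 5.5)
Given a directionally atomic \(R : \Powerset*{X} \relto \Powerset*{Y}\), the natural candidate is to read off \(S\) on singletons, exactly as in the proof of \Cref{lem:lower-faithful}: define \(S : X \relto Y\) by \(x \Rel{S} y\) iff \(\Set{x} \Rel{R} \Set{y}\). The claim is then that \(A \Rel{R} B\) iff \(A \Lower{S} B\) for all \(A \subseteq X\), \(B \subseteq Y\). I will prove the two inclusions separately, using each of the three conditions of directional atomicity in turn.

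For the direction \(R \subseteq \Lower{S}\), suppose \(A \Rel{R} B\) and let \(a \in A\). Since \(\Set{a} \subseteq A\), the bimodule property (with the first inequality) gives \(\Set{a} \Rel{R} B\). Now \(\Set{a}\) is an atom of \(\Powerset*{X}\), so atomic-foundedness yields an atom \(b' \subseteq B\), i.e.\ \(b' = \Set{b}\) for some \(b \in B\), with \(\Set{a} \Rel{R} \Set{b}\). By definition this is \(a \Rel{S} b\), so \(A \Lower{S} B\).

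For the converse \(\Lower{S} \subseteq R\), suppose \(A \Lower{S} B\). For each \(a \in A\) choose \(b_a \in B\) with \(\Set{a} \Rel{R} \Set{b_a}\). Since \(\Set{b_a} \subseteq B\), the bimodule property (now with the second inequality) gives \(\Set{a} \Rel{R} B\) for every \(a \in A\). Left-disjunctivity then gives \(\bigl(\bigcup_{a \in A} \Set{a}\bigr) \Rel{R} B\), i.e.\ \(A \Rel{R} B\). The case \(A = \emptyset\) is covered as the empty join: left-disjunctivity with \(I = \emptyset\) gives \(\emptyset \Rel{R} B\) for every \(B\), which agrees with \(\emptyset \Lower{S} B\) holding vacuously.

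The argument is essentially routine. The only step needing care is the first direction, where atomic-foundedness is what guarantees that \(R\) is determined by its values on pairs of atoms. Otherwise one could have \(\Set{a} \Rel{R} B\) without any single witness in \(B\). I expect no genuine obstacle beyond this and the empty-set edge case. Combined with \Cref{lem:lower-faithful}, the construction \(S\) is unique.
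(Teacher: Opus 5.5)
Your proposal is correct and follows the paper's proof essentially verbatim. You use the same relation ($x \mathrel{S} y$ iff $\{x\} \mathrel{R} \{y\}$), bimodularity plus atomic-foundedness for one inclusion, and bimodularity plus left-disjunctivity for the converse; your explicit treatment of $A = \emptyset$ as the empty join is a correct detail the paper leaves implicit in writing $A = \bigcup_{a \in A}\{a\}$.
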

\begin{proof}
  We will prove that $R = \Lower{S_R}$ where $S_R : X \relto Y$ is given by \(x
  \Rel{S_R} y \defequiv \{ x \} \Rel{R} \{ y \}\).

  Suppose that \(A \Rel{R} B\). Then for any \(a \in A\) we must have that
  \(\Set{a} \Rel{R} B\), as \(R\) is a bimodule. But then we must also have that
  there exists some \(b \in B\) such that \(\Set{a} \Rel{R} \Set{b}\), because
  \(R\) is atomic-founded, i.e.\ $a \Rel{S_R} b$. Therefore \(A \Lower{S_R} B\).

  Suppose now that \(A \Lower{S_R} B\), which is to say that for all $a \in A$ there
  exists a $b \in B$ such that $a \Rel{S_R} b$, which is to say $\{a\} \Rel{R}
  \{b\}$. Because \(R\) is a bimodule this means that for all \(a \in A\),
  \(\Set{a} \Rel{R} B\). But $A = \bigcup_{a \in A} \{ a \}$, so \(A \Rel{R}
  B\) as \(R\) is left-disjunctive.
\end{proof}

One can identify a pseudo-inverse for \(\Lower\), viz.\ the functor \(\At :
\CABAREL \fto \REL\) which restricts every relation \(R : \mathcal{L} \relto
\mathcal{L}'\) to atoms:
\begin{align*}
  \At{\mathcal{L}} &\defeq \SetComp{x \in \mathcal{L}}{x \text{ is an atom of }
                     \mathcal{L}},
  &
    \Rel{\At{R}} &\defeq \Rel{R}|_{\At{\mathcal{L}} \times \At{\mathcal{L}'}}.
\end{align*}
This is a functor: when \(a\) and \(c\) are atoms, any witness for
\(a \Relcomp{R \then S} c\) can be replaced by an atomic witness using
atomic-foundedness of \(R\) and bimodularity of \(S\).

\begin{lemma}
  \label{lemma:lower-dense}
  \(\Lower : \REL \fto \CABAREL\) is essentially surjective.
\end{lemma}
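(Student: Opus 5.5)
Given a CABA \(\mathcal{B}\), we need a set \(X\) with \(\Powerset*{X}\) isomorphic to \(\mathcal{B}\) in \(\CABAREL\). The natural candidate is \(X \defeq \At{\mathcal{B}}\). By Tarski duality (the remark following it), there is an isomorphism of CABAs \(\eta : \mathcal{B} \to \Powerset*{\At{\mathcal{B}}}\) given by \(\eta(x) = \SetComp{a \in \At{\mathcal{B}}}{a \sqsubseteq x}\), an order isomorphism whose inverse is \(S \mapsto \bigsqcup S\). The plan is to turn this order isomorphism into an isomorphism in \(\CABAREL\) by taking the relations
\[
  x \Rel{E} S \defequiv \eta(x) \subseteq S,
  \qquad
  S \Rel{F} x \defequiv S \subseteq \eta(x),
\]
i.e.\ the ``graphs composed with the order'' of \(\eta\) and \(\eta^{-1}\).

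First I verify that \(E : \mathcal{B} \relto \Powerset*{\At{\mathcal{B}}}\) and \(F\) are directionally atomic. Bimodularity follows from monotonicity of \(\eta\). Left-disjunctivity follows because \(\eta\) preserves joins, so \(\eta(\bigsqcup_i x_i) = \bigcup_i \eta(x_i) \subseteq S\) whenever each \(\eta(x_i) \subseteq S\). Atomic-foundedness uses the fact that an order isomorphism maps atoms to atoms. If \(a\) is an atom and \(\eta(a) \subseteq S\), then \(\eta(a) = \{a\}\), so \(\{a\}\) is an atom below \(S\) with \(a \Rel{E} \{a\}\). The case of \(F\) is symmetric, with \(\eta^{-1}\) in place of \(\eta\).

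Next I check that the two composites are the identities, namely the relations \(\sqsubseteq\). For \(E \then F\), we have \(x \Rel*{E \then F} y\) iff there exists \(S\) with \(\eta(x) \subseteq S \subseteq \eta(y)\). This holds iff \(\eta(x) \subseteq \eta(y)\), taking \(S = \eta(x)\) for one direction and using transitivity for the other. Since \(\eta\) is an order embedding, this is iff \(x \sqsubseteq y\). The composite \(F \then E\) is handled symmetrically using \(\eta^{-1}\). Hence \(\mathcal{B} \cong \Powerset*{\At{\mathcal{B}}} = \Lower(\At{\mathcal{B}})\) in \(\CABAREL\), as required.

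The only real obstacle is ensuring that the Tarski isomorphism \(\eta\) is available, namely that \(\eta\) is bijective and its inverse is \(S \mapsto \bigsqcup S\). This is the standard content of Tarski duality. Injectivity comes from atomicity, since \(x = \bigsqcup \eta(x)\). Surjectivity requires that for a set \(S\) of atoms, the atoms below \(\bigsqcup S\) are exactly those in \(S\). I would prove this using the fact that atoms are join-prime in a CABA, as recalled before \Cref{lemma:left-adj-atoms}. If \(a \sqsubseteq \bigsqcup S\), then by distributivity \(a = \bigsqcup_{s \in S} (a \sqcap s)\). Each \(a \sqcap s\) is either \(\bot\) or equal to both \(a\) and \(s\), since both are atoms. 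Not all of them can be \(\bot\), because \(a \neq \bot\), so \(a = s\) for some \(s \in S\).
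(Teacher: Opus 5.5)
Your proof is correct and essentially the same as the paper's. Since $\eta$ is an order isomorphism with inverse $S \mapsto \bigsqcup S$, you have $\eta(x) \subseteq S \iff x \sqsubseteq \bigsqcup S$ and $S \subseteq \eta(x) \iff \bigsqcup S \sqsubseteq x$, so your $E$ and $F$ are exactly the paper's $R_{\mathcal{L}}$ and $R^{-1}_{\mathcal{L}}$; the paper checks them directly via complete join-primality of atoms instead of first packaging that fact into $\eta$. One small gap: your final step uses the infinite distributive law, which does not follow from finite distributivity alone, so add the justification that in a complete Boolean algebra $a \sqcap -$ is left adjoint to $\lnot a \sqcup -$ and hence preserves arbitrary joins.
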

\begin{proof}
  Given a CABA $\mathcal{L}$, define relations
  \(R_\mathcal{L} : \mathcal{L} \relto \Powerset*{\At{\mathcal{L}}}\) and
  \(R^{-1}_\mathcal{L} : \Powerset*{\At{\mathcal{L}}} \relto \mathcal{L}\) by
  \(x \Rel{R_\mathcal{L}} X\) just if \(x \sqsubseteq \bigsqcup X\), and
  \(X \Rel{R^{-1}_\mathcal{L}} x\) just if \(\bigsqcup X \sqsubseteq x\).
  These relations are bimodules and left-disjunctive by monotonicity of joins.
  Atomic-foundedness of \(R_\mathcal{L}\) uses the fact that atoms in a CABA are
  completely join-prime: if \(a \sqsubseteq \bigsqcup X\), then
  \(a \sqsubseteq a'\) for some \(a' \in X\), hence \(a = a' \in X\), so
  \(\{a\} \sqsubseteq X\) witnesses the property. Atomic-foundedness of
  \(R^{-1}_\mathcal{L}\) follows by choosing the same atom.

  Their composites are the order identities. On \(\mathcal{L}\), this says
  \(x \sqsubseteq y\) iff \(x \sqsubseteq \bigsqcup X \sqsubseteq y\) for some
  set \(X\) of atoms; the backward implication is immediate, and the forward
  one follows by taking the atoms below \(y\). On
  \(\Powerset*{\At{\mathcal{L}}}\), it says \(X \subseteq Y\) iff
  \(\bigsqcup X \sqsubseteq \bigsqcup Y\), again by complete join-primality of
  atoms.
\end{proof}

\begin{theorem}
  \label{theorem:cabarel-preduality}
  \(\REL \Equiv \CABAREL\).
\end{theorem}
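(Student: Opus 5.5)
The equivalence \(\REL \Equiv \CABAREL\) follows from the standard characterisation of equivalences of categories: a functor is part of an equivalence if and only if it is full, faithful, and essentially surjective. This criterion requires the axiom of choice to construct a pseudo-inverse in general, although here we also have the explicit candidate \(\At\).

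The plan is to take the functor \(\Lower : \REL \fto \CABAREL\) of \Cref{lemma:lower-functor} and verify the three conditions using results already established.

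\begin{itemize}
\item Faithfulness is \Cref{lem:lower-faithful}.
\item Fullness is \Cref{lem:cabarel-then-lower}. Note that it is stated for hom-sets between objects of the form \(\Powerset*{X}\) and \(\Powerset*{Y}\), which is exactly what fullness of \(\Lower\) requires, since those are the images of objects under \(\Lower\).
\item Essential surjectivity is \Cref{lemma:lower-dense}. It shows that every CABA \(\mathcal{L}\) is isomorphic in \(\CABAREL\) to \(\Powerset*{\At{\mathcal{L}}} = \Lower(\At{\mathcal{L}})\), via the mutually inverse directionally atomic relations \(R_\mathcal{L}\) and \(R^{-1}_\mathcal{L}\).
\end{itemize}

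Combining these three facts yields the equivalence.

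To avoid choice and make the pseudo-inverse concrete, one could instead exhibit \(\At : \CABAREL \fto \REL\) directly and construct natural isomorphisms \(\At \circ \Lower \cong \Id_{\REL}\) and \(\Lower \circ \At \cong \Id_{\CABAREL}\).

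For the first, \(\At{\Powerset*{X}}\) consists of the singletons, and \(\{x\} \Lower{R} \{y\}\) holds iff \(x \Rel{R} y\). The bijection \(x \mapsto \{x\}\), viewed as a graph relation, is therefore a natural isomorphism.

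For the second, the components are the relations \(R_\mathcal{L}\) from \Cref{lemma:lower-dense}. Naturality amounts to checking that, for a directionally atomic \(S : \mathcal{L} \relto \mathcal{L}'\),
\[
R_\mathcal{L} \then \Lower{(\At{S})} = S \then R_{\mathcal{L}'}.
\]
We prove the two inclusions as follows.
\begin{itemize}
\item Suppose \(x \Rel{S} y\) with \(y \sqsubseteq \bigsqcup Y\). For each atom \(a \sqsubseteq x\), bimodularity gives \(a \Rel{S} y\). Atomic-foundedness then gives an atom \(b \sqsubseteq y\) with \(a \Rel{S} b\), and \(b\) lies in \(Y\) by complete join-primality of atoms. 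Taking \(X\) to be the atoms below \(x\) gives \(X \Lower{(\At{S})} Y\).
\item Conversely, suppose \(x \sqsubseteq \bigsqcup X\) and \(X \Lower{(\At{S})} Y\). Each \(a \in X\) satisfies \(a \Rel{S} b_a\) for some \(b_a \in Y\), so \(a \Rel{S} \bigsqcup Y\) by bimodularity. Left-disjunctivity gives \(\bigsqcup X \Rel{S} \bigsqcup Y\), and bimodularity gives \(x \Rel{S} \bigsqcup Y\). Finally \(\bigsqcup Y \Rel{R_{\mathcal{L}'}} Y\).
\end{itemize}

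The main obstacle, if this explicit route is taken, is this naturality square, in particular the forward inclusion. It is the only place where all three defining properties of directionally atomic relations and the join-primality of atoms interact. With the full/faithful/essentially-surjective criterion, the proof reduces to citing the preceding lemmas.
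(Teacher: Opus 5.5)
Your proof is correct and follows the paper's route: the theorem is obtained by combining the preceding lemmas showing that \(\Lower\) is a functor (\Cref{lemma:lower-functor}), faithful (\Cref{lem:lower-faithful}), full (\Cref{lem:cabarel-then-lower}) and essentially surjective (\Cref{lemma:lower-dense}). Your optional explicit route via \(\At\) is also correct, including the naturality check \(R_\mathcal{L} \then \Lower{(\At{S})} = S \then R_{\mathcal{L}'}\) in both directions, and it supplies a verification that the paper only asserts when it names \(\At\) as a pseudo-inverse.
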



Of course, \Cref{theorem:cabarel-preduality} is not explicitly a duality; to
make it one we have to compose it with the formal duality \(\Rev{(-)} :
\Op{\REL} \Equiv \REL\) in the style of Kishida \cite[\S 2.3]{kishida_2018},
obtaining
\begin{theorem}[Relational Tarski duality]
  \label{theorem:rel-tarski}
  \(\Op{\REL} \Equiv \CABAREL\).
\end{theorem}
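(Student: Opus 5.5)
The plan is to compose two equivalences that are already available. The first is the formal duality \(\Rev{(-)} : \Op{\REL} \to \REL\) from \cref{section:preliminaries}. This functor is an isomorphism of categories: it is the identity on objects, and it is bijective on hom-sets because \(\Rev{(\Rev{R})} = R\). It also respects composition, since \(\Rev{(R \then S)} = \Rev{S} \then \Rev{R}\), which is exactly composition in \(\Op{\REL}\). The second is the equivalence \(\Lower : \REL \Equiv \CABAREL\) of \Cref{theorem:cabarel-preduality}. The composite \(\Lower \circ \Rev{(-)} : \Op{\REL} \to \CABAREL\) is then an equivalence, because equivalences of categories are closed under composition and every isomorphism is an equivalence.

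To make the argument self-contained, I would state explicitly that \Cref{theorem:cabarel-preduality} is obtained from \Cref{lemma:lower-functor}, \Cref{lem:lower-faithful}, \Cref{lem:cabarel-then-lower} and \Cref{lemma:lower-dense}. These show that \(\Lower\) is a functor that is full, faithful and essentially surjective, so it is an equivalence; this is the standard characterisation, which uses choice. I would then check that the composite keeps these three properties. Faithfulness and fullness hold because \(\Rev{(-)}\) is bijective on hom-sets. Essential surjectivity holds because \(\Rev{(-)}\) is the identity on objects. Concretely, the composite sends a set \(X\) to \(\Powerset*{X}\) and a relation \(R : Y \relto X\), regarded as a morphism \(X \to Y\) of \(\Op{\REL}\), to \(\Lower{(\Rev{R})} : \Powerset*{X} \relto \Powerset*{Y}\).

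Optionally, I would also name a pseudo-inverse: the composite \(\Rev{(-)} \circ \At\), with the inverse dagger applied after restricting to atoms. The natural isomorphisms come from \(R_{\mathcal{L}}\) and \(R^{-1}_{\mathcal{L}}\) in the proof of \Cref{lemma:lower-dense}.

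None of these steps is a real obstacle. The only point that needs care is the variance bookkeeping, namely that \(\Rev{(-)}\) genuinely defines a covariant functor out of \(\Op{\REL}\). I would handle this by checking the composition law above directly.
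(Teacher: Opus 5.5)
Your proposal is correct and matches the paper's argument. The paper obtains \Cref{theorem:rel-tarski} by composing the equivalence \(\Lower : \REL \Equiv \CABAREL\) of \Cref{theorem:cabarel-preduality} (itself assembled from the functor, faithfulness, fullness and essential-surjectivity lemmas) with the formal duality \(\Rev{(-)} : \Op{\REL} \cong \REL\); your check that the dagger is a covariant isomorphism out of \(\Op{\REL}\) spells out a step the paper leaves implicit.
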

Intuitively, the fact that \(\REL\) is a dagger category means that its
morphisms have \emph{no inherent direction}. Directionality appears when we
restrict it to functions, making this equivalence a `proper' duality \cite[\S
2.3]{kishida_2018} \cite[Remark 4.1]{kurz_2023}. Indeed, it is possible to show
that this duality is an extension of the usual Tarski duality, in the sense that
there is a commutative diagram 
\begin{equation}
  \label{diagram:reltarski-extends-tarski}
  \begin{tikzcd}
    \Op{\SET} \ar[r, hook, "\Op{\Graph}"] \ar[d, "\Powerset" left, "\Equiv" right]
    & \Op{\REL} \ar[d, "\Equiv" left, "\Lower \circ (-)^\dagger" right] \\
    \CABA \ar[r, hook, swap, "j"] & \CABAREL
  \end{tikzcd}
\end{equation}
where $\Graph$ and $j$ are faithful and injective-on-objects.

We define $j$ by taking each CABA \(\mathcal{B}\) to itself, and each
complete Boolean homomorphism
\(\Pre{f} : \mathcal{B} \to \mathcal{B}'\) to
\(j(\Pre{f}) : \mathcal{B} \relto \mathcal{B}'\) given by
\(b \Rel{j(\Pre{f})} b'\) just if \(b \sqsubseteq \LKan{f}(b')\),
where $\LKan{f} \Adjoint \Pre{f}$.

\begin{lemma}
  \label{lemma:j-faithful}
  \(j : \CABA \fto \CABAREL\) is faithful and injective-on-objects.
\end{lemma}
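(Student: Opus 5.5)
Injectivity on objects is immediate, since $j$ sends each CABA $\mathcal{B}$ to itself. The substance is faithfulness. Before that, I would briefly check that $j$ is well defined, i.e.\ that $j(\Pre{f})$ is directionally atomic and that $j$ preserves identities and composition. The statement only asks for faithfulness and injectivity, so this check can stay light.

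\textbf{Well-definedness.}
\begin{itemize}
  \item Bimodularity follows from monotonicity of $\LKan{f}$ and transitivity of $\sqsubseteq$.
  \item Left-disjunctivity holds because $b_i \sqsubseteq \LKan{f}(b')$ for all $i$ gives $\bigsqcup_i b_i \sqsubseteq \LKan{f}(b')$.
  \item For atomic-foundedness, take an atom $a \sqsubseteq \LKan{f}(b')$. Write $b' = \bigsqcup \SetComp{c \in \At{\mathcal{B}'}}{c \sqsubseteq b'}$. Since $\LKan{f}$ is a left adjoint it preserves joins, so $a \sqsubseteq \bigsqcup_c \LKan{f}(c)$. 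Because atoms of a CABA are completely join-prime (as in the proof of \Cref{lemma:lower-dense}), $a \sqsubseteq \LKan{f}(c)$ for some atom $c \sqsubseteq b'$.
  \item Identities go to $\sqsubseteq$, since the left adjoint of the identity is the identity.
  \item Composites go to composites. One direction is transitivity. For the other, given $b \sqsubseteq \LKan{f}(\LKan{g}(b''))$, use the witness $\LKan{g}(b'')$ in the middle.
\end{itemize}

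\textbf{Faithfulness.} Suppose $j(\Pre{f}) = j(\Pre{g})$ for $\Pre{f},\Pre{g} : \mathcal{B} \to \mathcal{B}'$. Then for every $b'$ we have $b \sqsubseteq \LKan{f}(b')$ iff $b \sqsubseteq \LKan{g}(b')$ for all $b$. Taking $b = \LKan{f}(b')$ and then $b = \LKan{g}(b')$ gives $\LKan{f}(b') = \LKan{g}(b')$, so $\LKan{f} = \LKan{g}$. Right adjoints are determined uniquely by their left adjoints: $\Pre{f}(x) = \bigsqcup\SetComp{y}{\LKan{f}(y) \sqsubseteq x}$. Hence $\Pre{f} = \Pre{g}$.

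The only step that needs any care is atomic-foundedness, which rests on complete join-primality of atoms; everything else is routine adjunction bookkeeping.
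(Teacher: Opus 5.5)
Your proof is correct and follows essentially the same route as the paper. Both check that \(j(\Pre{f})\) is left-disjunctive, atomic-founded (via join-preservation of \(\LKan{f}\) and complete join-primality of atoms) and a bimodule, and both check functoriality using the middle witness \(\LKan{g}(b'')\); note that your ``transitivity'' direction also needs monotonicity of \(\LKan{f}\). Faithfulness is likewise proved by recovering \(\LKan{f}\) from \(j(\Pre{f})\) and then \(\Pre{f}\) as its unique right adjoint: your explicit computation \(\LKan{f}(b') = \LKan{g}(b')\) and the formula \(\Pre{f}(x) = \bigsqcup\{y \mid \LKan{f}(y) \sqsubseteq x\}\) spell out what the paper calls a ``Yoneda-type argument.''
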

\begin{proof}
  To begin we show that \(j(\Pre{f})\) is a morphism of \(\CABAREL\). First,
  \(j(\Pre{f})\) is left-disjunctive: suppose that for each \(i \in I\) we have
  that \(x_i \Rel{j(\Pre{f})} y\), i.e.\ \(x_i \sqsubseteq \LKan{f}(y)\). Then
  \(\bigsqcup_{i \in I} x_i \sqsubseteq \LKan{f}(y)\), and hence \(\bigsqcup_{i
  \in I} x_i \Rel{j(\Pre{f})} y\).
  Second, \(j(\Pre{f})\) is atomic-founded. Suppose \(x \Rel{j(\Pre{f})} y\),
  i.e.\ \(x \sqsubseteq \LKan{f}(y)\), with \(x\) an atom. Write $y =
  \bigsqcup_{i \in I} a_i$ for $a_i$ atoms. Then $\LKan{f}(y) = \bigsqcup_{i \in
  I} \LKan{f}(a_i)$, because $\LKan{f}$ is a left adjoint and preserves joins.
  As \(x\) is an atom, $x \sqsubseteq \LKan{f}(a_i)$ for a particular $i \in I$, so $x
  \Rel{j(\Pre{f})} a_i$ for some $a_i \sqsubseteq y$.
  Third, $j(\Pre{f})$ is a bimodule, by monotonicity of $\LKan{f}$.

  $j$ preserves identities: we have \(x \Rel{j(\id)} y\) iff \(x \sqsubseteq
  \LKan{\id}(y) = y\), viz.\ the identity in \(\CABAREL\). $j$ preserves
  composition: let \(\Pre{f} : X \to Y\) and \(\Pre{g} : Y \to Z\) be complete
  Boolean homomorphisms. Then
  \(x \Rel{j(\Pre{g} \circ \Pre{f})} z\)
  iff
  \(x \sqsubseteq \LKan{f}(\LKan{g}(z))\)
  iff
  \(x \Rel{j(\Pre{f})} \LKan{g}(z)\).
  But by reflexivity
  $\LKan{g}(z) \sqsubseteq \LKan{g}(z)$,
  i.e.
  \(\LKan{g}(z) \Rel{j(\Pre{g})} z\),
  so
  \(x \Rel*{j(\Pre{f}) \then j(\Pre{g})} z\).
  For the converse, suppose
  \(x \Rel{j(\Pre{f})} y \Rel{j(\Pre{g})} z\),
  i.e.
  \(x \sqsubseteq \LKan{f}(y)\)
  and
  \(y \sqsubseteq \LKan{g}(z)\).
  By the monotonicity of \(\LKan{f}\) we have
  \(x \sqsubseteq \LKan{f}(y) \sqsubseteq \LKan{f}(\LKan{g}(z))\),
  therefore
  \(x \Rel{j({\Pre{g} \circ \Pre{f}})} z\).

  Finally, \(j\) is faithful because \(j(\Pre{f})\) determines \(\LKan{f}\) by a
  Yoneda-type argument, and hence determines \(\Pre{f}\) too. It is
  injective-on-objects because it is the identity on objects.
\end{proof}

We define $\Graph$ by taking each set $X$ to itself, and each function
\(f : X \to Y\) to its \emph{graph} \(\Graph{f} : X \relto Y\), viz.
\(x \Graph{f} y \defequiv (f(x) = y)\).

\begin{lemma}
  \(\Graph : \SET \fto \REL\) is faithful and injective-on-objects.
\end{lemma}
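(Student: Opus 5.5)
The plan is to verify the three claims in turn: that $\Graph$ is a functor, that it is faithful, and that it is injective on objects. Each reduces to unfolding the definition $x \Graph{f} y \defequiv (f(x) = y)$.

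First, functoriality. For the identity function $\id_X : X \to X$, we have $x \Graph{\id_X} y$ iff $x = y$, which is exactly the identity relation $\Id_X$ of $\REL$. For composition, take $f : X \to Y$ and $g : Y \to Z$ and compare $\Graph{g \circ f}$ with $\Graph{f} \then \Graph{g}$ (diagrammatic order). In the forward direction, if $g(f(x)) = z$ then $y \defeq f(x)$ witnesses $x \Graph{f} y \Graph{g} z$. Conversely, if $x \Graph{f} y \Graph{g} z$ then $f(x) = y$ and $g(y) = z$, so $g(f(x)) = z$.

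Second, faithfulness. Suppose $f, g : X \to Y$ satisfy $\Graph{f} = \Graph{g}$. Fix $x \in X$. Since $x \Graph{f} f(x)$ holds, so does $x \Graph{g} f(x)$, which means $g(x) = f(x)$. This holds for every $x$, so $f = g$. The key point is that a function is recovered from its graph by reading off the unique related element.

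Third, injectivity on objects is immediate, since $\Graph$ is the identity on objects.

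None of these steps poses a genuine obstacle. The only place needing care is the composition law, where one must match the paper's diagrammatic convention for $\then$: the graph of $g \circ f$ should equal $\Graph{f} \then \Graph{g}$, not the composite in the other order.
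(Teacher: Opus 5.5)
Your proposal is correct and takes the same approach as the paper. The paper simply notes that $\Graph$ clearly preserves identities and composition, is faithful because a function is determined by its graph, and is the identity on objects; you have spelled out these routine checks, using the correct diagrammatic order $\Graph{f} \then \Graph{g}$ for the composite.
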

\begin{proof}
  \(\Graph\) clearly preserves identity and composition. It is faithful because
  functions are entirely determined by their graph.
\end{proof}

Finally, \eqref{diagram:reltarski-extends-tarski} commutes: for
\(f : X \to Y\), both routes send \(f\), seen as a morphism
\(Y \to X\) in \(\Op{\SET}\), to the relation
\(A \subseteq \LKan{f}(B)\) between \(A \in \Powerset*{Y}\) and
\(B \in \Powerset*{X}\).

\subsection{Sketch of a Formal System}

Having developed this duality we can now see how it can be used for relating
formulae. Suppose we have a relation $R : X \relto Y$. Under the relational
Tarski duality this induces a relation
$\Lower{R} : \Powerset*{X} \relto \Powerset*{Y}$ between predicates. We denote
this relation by the judgment
\[
  \Simulates{\varphi}{R}{\psi}
\]
where $\varphi \in \Powerset*{X}$ is a predicate over $X$ and
$\psi \in \Powerset*{Y}$ over $Y$. Intuitively, this judgment says that
\emph{every state of $X$ that satisfies $\varphi$ is $R$-related to some state
of $Y$ that satisfies $\psi$}. The fact $\Lower{R}$ is a bimodule means that the
rule
\begin{mathpar}
  \inferrule{
    \varphi' \vdash \varphi \\
    \Simulates{\varphi}{R}{\psi} \\
    \psi \vdash \psi'
  }{
    \Simulates{\varphi'}{R}{\psi'}
  }
\end{mathpar}
is sound. This is reminiscent of the \emph{consequence rule} of Hoare logic
\cite{hoare_1969}.

The fact $\Lower{R}$ is left-disjunctive means that the rule
\begin{mathpar}
  \inferrule{
    \Simulates{\varphi_1}{R}{\psi} \\
    \Simulates{\varphi_2}{R}{\psi}
  }{
    \Simulates{(\varphi_1 \lor \varphi_2)}{R}{\psi}
  }
\end{mathpar}
is sound. This allows us to reason by cases on the left. The final
characteristic property of $\Lower{R}$, namely its atomic-foundedness, would
require introducing judgments that capture atomicity of predicates, in the
style of Abramsky \cite{abramsky_1991}.

Finally, the fact $\Lower$ is a functor can be formally expressed by the
rules
\begin{mathpar}
  \inferrule{
    \Simulates{\varphi}{R}{\psi} \\
    \Simulates{\psi}{S}{\chi}
  }{
    \Simulates{\varphi}{R \then S}{\chi}
  }
  \and
  \inferrule{
    \varphi \vdash \psi
  }{
    \Simulates{\varphi}{\id}{\psi}
  }
  \and
  \inferrule{
    \Simulates{\varphi}{\id}{\psi}
  }{
    \varphi \vdash \psi
  }
\end{mathpar}

\section{A Modal Relational Duality}\label{section:relational-thomason}

In this section we will extend the relational Tarski duality to (infinitary)
classical modal logic. The basis for this is the \emph{Thomason duality}
\cite{thomason_1975} \cite[\S 2.4]{kishida_2018}, which uniquely associates each
Kripke frame with a CABA equipped with a modal operator (CABAO).

A \emph{Kripke frame} \((X, R)\) (also called a \emph{transition system}) is a
set \(X\) equipped with a relation \(R : X \relto X\). The intuition is that $X$
is a set of \emph{worlds}, and the relation $R$ encodes \emph{transitions} from
one world to another. We will sometimes write $x_1 \Transition{R} x_2$ instead of
$x_1 \Rel{R} x_2$. A morphism of Kripke frames \(f : (X, R) \to (Y, S)\) is a
function \(f : X \to Y\) which preserves transitions, meaning \(x_1
\Transition{R} x_2\) implies \(f(x_1) \Transition{S} f(x_2)\). Kripke frames and
their morphisms form a category \(\FRM\).

The relation $R$ can also be seen as a map $\lambda R : X \to \Powerset*{X}$ by
currying. Then there is a unique join-preserving function
$\blacklozenge_R : \Powerset*{X} \to \Powerset*{X}$ that makes the following
diagram commute, where $\Yo : X \to \Powerset*{X}$ maps every element $x \in X$
to the singleton set $\{ x \}$.\footnote{This is also known as the \emph{left
    Kan extension} of $\lambda R$ along the Yoneda embedding.}
\[
  \begin{tikzcd}[column sep=huge,row sep=large]
    X \ar[r, "\Yo"] \ar[dr,"\lambda R" below left]
    & \Powerset*{X}
    \ar[d,dashed,bend right=40,"\blacklozenge_R" left,""{name=diamond,right}]
    \\
    & \Powerset*{X}
    \ar[u,dotted,bend right=40,"\Box_R" right,""{name=square,left}]
    \ar[phantom, from=diamond, to=square, "\Adjoint"]
  \end{tikzcd}
\]
As $\blacklozenge_R$ preserves joins and $\Powerset*{X}$ is a complete lattice,
the adjoint functor theorem implies that it has a right adjoint $\Box_R :
\Powerset*{X} \to \Powerset*{X}$, which preserves all meets.

These maps are explicitly given by
\begin{align*}
  \blacklozenge_R(A) &= \SetComp{w \in X}{\exists v \in A.\ v \Rel{R} w}, 
                     &
  \Box_R(A) &= \SetComp{w \in X}{\forall v \in X.\ w \Rel{R} v \implies v \in A}.
\end{align*}
Having one of these three pieces of data (the relation $R$; a
join-preserving map $\blacklozenge_R$; or a meet-preserving map
$\Box_R$) uniquely determines the other two \cite{kavvos_2024a}.

It is possible to define morphisms of frames purely in terms of operators:
\begin{lemma}
  \label{lemma:frm-morphism-box}
  A function $f : X \to Y$ is a morphism of frames \(f : (X, R) \to (Y, S)\) if
  and only if \(f^* \circ \Box_S \subseteq \Box_R \circ f^*\).
\end{lemma}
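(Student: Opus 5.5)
We unfold both sides pointwise and prove the two implications directly. For $B \subseteq Y$ we have $f^*(\Box_S(B)) = \SetComp{x \in X}{\forall y \in Y.\ f(x) \Rel{S} y \implies y \in B}$. We also have $\Box_R(f^*(B)) = \SetComp{x \in X}{\forall x' \in X.\ x \Rel{R} x' \implies f(x') \in B}$. The inclusion in the statement is therefore the assertion that, for every $B$ and every $x$, if all $S$-successors of $f(x)$ lie in $B$, then the $f$-images of all $R$-successors of $x$ lie in $B$.

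For the forward direction, assume $f$ preserves transitions. Take $x \in f^*(\Box_S(B))$ and $x \Rel{R} x'$. Then $f(x) \Rel{S} f(x')$, so $f(x') \in B$ by the hypothesis on $x$. Hence $x \in \Box_R(f^*(B))$.

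For the converse, assume the inclusion holds for all $B$, and suppose $x_1 \Rel{R} x_2$. The key choice is the test set $B \defeq \SetComp{y \in Y}{f(x_1) \Rel{S} y}$, the set of $S$-successors of $f(x_1)$. This $B$ is exactly what makes $x_1 \in f^*(\Box_S(B))$ hold trivially. The inclusion then gives $x_1 \in \Box_R(f^*(B))$. Since $x_1 \Rel{R} x_2$, we get $f(x_2) \in B$, i.e.\ $f(x_1) \Rel{S} f(x_2)$.

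The only real step is picking this $B$. Once the boxes are written out pointwise, everything else is a routine unfolding of the formulas given earlier for $\Box_R$ and $f^*$.
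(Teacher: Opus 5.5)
Your proof is correct and complete: both directions are exactly the pointwise unfolding of $f^*$ and of the paper's explicit formula for $\Box$. The paper states this lemma without proof, so there is no argument of its own to compare with.

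Your test set $B = \blacklozenge_S\{f(x_1)\}$ is the same unit trick, $\{y\} \subseteq \Box\blacklozenge\{y\}$, that the paper uses in the (iii)$\Rightarrow$(i) step of \Cref{lemma:sim-black}. So your argument fits the paper's style.

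A more algebraic alternative also works. By the mates correspondence for $f_! \dashv f^*$ and $\blacklozenge \dashv \Box$, the inclusion is equivalent to $f_! \circ \blacklozenge_R \subseteq \blacklozenge_S \circ f_!$. Both sides of this preserve joins, so it suffices to check it on singletons. There it says precisely that $f$ sends every $R$-successor of $x$ to an $S$-successor of $f(x)$. Your direct route is shorter and entirely adequate here.
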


\begin{notation}
  In a CABAO \(\mathcal{B}\), we will usually write the operator(s) as
  \(\blacklozenge_{\mathcal{B}} \dashv \Box_{\mathcal{B}}\), unless we
  explicitly want to draw attention to the relation \(R\) (as above),
  or it is clear from context which CABA and transition relation are
  relevant.
\end{notation}

Let $\CABAOW$ be the category whose objects
$(\mathcal{B}, \Box_\mathcal{B})$ are CABAs $\mathcal{B}$ equipped
with an \emph{operator}
$\Box_{\mathcal{B}} : \mathcal{B} \to \mathcal{B}$ that preserves all
meets,\footnote{Or, equivalently, a join-preserving operator
  $\blacklozenge_{\mathcal{B}} : \mathcal{B} \to \mathcal{B}$.} and
morphisms
$\Pre{f} : (\mathcal{B}, \Box_\mathcal{B}) \to (\mathcal{B}',
\Box_{\mathcal{B}'})$ are complete Boolean homomorphisms
$\Pre{f} : \mathcal{B} \to \mathcal{B}'$ that satisfy
$\Pre{f} \circ \Box_\mathcal{B} \sqsubseteq \Box_{\mathcal{B}'} \circ
\Pre{f}$ (with the pointwise order). This yields a modal duality

\begin{theorem}[Weak Thomason duality]
  \label{theorem:thomason}
  \(\Op\FRM \Equiv \CABAOW\).
\end{theorem}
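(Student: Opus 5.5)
The plan is to restrict Tarski duality $\Op{\SET} \Equiv \CABA$ along the forgetful functors $\FRM \to \SET$ and $\CABAOW \to \CABA$. The restriction works provided that (i) the functor $\Powerset$ sends a frame $(X,R)$ to the CABAO $(\Powerset*{X}, \Box_R)$ and a frame morphism to a $\CABAOW$-morphism, and (ii) conversely, every object and morphism of $\CABAOW$ comes, up to isomorphism, from a frame.

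\emph{Functor $\Op\FRM \to \CABAOW$.} On objects, send $(X,R)$ to $(\Powerset*{X},\Box_R)$; here $\Box_R$ preserves all meets because it is a right adjoint. On morphisms, send $f : (X,R)\to(Y,S)$ to $\Pre{f} : \Powerset*{Y} \to \Powerset*{X}$. This is already a complete Boolean homomorphism by Tarski duality. The modal condition $\Pre{f}\circ\Box_S \subseteq \Box_R\circ\Pre{f}$ is exactly \Cref{lemma:frm-morphism-box}. Functoriality is inherited from $\Powerset$.

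\emph{Faithfulness.} Faithfulness is inherited from Tarski duality, since the forgetful functors are faithful.

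\emph{Fullness.} A $\CABAOW$-morphism $h : (\Powerset*{Y},\Box_S)\to(\Powerset*{X},\Box_R)$ is in particular a CABA morphism, so $h=\Pre{f}$ for a unique function $f : X\to Y$ by Tarski duality. The modal inequality then says, via \Cref{lemma:frm-morphism-box} again, that $f$ is a frame morphism.

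\emph{Essential surjectivity.} Given $(\mathcal{B},\Box_\mathcal{B})$, let $\blacklozenge_\mathcal{B}$ be its left adjoint, which exists by the adjoint functor theorem. Put $X=\At{\mathcal{B}}$ and define $a \Rel{R} a'$ iff $a'\sqsubseteq\blacklozenge_\mathcal{B}(a)$. Let $\theta:\mathcal{B}\cong\Powerset*{X}$, $b\mapsto\SetComp{a}{a\sqsubseteq b}$, be the Tarski isomorphism. I would check that $\theta\circ\blacklozenge_\mathcal{B}=\blacklozenge_R\circ\theta$. On atoms this holds by the definition of $R$. Both sides preserve joins, and every element of $\mathcal{B}$ is a join of atoms, so the identity extends to all of $\mathcal{B}$. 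Since adjoints are unique, it follows that $\theta\circ\Box_\mathcal{B}=\Box_R\circ\theta$. Hence $\theta$ and $\theta^{-1}$ both satisfy the required inequality, with equality in fact, so $\theta$ is an isomorphism in $\CABAOW$.

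The main obstacle I expect is the direction of the modal inequality in the fullness step. If the inequality is reversed, the modal condition corresponds not to transition preservation but to a back-condition. I would therefore carefully verify \Cref{lemma:frm-morphism-box} in the form used. Its forward direction is direct: if $y \in \Box_S B$ where $y=f(x)$ and $x \Rel{R} x'$, then $f(x) \Rel{S} f(x')$, so $f(x')\in B$. Its backward direction follows by testing the inclusion on $B=\SetComp{y}{f(x)\Rel{S} y}$. With this verified, the three steps above give that $\Powerset$ is a fully faithful, essentially surjective functor $\Op\FRM\to\CABAOW$, hence an equivalence.
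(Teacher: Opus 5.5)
Your proof is correct and follows the route the paper implicitly intends. The paper gives no separate argument, but it places exactly your ingredients just before the statement. These are Tarski duality; \Cref{lemma:frm-morphism-box}, which matches frame morphisms with the weak modal inequality, in the direction you verified; and the remark that a relation, a join-preserving $\blacklozenge$, and a meet-preserving $\Box$ determine one another. That last fact is what your atom-level reconstruction of $R$ from $\blacklozenge_{\mathcal{B}}$ makes precise for essential surjectivity.
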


The maps of CABAOs preserve the Boolean structure, but they only preserve
$\Box$ weakly. To strengthen this we need the notion of an \emph{open
map}.

\begin{definition}
  A morphism of frames \(f : (X, R) \to (Y, S)\) is \emph{open} iff
  $f(x) \Transition{S} y'$ implies that there exists an $x' \in X$ with
  $x \Transition{R} x'$ and $f(x') = y'$.
\end{definition}

We let $\FRMOPEN$ be the wide subcategory of $\FRM$ whose morphisms
are open.  It is easy to show that such an $f$ is open iff
\(f^* \circ \Box = \Box \circ f^*\). Then, letting $\CABAO$ be the
wide subcategory of $\CABAOW$ whose morphisms $\Pre{f}$ preserve
$\Box$ (i.e.\ $\Pre{f} \circ \Box = \Box \circ \Pre{f}$) we obtain a
refinement of \Cref{theorem:thomason}:

\begin{theorem}[Thomason duality]
  \label{theorem:open-thomason}
  $\Op\FRMOPEN \Equiv \CABAO$.
\end{theorem}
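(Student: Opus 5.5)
The plan is to restrict the weak Thomason duality (\Cref{theorem:thomason}) to wide subcategories on both sides. Let $\Powerset : \Op\FRM \to \CABAOW$ be the functor sending $(X,R)$ to $(\Powerset*{X}, \Box_R)$ and $f$ to $\Pre{f}$, with pseudo-inverse $\At$. Since $\FRMOPEN \subseteq \FRM$ and $\CABAO \subseteq \CABAOW$ are wide, an equivalence restricts to an equivalence between them once two things hold. First, $\Powerset$ sends open maps into $\CABAO$. Second, every $\CABAO$-morphism $\mathcal{B} \to \mathcal{B}'$ is of the form $\Pre{f}$ for an open $f$, up to the natural isomorphisms $\mathcal{B} \cong \Powerset*{\At{\mathcal{B}}}$. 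These isomorphisms are themselves isomorphisms of CABAOs, so they lie in the wide subcategory $\CABAO$. Faithfulness and essential surjectivity are inherited; fullness is exactly the second point.

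For the first point I will verify the claimed characterisation: $f$ is open iff $\Pre{f} \circ \Box_S = \Box_R \circ \Pre{f}$. By \Cref{lemma:frm-morphism-box}, the inclusion $\Pre{f}\circ\Box_S \subseteq \Box_R\circ\Pre{f}$ is equivalent to $f$ being a frame morphism. It therefore remains to show that the reverse inclusion $\Box_R\circ\Pre{f} \subseteq \Pre{f}\circ\Box_S$ is equivalent to the back condition. I will argue the contrapositive in each direction.

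\emph{Openness gives the inclusion.} Assume $f$ is open and suppose $x \notin \Pre{f}(\Box_S B)$. Then there is $y'$ with $f(x)\Transition{S} y'$ and $y' \notin B$. Openness yields $x'$ with $x \Transition{R} x'$ and $f(x') = y'$. Hence $x' \notin \Pre{f}(B)$, so $x \notin \Box_R \Pre{f}(B)$.

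\emph{The inclusion gives openness.} Assume the inclusion and fix $f(x)\Transition{S} y'$. Take $B = Y \setminus \{y'\}$. Then $x \notin \Pre{f}(\Box_S B)$, so $x \notin \Box_R\Pre{f}(B)$. Thus some $x'$ with $x\Transition{R}x'$ has $f(x') \notin B$, i.e.\ $f(x') = y'$.

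For the second point, let $h : (\mathcal{B},\Box) \to (\mathcal{B}',\Box')$ preserve $\Box$ strictly. Transport along the isomorphisms to get $h' : \Powerset*{X} \to \Powerset*{Y}$ with $X=\At{\mathcal{B}}$ and $Y=\At{\mathcal{B}'}$. By fullness of the weak duality, $h' = \Pre{f}$ for some frame morphism $f : Y \to X$ between the frames induced by the operators. Since $h'$ preserves $\Box$ strictly, the characterisation just proved makes $f$ open.

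The main obstacle is making sure the comparison isomorphisms strictly commute with $\Box$, not just weakly. This should follow because the frame built from a CABAO recovers $\Box$ exactly: the correspondence between meet-preserving operators and relations on atoms is bijective. Once that is in place, everything else is the routine back-condition calculation above.
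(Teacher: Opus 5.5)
Your proposal is correct and follows the paper's route: the paper gets this theorem by restricting the weak Thomason duality to the wide subcategories $\FRMOPEN$ and $\CABAO$, using the characterisation that $f$ is open iff $\Pre{f}\circ\Box = \Box\circ\Pre{f}$, which the paper calls easy and you verify in full. The obstacle you flag at the end is in fact automatic: an isomorphism $h$ in $\CABAOW$ satisfies $h\circ\Box \sqsubseteq \Box'\circ h$, and its inverse $h^{-1}$ satisfies the analogous inequality, which gives $\Box'\circ h \sqsubseteq h\circ\Box$, so every isomorphism of $\CABAOW$ already lies in $\CABAO$.
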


To obtain a relational version of this duality we must relax the functionality
of morphisms of Kripke frames. To achieve that, notice that the notion of an
open map is precisely a \emph{functional bisimulation} \cite[\S
3.2]{sangiorgi_2009}. Hence, we replace open maps with general
\emph{(bi)simulations}.

\begin{definition}[Simulations, Cosimulations, and Bisimulations]
  Let $(X, R)$ and $(Y, S)$ be Kripke frames, and let $Q : X \relto Y$ be a
  relation.
  \begin{enumerate}
    \item $Q$ is a \emph{simulation} whenever \(x \Rel{Q} y\) and \(x
      \Transition{R} x'\) imply that we have a \(y' \in Y\) with \(y
      \Transition{S} y'\) and \(x' \Rel{Q} y'\).
    \item $Q$ is a \emph{cosimulation} whenever $Q^\dagger$ is a simulation.
    \item $Q$ is a \emph{bisimulation} whenever it is both a simulation and a
          cosimulation.
  \end{enumerate}
\end{definition}

We illustrate the definitions of simulation and cosimulation pictorially:
\[
  \begin{tikzcd}
    x' \ar[r,dashed,"Q"] & {\exists y'} \\
    x \ar[u,"R"] \ar[r,"Q"] & y \ar[u,dotted,"S"]
  \end{tikzcd}
  \quad\quad\quad
  \begin{tikzcd}
    {\exists x'} \ar[r,dashed,"Q"] & y' \\
    x \ar[u,dotted,"R"] \ar[r,"Q"] & y \ar[u,"S"]
  \end{tikzcd}
\]
These two conditions are often called the \emph{forth} and the \emph{back}
conditions respectively.

It is easy to see that an open map is exactly a function whose graph is a
bisimulation. Indeed, the fact it preserves transitions means it is a simulation
(i.e.\ it satisfies the `forth' condition). The fact it is open means it is a
cosimulation (i.e.\ it satisfies the `back' condition).

(Co)simulations are closed under relational composition, and the identity
relation is a bisimulation. We thus obtain two categories \(\FRMSIM\) and
\(\FRMBISIM\) with Kripke frames as objects, and simulations and bisimulations
as morphisms respectively. Note that the opposite of a simulation is a
cosimulation, and vice versa. Therefore, there is a formal duality \(\Rev{(-)} :
\Op{\FRMBISIM} \Equiv \FRMBISIM\), meaning \(\FRMBISIM\) is also a dagger
category.

It is possible to characterise simulations purely in terms of the lower
relation.

\begin{lemma}\label{lemma:sim-black}\label{lemma:simulatory}
  For a relation \(Q : (X, R) \relto (Y, S)\) between Kripke frames,
  the following are equivalent:
  \begin{enumerate}[(i)]
  \item \(Q : (X, R) \relto (Y, S)\) is a simulation.
  \item \(A \Lower{Q} B\) implies
    \(\blacklozenge A \Lower{Q} \blacklozenge B\) for any
    \(A \subseteq X\) and \(B \subseteq Y\).
  \item \(A \Lower{Q} \Box B\) implies \(\blacklozenge A \Lower{Q} B\)
    for any \(A \subseteq X\) and \(B \subseteq Y\).
  \end{enumerate}
\end{lemma}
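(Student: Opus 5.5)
The plan is to prove (i)~$\Rightarrow$~(ii), (ii)~$\Rightarrow$~(i), and (ii)~$\Leftrightarrow$~(iii), using the explicit formulas for $\blacklozenge$ and $\Box$ and the adjunction $\blacklozenge \Adjoint \Box$. Throughout, $\blacklozenge$ on the left is $\blacklozenge_R$ and on the right is $\blacklozenge_S$; recall $\blacklozenge_R(A) = \SetComp{x'}{\exists x \in A.\ x \Transition{R} x'}$.

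For (i)~$\Rightarrow$~(ii), assume $Q$ is a simulation and $A \Lower{Q} B$. Take $x' \in \blacklozenge A$, so there is $x \in A$ with $x \Transition{R} x'$. From $A \Lower{Q} B$ pick $y \in B$ with $x \Rel{Q} y$. The forth condition then gives $y'$ with $y \Transition{S} y'$ and $x' \Rel{Q} y'$, and $y' \in \blacklozenge B$.

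For (ii)~$\Rightarrow$~(i), suppose $x \Rel{Q} y$ and $x \Transition{R} x'$. Instantiate (ii) with $A = \Set{x}$ and $B = \Set{y}$, which are $\Lower{Q}$-related. Since $x' \in \blacklozenge\Set{x}$, we get $y' \in \blacklozenge\Set{y}$ with $x' \Rel{Q} y'$, which is exactly the forth condition.

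For (ii)~$\Rightarrow$~(iii), suppose $A \Lower{Q} \Box B$. Applying (ii) gives $\blacklozenge A \Lower{Q} \blacklozenge\Box B$. The counit of the adjunction gives $\blacklozenge\Box B \subseteq B$, so bimodularity of $\Lower{Q}$ (\Cref{lemma:lower-then-cabarel}) yields $\blacklozenge A \Lower{Q} B$.

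For (iii)~$\Rightarrow$~(ii), suppose $A \Lower{Q} B$. The unit of the adjunction gives $B \subseteq \Box\blacklozenge B$, so bimodularity gives $A \Lower{Q} \Box\blacklozenge B$. Applying (iii) with $\blacklozenge B$ in place of $B$ gives $\blacklozenge A \Lower{Q} \blacklozenge B$.

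There is no real obstacle in any of these steps. The only point needing care is keeping the directions straight: $\blacklozenge$ is defined via predecessors, so $\blacklozenge\Set{x}$ is the set of $R$-successors of $x$, and the unit and counit must be applied on the correct side of the bimodule condition.
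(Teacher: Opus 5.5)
Your proof is correct and uses the same ingredients as the paper's: the forth condition unfolded against the explicit formula for $\blacklozenge$, the unit and counit of $\blacklozenge \dashv \Box$, and bimodularity of the lower relation. The difference is only organisational. The paper runs the cycle (i)$\Rightarrow$(ii)$\Rightarrow$(iii)$\Rightarrow$(i), proving the last step directly on singletons via $\{y\} \subseteq \Box\blacklozenge\{y\}$. You instead prove (iii)$\Rightarrow$(ii) for arbitrary sets and then specialise to singletons in a separate (ii)$\Rightarrow$(i) step.
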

\begin{proof}
  To show (i) implies (ii), suppose that \(Q : (X, R) \relto (Y, S)\)
  is a simulation and that \(A \Lower{Q} B\) for some
  \(A \subseteq X\) and \(B \subseteq Y\). If
  \(x \in \blacklozenge A\), then there exists some \(a \in A\) with
  \(a \Transition{R} x\). We know that there exists some \(b \in B\)
  for which \(a \Rel{Q} b\), so by simulation there must exist some
  \(y \in Y\) such that \(b \Transition{S} y\), so
  \(y \in \blacklozenge B\), and \(x \Rel{Q} y\). So
  \(\blacklozenge A \Lower{Q} \blacklozenge B\).

  To show (ii) implies (iii), suppose the premise of (ii) and that
  \(A \Lower{Q} \Box B\) for some \(A \subseteq X\) and
  \(B \subseteq Y\).  Then
  \(\blacklozenge A \Lower{Q} \blacklozenge \Box B\) by (ii). But
  \(\blacklozenge \dashv \Box\), so
  \(\blacklozenge \Box B \subseteq B\), and hence
  \(\blacklozenge A \Lower{Q} B\) as \(\Lower{Q}\) is a bimodule.

  Finally, to show (iii) implies (i), suppose the premise of (iii) and
  that \(x \Rel{Q} y\) for some \(x \in X\) and \(y \in Y\), equally
  \(\Set{x} \Lower{Q} \Set{y}\). Note that
  \(\Set{y} \subseteq \Box \blacklozenge \Set{y}\) and \(\Lower{Q}\)
  is a bimodule, therefore
  \(\Set{x} \Lower{Q} \Box \blacklozenge \Set{y}\), then by (iii)
  \(\blacklozenge \Set{x} \Lower{Q} \blacklozenge \Set{y}\).  So for
  any \(x' \in X\) for which \(x \Transition{R} x'\), there exists
  some \(y' \in Y\) for which \(y \Transition{R} y'\) and
  \(x' \Rel{Q} y'\), i.e.\ \(Q\) is a simulation.
\end{proof}

\begin{definition}
  A \emph{simulatory relation}
  \(Q : (\mathcal{B}, \Box_\mathcal{B}) \relto (\mathcal{B}',
  \Box_{\mathcal{B}'})\) between CABAOs is a directionally atomic
  \(Q : \mathcal{B} \relto \mathcal{B}'\) for which the condition of
  \Cref{lemma:simulatory} holds, i.e.\
  \[
    A \Rel{Q} \Box_{\mathcal{B}'} B\
    \Longrightarrow\
    \blacklozenge_\mathcal{B} A \Rel{Q} B.
  \]
\end{definition}

\begin{proposition}
  \label{proposition:simulatory-compose}
  If
  \(R : (\mathcal{B}, \Box_\mathcal{B}) \relto (\mathcal{B}',
  \Box_{\mathcal{B}'})\) and
  \(S : (\mathcal{B}', \Box_{\mathcal{B}'}) \relto (\mathcal{B}'',
  \Box_{\mathcal{B}''})\) are simulatory relations, then their relational
  composition is a simulatory relation
  \(R \then S : (\mathcal{B}, \Box_\mathcal{B}) \relto (\mathcal{B}'',
  \Box_{\mathcal{B}''})\).
\end{proposition}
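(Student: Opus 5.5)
The plan is to verify the two components of being a simulatory relation separately. First, \(R \then S\) is directionally atomic by the earlier proposition on composition of directionally atomic relations, so the only thing left is the simulation condition
\[
  A \Rel*{R \then S} \Box_{\mathcal{B}''} C\ \Longrightarrow\ \blacklozenge_{\mathcal{B}} A \Rel*{R \then S} C .
\]
This is a purely algebraic reformulation of the fact that simulations compose. I would prove it by threading an intermediate witness through both simulatory conditions, using the adjunction \(\blacklozenge \dashv \Box\) and bimodularity to turn the condition for \(R\) into the form of \Cref{lemma:simulatory}(ii).

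Concretely, suppose \(A \Rel*{R \then S} \Box_{\mathcal{B}''} C\). Then there is some \(B \in \mathcal{B}'\) with \(A \Rel{R} B\) and \(B \Rel{S} \Box_{\mathcal{B}''} C\).

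From \(B \Rel{S} \Box_{\mathcal{B}''} C\), the simulatory condition for \(S\) gives \(\blacklozenge_{\mathcal{B}'} B \Rel{S} C\). We would now like \(\blacklozenge_{\mathcal{B}} A \Rel{R} \blacklozenge_{\mathcal{B}'} B\), so that \(\blacklozenge_{\mathcal{B}'} B\) serves as the new witness.

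To get this, use the unit of the adjunction \(\blacklozenge_{\mathcal{B}'} \dashv \Box_{\mathcal{B}'}\), namely \(B \sqsubseteq \Box_{\mathcal{B}'} \blacklozenge_{\mathcal{B}'} B\). Since \(R\) is a bimodule, \(A \Rel{R} B\) yields \(A \Rel{R} \Box_{\mathcal{B}'}(\blacklozenge_{\mathcal{B}'} B)\). The simulatory condition for \(R\), instantiated with \(\blacklozenge_{\mathcal{B}'} B\), then gives \(\blacklozenge_{\mathcal{B}} A \Rel{R} \blacklozenge_{\mathcal{B}'} B\).

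Combining the two facts, \(\blacklozenge_{\mathcal{B}} A \Rel{R} \blacklozenge_{\mathcal{B}'} B \Rel{S} C\), that is, \(\blacklozenge_{\mathcal{B}} A \Rel*{R \then S} C\), as required. This mirrors the (iii)\(\Rightarrow\)(ii) direction in the proof of \Cref{lemma:simulatory}, now carried out abstractly in an arbitrary CABAO.

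The only point needing care, and the one I expect to be the main obstacle, is choosing the intermediate witness. The naive witness \(B\) does not work directly, because \(R\)'s condition requires its right-hand argument to be of the form \(\Box(\cdot)\). The fix is to weaken \(B\) to \(\Box\blacklozenge B\) via the unit and bimodularity, which produces \(\blacklozenge B\) as the correct new witness.
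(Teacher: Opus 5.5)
Your proof is correct. Directional atomicity comes from the earlier composition proposition. For the simulatory condition, the unit \(B \sqsubseteq \Box_{\mathcal{B}'}\blacklozenge_{\mathcal{B}'} B\) together with bimodularity gives \(A \mathrel{R} \Box_{\mathcal{B}'}\blacklozenge_{\mathcal{B}'} B\), so the two simulatory conditions chain to \(\blacklozenge_{\mathcal{B}} A \mathrel{R} \blacklozenge_{\mathcal{B}'} B \mathrel{S} C\).

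The paper states this proposition without a proof, so there is nothing to compare against directly. Your argument is the natural one, and it uses the same unit-plus-bimodularity step the paper uses in the proofs of \Cref{lemma:simulatory} and \Cref{lemma:lower-frmsim-dense}.
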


Define \(\CABAOSIM\) to have CABAs with operators \((\mathcal{B},
\Box_\mathcal{B})\) as objects and simulatory relations
\(Q : (\mathcal{B}, \Box_\mathcal{B}) \relto (\mathcal{B}',
\Box_{\mathcal{B}'})\) as morphisms. We will often write these
morphisms simply as \(Q : \mathcal{B} \relto \mathcal{B}'\) and omit
subscripts. Using the preceding proposition, this is a category, with
\(\sqsubseteq\) as an identity morphism.

\begin{lemma}
  \label{lemma:lower-frmsim-functor}
  \(\Lower\) is a functor \(\FRMSIM \fto \CABAOSIM\).
\end{lemma}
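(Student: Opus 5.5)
The proof reuses the functor \(\Lower : \REL \fto \CABAREL\) of \Cref{lemma:lower-functor} and adds only the modal structure. On objects, a Kripke frame \((X, R)\) is sent to the CABAO \((\Powerset*{X}, \Box_R)\). This is an object of \(\CABAOSIM\), since \(\Powerset*{X}\) is a CABA and \(\Box_R\) preserves all meets, being the right adjoint of the join-preserving \(\blacklozenge_R\). On morphisms, a simulation \(Q : (X, R) \relto (Y, S)\) is sent to \(\Lower{Q} : \Powerset*{X} \relto \Powerset*{Y}\).

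The only new obligation is that \(\Lower{Q}\) is a morphism of \(\CABAOSIM\), i.e.\ a simulatory relation. There are two parts.
\begin{enumerate}
  \item \(\Lower{Q}\) is directionally atomic. This is exactly \Cref{lemma:lower-then-cabarel}, applied to \(Q\) viewed as a plain relation \(X \relto Y\).
  \item \(\Lower{Q}\) satisfies the simulatory condition \(A \Lower{Q} \Box_S B \Rightarrow \blacklozenge_R A \Lower{Q} B\). Since \(Q\) is a simulation, this is the implication (i)\(\Rightarrow\)(iii) of \Cref{lemma:simulatory}. The operators on \(\Powerset*{X}\) and \(\Powerset*{Y}\) are by definition \(\blacklozenge_R \dashv \Box_R\) and \(\blacklozenge_S \dashv \Box_S\), which are precisely the operators appearing in that lemma.
\end{enumerate}

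Functoriality then transfers from \Cref{lemma:lower-functor}. Composition in \(\FRMSIM\) is ordinary relational composition, and composition in \(\CABAOSIM\) is relational composition of the underlying relations, which is well defined by \Cref{proposition:simulatory-compose}. Hence \(\Lower{(Q \then Q')} = \Lower{Q} \then \Lower{Q'}\) holds exactly as in \Cref{lemma:lower-functor}, with no modal argument needed. For identities, the identity on \((X, R)\) in \(\FRMSIM\) is \(\id_X\), and \(\Lower{\id_X}\) is \(\subseteq\), the identity of \(\CABAOSIM\) on \((\Powerset*{X}, \Box_R)\).

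I expect no step to be a genuine obstacle, because the modal content was already isolated in \Cref{lemma:simulatory}. The only care needed is bookkeeping: the identity of \(\CABAOSIM\) is \(\sqsubseteq\) on the underlying CABA, and the operator assigned to each frame is \(\Box_R\), so that the operators match those in \Cref{lemma:simulatory}.
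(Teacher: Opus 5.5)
Your proposal is correct and matches the argument the paper intends; the paper states this lemma without a written proof, but its ingredients are set up for exactly this. Directional atomicity comes from \Cref{lemma:lower-then-cabarel}, the simulatory condition is the implication (i)$\Rightarrow$(iii) of \Cref{lemma:simulatory}, and preservation of identities and composition is inherited unchanged from \Cref{lemma:lower-functor}.
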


\begin{lemma}
  \label{lemma:lower-frmsim-faithful}
  \(\Lower : \FRMSIM \fto \CABAOSIM\) is faithful.
\end{lemma}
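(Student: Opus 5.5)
The plan is to reduce this directly to \Cref{lem:lower-faithful}. The functor \(\Lower : \FRMSIM \fto \CABAOSIM\) of \Cref{lemma:lower-frmsim-functor} acts on objects by \((X, R) \mapsto (\Powerset*{X}, \Box_R)\). On a morphism, i.e.\ a simulation \(Q : X \relto Y\), it returns the same lower relation \(\Lower{Q} : \Powerset*{X} \relto \Powerset*{Y}\) as the functor \(\REL \fto \CABAREL\). The only new ingredient is that \(\Lower{Q}\) is simulatory, which is \Cref{lemma:simulatory}.

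Faithfulness therefore amounts to the following. Suppose \(Q, Q' : (X, R) \relto (Y, S)\) are two simulations with \(\Lower{Q} = \Lower{Q'}\) as relations \(\Powerset*{X} \relto \Powerset*{Y}\). We must show \(Q = Q'\). We argue exactly as in \Cref{lem:lower-faithful}: for any \(x \in X\) and \(y \in Y\),
\[
  x \Rel{Q} y \iff \Set{x} \Lower{Q} \Set{y} \iff \Set{x} \Lower{Q'} \Set{y} \iff x \Rel{Q'} y .
\]
The outer equivalences hold by unfolding the definition of the lower relation on singletons. The middle equivalence is the hypothesis.

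Equivalently, we could observe that the forgetful functors \(\FRMSIM \fto \REL\) and \(\CABAOSIM \fto \CABAREL\) are faithful, since morphisms are literally relations and composition is relational composition in both cases. These forgetful functors commute with \(\Lower\). Faithfulness of the composite \(\FRMSIM \fto \REL \fto \CABAREL\) then follows from \Cref{lem:lower-faithful}, which forces faithfulness of \(\Lower : \FRMSIM \fto \CABAOSIM\).

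There is no real obstacle here. The only point needing care is to state explicitly that the modal structure plays no role: equality of morphisms in \(\CABAOSIM\) is equality of the underlying relations. The argument is then purely the singleton argument.
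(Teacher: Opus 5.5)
Your proposal is correct and matches the paper's intended argument. The paper states this lemma without a separate proof, because it is immediate from the singleton argument of \Cref{lem:lower-faithful}: morphisms in both $\FRMSIM$ and $\CABAOSIM$ are just relations, compared by equality of the underlying relations.
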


\begin{lemma}
  \label{lemma:lower-frmsim-full}
  \(\Lower : \FRMSIM \fto \CABAOSIM\) is full. In other words, every simulatory
  relation
  \(Q : (\Powerset*{X}, \Box_R) \relto (\Powerset*{Y}, \Box_S)\)
  is \(\Lower{T}\) for some simulation \(T : (X, R) \relto (Y, S)\).
\end{lemma}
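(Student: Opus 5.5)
A simulatory relation \(Q : (\Powerset*{X}, \Box_R) \relto (\Powerset*{Y}, \Box_S)\) is in particular directionally atomic. So by \Cref{lem:cabarel-then-lower} there is a relation \(T : X \relto Y\) with \(Q = \Lower{T}\), namely \(x \Rel{T} y \defequiv \{x\} \Rel{Q} \{y\}\). It remains only to show that this \(T\) is a simulation \((X, R) \relto (Y, S)\).

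For that I will invoke \Cref{lemma:simulatory}, specifically the implication (iii) \(\Rightarrow\) (i). Since \(Q = \Lower{T}\), the simulatory condition on \(Q\) reads: \(A \Lower{T} \Box_S B\) implies \(\blacklozenge_R A \Lower{T} B\), for all \(A \subseteq X\) and \(B \subseteq Y\). This is exactly condition (iii) of \Cref{lemma:simulatory} for the relation \(T\). Hence \(T\) is a simulation and \(Q = \Lower{T}\), which is fullness.

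The one point to check is that the operators line up. The object \((\Powerset*{X}, \Box_R)\) has \(\blacklozenge_{\Powerset*{X}}\) equal to the left adjoint of \(\Box_R\). By the uniqueness remark following the definition of \(\blacklozenge_R\) and \(\Box_R\), this left adjoint is \(\blacklozenge_R\), and likewise on the \(Y\) side. So the \(\blacklozenge\) and \(\Box\) appearing in the simulatory condition are precisely those used in \Cref{lemma:simulatory}.

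No step here is a real obstacle: the work was already done in \Cref{lem:cabarel-then-lower} and \Cref{lemma:simulatory}. If one preferred not to cite (iii) \(\Rightarrow\) (i), the direct argument is as follows. From \(\{x\} \Rel{Q} \{y\}\) and \(\{y\} \subseteq \Box_S \blacklozenge_S \{y\}\), bimodularity gives \(\{x\} \Rel{Q} \Box_S\blacklozenge_S\{y\}\). The simulatory condition then gives \(\blacklozenge_R\{x\} \Rel{Q} \blacklozenge_S\{y\}\). Finally, \(Q = \Lower{T}\) yields, for each \(R\)-successor \(x'\) of \(x\), an \(S\)-successor \(y'\) of \(y\) with \(x' \Rel{T} y'\).
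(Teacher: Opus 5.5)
Your proposal is correct and matches the argument the paper relies on: the lemma is stated without a written proof, but the paper indicates that restricting to atoms as in \Cref{lem:cabarel-then-lower} and then applying \Cref{lemma:simulatory} yields a simulation. You recover \(T\) from \(Q\) in exactly that way, and for \(Q = \Lower{T}\) the simulatory condition is literally condition (iii) for \(T\); your check that the left adjoint of \(\Box_R\) on \(\Powerset*{X}\) is \(\blacklozenge_R\), and your direct unfolding via \(\{y\} \subseteq \Box_S \blacklozenge_S \{y\}\), are both sound.
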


We can define a functor \(\At : \CABAOSIM \fto \FRMSIM\) by
restricting a simulatory relation to atoms, as before. Using
\Cref{lemma:simulatory} it is easy to check that \(\At{R}\) is a
simulation.

\begin{lemma}
  \label{lemma:lower-frmsim-dense}
  \(\Lower : \FRMSIM \fto \CABAOSIM\) is essentially surjective.
\end{lemma}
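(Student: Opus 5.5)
Given a CABAO \((\mathcal{B}, \Box_\mathcal{B})\), I would exhibit the Kripke frame \((\At{\mathcal{B}}, R_\mathcal{B})\), where \(a \Rel{R_\mathcal{B}} a'\) just if \(a' \sqsubseteq \blacklozenge_\mathcal{B} a\). I would then show that \((\mathcal{B}, \Box_\mathcal{B})\) is isomorphic in \(\CABAOSIM\) to \((\Powerset*{\At{\mathcal{B}}}, \Box_{R_\mathcal{B}})\). The candidate isomorphism is the pair of relations \(R_\mathcal{B} : \mathcal{B} \relto \Powerset*{\At{\mathcal{B}}}\) and \(R^{-1}_\mathcal{B} : \Powerset*{\At{\mathcal{B}}} \relto \mathcal{B}\) from the proof of \Cref{lemma:lower-dense}. (I will call the first one \(\rho\) to avoid clashing with the transition relation.) By \Cref{lemma:lower-dense} both are directionally atomic and their composites are the identities \(\sqsubseteq\). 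So it only remains to check that each is simulatory.

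The key intermediate fact is that the Boolean isomorphism \(\iota : \Powerset*{\At{\mathcal{B}}} \to \mathcal{B}\), \(X \mapsto \bigsqcup X\), intertwines the diamonds: \(\iota(\blacklozenge_{R_\mathcal{B}} X) = \blacklozenge_\mathcal{B}(\iota X)\). To check this, note that both sides preserve joins, since \(\blacklozenge_\mathcal{B}\) does by assumption. It therefore suffices to check singletons \(X = \{a\}\). There \(\blacklozenge_{R_\mathcal{B}}\{a\} = \SetComp{a'}{a' \sqsubseteq \blacklozenge_\mathcal{B} a}\), whose join is \(\blacklozenge_\mathcal{B} a\) by atomicity. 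Taking right adjoints, \(\iota\) also intertwines the boxes: \(\iota(\Box_{R_\mathcal{B}} Y) = \Box_\mathcal{B}(\iota Y)\).

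With this in hand, both \(\rho\) and its inverse are given by \(x \mapsto x \sqsubseteq \iota X\) (respectively \(\iota X \sqsubseteq x\)). For \(\rho\), suppose \(A \sqsubseteq \iota(\Box Y) = \Box_\mathcal{B}(\iota Y)\). By adjunction \(\blacklozenge_\mathcal{B} A \sqsubseteq \iota Y\), which is the simulatory condition. For the inverse, suppose \(\iota X \sqsubseteq \Box_\mathcal{B} b\). Then \(\iota(\blacklozenge X) = \blacklozenge_\mathcal{B}(\iota X) \sqsubseteq b\), again by adjunction. So both relations lie in \(\CABAOSIM\), they are mutually inverse, and \(\Lower\) hits every object up to isomorphism.

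I expect the main obstacle to be the intertwining step, specifically making sure the chosen direction of \(R_\mathcal{B}\) matches the paper's convention \(\blacklozenge_R(A) = \SetComp{w}{\exists v \in A.\ v \Rel{R} w}\). If the direction is reversed, the singleton computation fails, so it has to be checked carefully. The only other ingredient needed is that every element is the join of the atoms below it, and the rest is adjunction bookkeeping.
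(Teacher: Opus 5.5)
Your proposal is correct and follows the paper's proof: the paper uses the same pair of relations $x \sqsubseteq \bigsqcup X$ and $\bigsqcup X \sqsubseteq x$, inherits directional atomicity and mutual inverseness from the Tarski case, and gets simulatoriness by the $\blacklozenge \dashv \Box$ adjunction. Your explicit transition relation on the atoms ($a \to a'$ iff $a' \sqsubseteq \blacklozenge a$) and your intertwining lemma $\iota \circ \blacklozenge = \blacklozenge \circ \iota$ (and hence for $\Box$) spell out what the paper leaves tacit: its step $\blacklozenge \bigsqcup \Box X = \bigsqcup \blacklozenge \Box X$ relies on exactly this, and the paper never states the frame structure on the atoms.
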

\begin{proof}
  Following the proof of \Cref{lemma:lower-dense},
  given a CABAO \(\mathcal{B}\), define the relations
  \(R_{\mathcal{B}} : \mathcal{B} \relto \Powerset*{\At{\mathcal{B}}}\) and
  \(R_{\mathcal{B}}^{-1} : \Powerset*{\At{\mathcal{B}}} \relto \mathcal{B}\)
  by
  \(x \Rel{R_{\mathcal{B}}} X\) iff \(x \sqsubseteq \bigsqcup X\)
  and
  \(X \Rel{R^{-1}_{\mathcal{B}}} x\) iff \(\bigsqcup X \sqsubseteq x\)
  respectively.
  As in the proof of \Cref{lemma:lower-dense} these are both directionally
  atomic and evidently inverses to each other. Thus, it suffices to show that
  they are simulatory.

  Suppose that \(x \sqsubseteq \bigsqcup \Box X\), then
  \[
    x
    \sqsubseteq \bigsqcup \Box X
    \sqsubseteq \Box \blacklozenge \bigsqcup \Box X
    = \Box \bigsqcup \blacklozenge \Box X
    \sqsubseteq \Box \bigsqcup X
  \]
  by properties of the adjunction $\blacklozenge \Adjoint \Box$.
  Hence $\blacklozenge x \sqsubseteq \bigsqcup X$, and $\Rel{R_\mathcal{B}}$ is
  simulatory.

  Similarly, if \(\bigsqcup X \sqsubseteq \Box x\), then \(\blacklozenge
  \bigsqcup X \sqsubseteq x\) and therefore \(\bigsqcup \blacklozenge X
  \sqsubseteq x\), so \(R_{\mathcal{B}}^{-1}\) is simulatory.
\end{proof}

In summary, we obtain an equivalence

\begin{theorem}
  \label{theorem:rel-thomason}
  $
  \FRMSIM \Equiv \CABAOSIM
  $.
\end{theorem}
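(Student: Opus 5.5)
The equivalence $\FRMSIM \Equiv \CABAOSIM$ will follow from the standard characterisation of equivalences of categories: a functor is an equivalence if and only if it is full, faithful, and essentially surjective. I will apply this to the functor $\Lower : \FRMSIM \fto \CABAOSIM$ of \Cref{lemma:lower-frmsim-functor}. Faithfulness is \Cref{lemma:lower-frmsim-faithful}, fullness is \Cref{lemma:lower-frmsim-full}, and essential surjectivity is \Cref{lemma:lower-frmsim-dense}, so the theorem is immediate once these lemmas are in hand. It mirrors \Cref{theorem:cabarel-preduality}, which is assembled in the same way from \Cref{lemma:lower-functor}, \Cref{lem:lower-faithful}, \Cref{lem:cabarel-then-lower} and \Cref{lemma:lower-dense}.

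If one wants the proof to stand more on its own, the two substantive inputs are as follows.

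\textbf{Functoriality.} For a simulation $T$, the relation $\Lower{T}$ is directionally atomic by \Cref{lemma:lower-then-cabarel}. It is simulatory by (i)$\Rightarrow$(iii) of \Cref{lemma:simulatory}. Preservation of identities and composition is inherited verbatim from \Cref{lemma:lower-functor}, since morphisms in both categories are plain relations composed relationally.

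\textbf{Fullness.} This is the step I expect to be the main obstacle. Given a simulatory $Q : \Powerset*{X} \relto \Powerset*{Y}$, \Cref{lem:cabarel-then-lower} already produces $T = S_Q$ with $Q = \Lower{T}$, where $x \Rel{T} y$ iff $\{x\} \Rel{Q} \{y\}$. It remains to show that $T$ is a simulation. Since $\Lower{T} = Q$ satisfies condition (iii) of \Cref{lemma:simulatory} by assumption, the implication (iii)$\Rightarrow$(i) of that lemma gives this at once. So the apparent difficulty dissolves: we only need to note that the simulatory condition on $Q$ is literally condition (iii) for $\Lower{T}$, with $\blacklozenge_R$ and $\Box_S$ the operators of $\Powerset*{X}$ and $\Powerset*{Y}$. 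Faithfulness is the same singleton argument as in \Cref{lem:lower-faithful}.

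\textbf{Essential surjectivity.} For an arbitrary CABAO $(\mathcal{B}, \Box_{\mathcal{B}})$, take the frame $(\At{\mathcal{B}}, R)$ where $a \Rel{R} a'$ iff $a' \sqsubseteq \blacklozenge_{\mathcal{B}} a$. Then the relations $R_{\mathcal{B}}$ and $R^{-1}_{\mathcal{B}}$ of \Cref{lemma:lower-frmsim-dense} are mutually inverse simulatory relations. The one point to check there is that, under the join isomorphism $\Powerset*{\At{\mathcal{B}}} \cong \mathcal{B}$, the operator $\blacklozenge_R$ corresponds to $\blacklozenge_{\mathcal{B}}$. This holds because both preserve joins and agree on atoms.

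With these three properties, choosing for each CABAO such an isomorphism yields the pseudo-inverse; concretely it is the functor $\At$ restricting simulatory relations to atoms. This establishes $\FRMSIM \Equiv \CABAOSIM$.
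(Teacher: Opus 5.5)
Your proposal is correct and follows the paper's route: the theorem is obtained by showing that $\Lower : \FRMSIM \fto \CABAOSIM$ is a functor that is faithful, full (via $S_Q$ from \Cref{lem:cabarel-then-lower} together with (iii)$\Rightarrow$(i) of \Cref{lemma:simulatory}), and essentially surjective (via $R_{\mathcal{B}}$ and $R^{-1}_{\mathcal{B}}$). Your explicit frame on $\At{\mathcal{B}}$, given by $a \Rel{R} a'$ iff $a' \sqsubseteq \blacklozenge_{\mathcal{B}} a$, is a useful addition: it makes precise the identity $\bigsqcup \blacklozenge X = \blacklozenge_{\mathcal{B}} \bigsqcup X$, which the paper's proof of \Cref{lemma:lower-frmsim-dense} uses without saying which transition relation the atoms carry.
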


As in our previous relational duality, the directionality disappears in the
relational case. We could turn this result into a duality by composing it with a formal
duality between $\FRMSIM$ and the category of \emph{co}simulations. However,
this is somewhat awkward.

Our final objective is to restrict this equivalence by analogy to the
restriction of \(\Op\FRM \Equiv \CABAOW\) to \(\Op\FRMOPEN \Equiv \CABAO\),
where the maps in \(\Op\FRMOPEN\) are open, and correspondingly the maps in
\(\CABAO\) preserve all logical connectives. In our relational duality the
morphisms are no longer functional, but merely simulations. Thus, openness will
turn them into \emph{bisimulations}.

Unfortunately, expressing bisimulations using our current vocabulary is not
immediately possible. For that we will need the following `dual' relational
lifting.

\begin{definition}[Upper Relation]
  Given a relation \(R : X \relto Y\), the associated upper relation \(\Upper{R}
  : \Powerset*{X} \relto \Powerset*{Y}\) is defined by
  \[
    S \Upper{R} T \defequiv \forall t \in T.\ \exists s \in S.\ s \Rel{R} t.
  \]
\end{definition}
It is easy to see that $\Upper{R} = \Lower{R^\dagger}^\dagger$ for any $R : X \relto Y$.
This makes the theory dual, in the sense that $\Upper{R}$ is
\emph{opdirectionally atomic} (i.e.\ right-disjunctive, atomic op-founded, and an
opbimodule). To avoid this proliferation of concepts we will simply work with
the opposite relation $\Lower{\Rev{R}}$. We can construct this relation directly
on relations between CABAs. 

\begin{definition}
  For a relation \(R : \mathcal{B} \relto \mathcal{B}'\) between CABAs we define
  its \emph{variant relation} \(\Variant{R} : \mathcal{B}' \relto \mathcal{B}\)
  to be
  \[
    b' \Rel{\Variant{R}} b
    \defequiv
      \forall \atom a' \sqsubseteq b'.\
      \exists \atom a \sqsubseteq b.\
      a \Rel{R} a'.
  \]
\end{definition}

This acts in the expected way between lifted relations.
\begin{lemma}
  \label{lemma:variant-lower}
  \(\Variant{\Lower{R}} = \Lower{\Rev{R}}\) for any \(R : X \relto Y\), and
  hence \(\Upper{R} = \Rev{\Variant{\Lower{R}}}\).
\end{lemma}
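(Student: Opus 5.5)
The plan is to unfold both sides on arbitrary sets \(B \subseteq Y\) and \(A \subseteq X\), using the fact that in \(\Powerset*{X}\) and \(\Powerset*{Y}\) the atoms are exactly the singletons. In particular, \(\{x\} \sqsubseteq A\) iff \(x \in A\).

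First, by definition of the variant relation, \(B \Rel{\Variant{\Lower{R}}} A\) holds iff for every atom \(\{y\} \subseteq B\) there is an atom \(\{x\} \subseteq A\) with \(\{x\} \Lower{R} \{y\}\). Next, by the definition of the lower relation, \(\{x\} \Lower{R} \{y\}\) holds iff \(x \Rel{R} y\); this is the observation already used in \Cref{lem:lower-faithful}. Substituting, \(B \Rel{\Variant{\Lower{R}}} A\) iff for all \(y \in B\) there exists \(x \in A\) with \(x \Rel{R} y\), i.e.\ \(y \Rel{\Rev{R}} x\). This is exactly the definition of \(B \Lower{\Rev{R}} A\). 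Both relations are relations \(\Powerset*{Y} \relto \Powerset*{X}\), and they have been shown to agree pointwise, so \(\Variant{\Lower{R}} = \Lower{\Rev{R}}\).

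For the second claim, I combine this with the identity \(\Upper{R} = \Rev{\Lower{\Rev{R}}}\) noted just before the definition of the variant relation. That identity is itself immediate from the definitions: \(S \Upper{R} T\) iff every \(t \in T\) has some \(s \in S\) with \(t \Rel{\Rev{R}} s\), iff \(T \Lower{\Rev{R}} S\). Substituting the first part then gives \(\Upper{R} = \Rev{\Variant{\Lower{R}}}\).

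There is no real obstacle here. The only point needing care is the bookkeeping of directions: \(\Variant{(-)}\) reverses source and target, and the quantifier pattern (for all on the \(B\) side, exists on the \(A\) side) must match the lower lifting of \(\Rev{R}\) rather than of \(R\).
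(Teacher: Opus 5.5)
Your proposal is correct and follows the paper's own argument: both unfold the variant relation on powersets, where the atoms are the singletons, to get $\forall y \in B.\ \exists x \in A.\ x \Rel{R} y$, which is by definition the lower lifting of the opposite relation. You also write out the ``hence'' clause via the identity $\Upper{R} = \Lower{R^\dagger}^\dagger$, a step the paper leaves implicit.
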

\begin{proof}
  Let \(R : X \relto Y\) be a relation between sets. Then for any
  \(T \subseteq Y\) and \(S \subseteq X\) we have \(T \Variant{\Lower{R}} S\)
  iff \(\forall t \in T.\ \exists s \in S.\ s \Rel{R} t\). Note that this is
  exactly the same as \(T \Lower{\Rev{R}} S\).
\end{proof}
As every directionally atomic relation is in the image of $\Lower$,
\cref{lemma:variant-lower} implies that the variant of a directionally atomic
relation is directionally atomic. Hence, the variant construction forms a
functor \(\Variant{(-)} : \Op{\CABAREL} \fto \CABAREL\) which sends every CABA
to itself, and every directionally atomic relation to its variant. Moreover,
\begin{theorem}
  \(\Variant{(-)} : \Op{\CABAREL} \fto \CABAREL\) is an equivalence.
\end{theorem}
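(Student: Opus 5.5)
The plan is to transport the problem along the equivalence $\Lower : \REL \Equiv \CABAREL$ of \Cref{theorem:cabarel-preduality}, where the variant becomes the dagger on $\REL$. The dagger is an isomorphism of categories $\Op{\REL} \cong \REL$, so the variant should inherit being an equivalence.

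First I record that $\Variant{(-)}$ is a functor at all.

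1. It is identity-on-objects.
2. For identities, compute $\Variant{\sqsubseteq}$ directly. We have $b' \Rel{\Variant{\sqsubseteq}} b$ iff every atom below $b'$ lies below some atom below $b$. Atoms are incomparable unless equal, so this says every atom below $b'$ is below $b$, i.e.\ $b' \sqsubseteq b$ by atomicity.
3. For composition, we need $\Variant{(R \then S)} = \Variant{S} \then \Variant{R}$ for directionally atomic $R : \mathcal{B} \relto \mathcal{B}'$ and $S : \mathcal{B}' \relto \mathcal{B}''$. Since the definition of the variant only mentions atoms, $\Variant{R}$ depends only on the restriction $\At{R}$.
4. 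By the functoriality of $\At$ already noted in the paper, $\At{(R \then S)} = \At{R} \then \At{S}$.
5. Alternatively, one may assume (as the paper does) that the objects are powersets up to the isomorphisms $R_\mathcal{B}$ of \Cref{lemma:lower-dense}. Then write $R = \Lower{R_0}$ and $S = \Lower{S_0}$ by \Cref{lem:cabarel-then-lower}, and apply \Cref{lemma:variant-lower} together with \Cref{lemma:lower-functor}:
\[
\Variant{(\Lower{R_0} \then \Lower{S_0})} = \Variant{\Lower{(R_0 \then S_0)}} = \Lower{\Rev{(R_0\then S_0)}} = \Lower{\Rev{S_0}} \then \Lower{\Rev{R_0}} = \Variant{\Lower{S_0}} \then \Variant{\Lower{R_0}}.
\]

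To avoid the representative issue cleanly, I would prove directly the formula $\At{(\Variant{R})} = \Rev{(\At{R})}$. For atoms $a'$ and $a$, $a' \Rel{\Variant{R}} a$ iff $a \Rel{R} a'$, because the only atom below an atom is itself. I would also use the reconstruction formula from \Cref{lem:cabarel-then-lower}, transported to arbitrary CABAs via atomicity: a directionally atomic relation $R$ satisfies $b \Rel{R} b'$ iff every atom below $b$ is $R$-related to some atom below $b'$. This follows from bimodularity, atomic-foundedness and left-disjunctivity exactly as in that proof, with joins of atoms in place of unions of singletons. Together these give that $\Variant{(-)}$ commutes, up to the identifications, with $\At$ and $\Rev{(-)}$. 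Functoriality then follows from functoriality of $\At$ and of the dagger.

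Next I show that $\Variant{(-)}$ is essentially surjective, full and faithful.

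1. Essential surjectivity is trivial, since it is identity on objects.
2. For full faithfulness, show $\Variant{\Variant{R}} = R$. Both sides are directionally atomic, so by the reconstruction formula they are determined by their restrictions to atoms. Those restrictions are $\Rev{\Rev{(\At{R})}} = \At{R}$.
3. Hence $\Variant{(-)}$ is an involution on hom-sets, so it is bijective on hom-sets. In fact it is an isomorphism of categories $\Op{\CABAREL} \cong \CABAREL$, which is stronger than an equivalence.

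The main obstacle is bookkeeping rather than substance. One must check that $\Variant{R}$ is directionally atomic for an arbitrary CABA rather than a powerset; the paper derives this via image-of-$\Lower$, which is only up to isomorphism. I expect to handle this by the atom-restriction formula above. Directional atomicity of $\Variant{R}$ can also be checked directly:

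1. Left-disjunctivity holds because an atom below $\bigsqcup b'_i$ lies below some $b'_i$, by complete join-primality.
2. The bimodule property is immediate from the quantifier shape.
3. Atomic-foundedness holds by choosing the witnessing atom $a$.
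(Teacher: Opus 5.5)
Your proposal is correct and follows the paper's route: the variant is the dagger on $\REL$ transported along the $\Lower$/$\At$ equivalence, as in \Cref{lemma:variant-lower}. Your version is more careful than the paper's ``every directionally atomic relation is in the image of $\Lower$'' shortcut, which holds literally only for powersets, because you work on arbitrary CABAs through $\At{(\Variant{R})} = \Rev{(\At{R})}$ and the atom-wise reconstruction of directionally atomic relations; this also gives the slightly stronger conclusion that $\Variant{(-)}$ is an involution, hence an isomorphism of categories $\Op{\CABAREL} \cong \CABAREL$.
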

Thus, we have a self-duality on $\CABAREL$, which makes the following
claims straightforward.

As with simulations and \Cref{lemma:sim-black}, it is possible to characterise
cosimulations purely in terms of the variant relation.

\begin{lemma}\label{lemma:cosim-black}\label{lemma:cosimulatory}
  For a relation \(Q : (X, R) \relto (Y, S)\) between Kripke frames,
  the following are equivalent:
  \begin{enumerate}[(i)]
  \item \(Q : (X, R) \relto (Y, S)\) is a cosimulation.
  \item \(B \Rel{\Variant{\Lower{Q}}} A\) implies
    \(\blacklozenge B \Rel{\Variant{\Lower{Q}}} \blacklozenge A\) for
    any \(B \subseteq Y\) and \(A \subseteq X\).
  \item \(B \Rel{\Variant{\Lower{Q}}} \Box A\) implies
    \(\blacklozenge B \Rel{\Variant{\Lower{Q}}} A\) for any
    \(B \subseteq Y\) and \(A \subseteq X\).
  \end{enumerate}
\end{lemma}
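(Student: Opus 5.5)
The plan is to reduce the statement to \Cref{lemma:sim-black} applied to the opposite relation. By \Cref{lemma:variant-lower} we have \(\Variant{\Lower{Q}} = \Lower{\Rev{Q}}\). So conditions (ii) and (iii) of the present lemma are literally conditions (ii) and (iii) of \Cref{lemma:sim-black}, stated for the relation \(\Rev{Q} : (Y, S) \relto (X, R)\). Here the \(\blacklozenge\) and \(\Box\) on the left are \(\blacklozenge_S, \Box_S\) on \(\Powerset*{Y}\), and those on the right are \(\blacklozenge_R, \Box_R\) on \(\Powerset*{X}\).

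By definition, \(Q\) is a cosimulation exactly when \(\Rev{Q}\) is a simulation, which is condition (i) of \Cref{lemma:sim-black} for \(\Rev{Q}\). So all three conditions here are the three equivalent conditions of \Cref{lemma:sim-black} for \(\Rev{Q}\), and the equivalence follows at once.

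The steps are therefore:
\begin{enumerate}
\item Rewrite \(\Variant{\Lower{Q}}\) as \(\Lower{\Rev{Q}}\) using \Cref{lemma:variant-lower}.
\item Note that \(\Rev{Q}\) is a relation between the frames \((Y,S)\) and \((X,R)\). The operators appearing on each side are the ones induced by the frame structure of the corresponding carrier, so the hypotheses of \Cref{lemma:sim-black} match exactly.
\item Invoke \Cref{lemma:sim-black} and unfold the definition of cosimulation.
\end{enumerate}

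The only point needing care is the bookkeeping of which modality lives on which powerset. Since \(\blacklozenge B\) with \(B \subseteq Y\) must mean \(\blacklozenge_S B\), and \(\blacklozenge A\) with \(A \subseteq X\) must mean \(\blacklozenge_R A\), the instantiation is forced and matches \Cref{lemma:sim-black}.

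If a direct argument is preferred instead, it would mirror the proof of \Cref{lemma:sim-black} verbatim with the roles of \(X\) and \(Y\) swapped:
\begin{itemize}
\item (i)\(\Rightarrow\)(ii) uses the back condition pointwise.
\item (ii)\(\Rightarrow\)(iii) uses the counit \(\blacklozenge\Box A \subseteq A\) together with bimodularity of \(\Lower{\Rev{Q}}\).
\item (iii)\(\Rightarrow\)(i) applies (iii) to \(\Set{y} \Lower{\Rev{Q}} \Box\blacklozenge\Set{x}\), obtained from the unit, whenever \(x \Rel{Q} y\).
\end{itemize}
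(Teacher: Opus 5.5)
Your proposal is correct and follows the paper's own proof: the paper also observes that $Q$ is a cosimulation iff $\Rev{Q}$ is a simulation, and concludes by applying \Cref{lemma:sim-black} together with \Cref{lemma:variant-lower}. Your bookkeeping of which modalities act on which powerset is accurate, and your optional direct argument is also sound, though it is not needed.
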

\begin{proof}
  \(Q\) is a cosimulation iff \(\Rev{Q}\) is a simulation, so just by
  \Cref{lemma:sim-black,lemma:variant-lower}.
\end{proof}

\begin{definition}
  A \emph{cosimulatory relation}
  \(Q : (\mathcal{B}, \Box_\mathcal{B}) \relto (\mathcal{B}',
  \Box_{\mathcal{B}'})\) between CABAOs is a directionally atomic
  relation \(Q : \mathcal{B} \relto \mathcal{B}'\) for which the
  condition of \cref{lemma:cosimulatory} holds, i.e.\
  \[
    B \Variant{Q} \Box_\mathcal{B} A\
    \Longrightarrow\
    \blacklozenge_{\mathcal{B}'} B \Variant{Q} A.
  \]
\end{definition}

The following proposition is analogous to
\Cref{proposition:simulatory-compose}.

\begin{proposition}
  \label{proposition:cosimulatory-compose}
  If
  \(R : (\mathcal{B}, \Box_\mathcal{B}) \relto (\mathcal{B}',
  \Box_{\mathcal{B}'})\) and
  \(S : (\mathcal{B}', \Box_{\mathcal{B}'}) \relto (\mathcal{B}'',
  \Box_{\mathcal{B}''})\) are cosimulatory relations, then their composition
  is a cosimulatory relation
  \(R \then S : (\mathcal{B}, \Box_\mathcal{B}) \relto (\mathcal{B}'',
  \Box_{\mathcal{B}''})\).
\end{proposition}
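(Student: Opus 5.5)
We show that \(R \then S\) is directionally atomic and then that it satisfies the cosimulatory condition. Directional atomicity of \(R \then S\) is exactly the earlier proposition on composition of directionally atomic relations, so only the cosimulatory condition remains.

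For that condition, the key structural fact is that the variant construction reverses composition: \(\Variant{(R \then S)} = \Variant{S} \then \Variant{R}\). This follows from the functor \(\Variant{(-)} : \Op{\CABAREL} \fto \CABAREL\) stated just before \Cref{lemma:cosimulatory}. If one prefers not to lean on that, it can be checked directly.

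\emph{Direct check.} Represent every CABA as a powerset via the equivalence \(\REL \Equiv \CABAREL\) (\Cref{lem:cabarel-then-lower,lemma:lower-dense}). Write \(R = \Lower{R_0}\) and \(S = \Lower{S_0}\). Functoriality of \(\Lower\) (\Cref{lemma:lower-functor}) and \Cref{lemma:variant-lower} then give
\[
  \Variant{(R \then S)}
  = \Variant{\Lower{R_0 \then S_0}}
  = \Lower{\Rev{S_0} \then \Rev{R_0}}
  = \Lower{\Rev{S_0}} \then \Lower{\Rev{R_0}}
  = \Variant{S} \then \Variant{R}.
\]
If we wish to stay purely algebraic, the same identity can be verified on atoms. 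Unfold \(b'' \Variant{(R \then S)} b\) as: for every atom \(a'' \sqsubseteq b''\) there is an atom \(a \sqsubseteq b\) and some \(c\) with \(a \Rel{R} c \Rel{S} a''\). Atomic-foundedness of \(R\) and bimodularity of \(S\) let us replace \(c\) by an atom. The intermediate witness for \(b'' \Rel{\Variant{S}} c \Rel{\Variant{R}} b\) is then the join of all such atomic \(c\)'s. This join step is the one that needs care.

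With reversal in hand, the proof is the same chase as for \Cref{proposition:simulatory-compose}. Suppose \(C \Variant{(R \then S)} \Box_\mathcal{B} A\). Then there is \(B\) with
\[
  C \Variant{S} B \Variant{R} \Box_\mathcal{B} A.
\]
Cosimulatoriness of \(R\) gives \(\blacklozenge_{\mathcal{B}'} B \Variant{R} A\). To use cosimulatoriness of \(S\) we need \(C \Variant{S} \Box_{\mathcal{B}'} B'\) for a suitable \(B'\). Take \(B' = \blacklozenge_{\mathcal{B}'} B\). The unit of the adjunction gives \(B \sqsubseteq \Box\blacklozenge B\), and \(\Variant{S}\) is a bimodule, being directionally atomic. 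Hence \(C \Variant{S} \Box_{\mathcal{B}'} \blacklozenge_{\mathcal{B}'} B\). Cosimulatoriness of \(S\) then yields \(\blacklozenge_{\mathcal{B}''} C \Variant{S} \blacklozenge_{\mathcal{B}'} B\). Composing with \(\blacklozenge_{\mathcal{B}'} B \Variant{R} A\) gives
\[
  \blacklozenge_{\mathcal{B}''} C \Variant{S} \then \Variant{R} A,
\]
that is, \(\blacklozenge_{\mathcal{B}''} C \Variant{(R \then S)} A\), as required.

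An alternative route is to transport along the equivalence to frames. By \Cref{lemma:cosimulatory}, \(R_0\) and \(S_0\) are cosimulations; cosimulations compose, so \(R_0 \then S_0\) is a cosimulation, and \Cref{lemma:cosimulatory} applied to it concludes. The catch is that this requires the CABAO structures themselves to be transported to \(\Box_R\)-type operators on powersets, which needs the CABAO version of the equivalence. We therefore prefer the algebraic chase above.

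The main obstacle is the identity \(\Variant{(R \then S)} = \Variant{S} \then \Variant{R}\) for general CABAs, specifically the choice of intermediate witness. Our plan is to take it from the functoriality statement of the variant equivalence, falling back on the explicit atom-join construction described above.
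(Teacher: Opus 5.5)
Your proof is correct and follows essentially the paper's route. The paper derives this from the self-duality \(\Variant{(-)} : \Op{\CABAREL} \fto \CABAREL\), which gives \(\Variant{(R \then S)} = \Variant{S} \then \Variant{R}\), together with the observation that a relation is cosimulatory exactly when its variant is simulatory, so the claim reduces to \Cref{proposition:simulatory-compose}; your unit-plus-bimodularity chase is precisely that simulatory-composition argument carried out on the variants.
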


Call a relation \emph{bisimulatory} if it is both simulatory and
cosimulatory. Let \(\CABAOBISIM\) be the category whose objects are
CABAOs and whose morphisms are bisimulatory relations. As before,
\(\sqsubseteq\) is the identity.

The next four lemmas are the bisimulation analogues of
\Cref{lemma:lower-frmsim-functor,lemma:lower-frmsim-faithful,lemma:lower-frmsim-full,lemma:lower-frmsim-dense}.

\begin{lemma}
  \(\Lower\) is a functor \(\FRMBISIM \fto \CABAOBISIM\).
\end{lemma}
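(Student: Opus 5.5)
The plan is to reduce the statement to the simulation case already established in \Cref{lemma:lower-frmsim-functor}, together with the cosimulation characterisation of \Cref{lemma:cosimulatory}. Since functoriality (preservation of identities and composition) is a property of the underlying assignment $R \mapsto \Lower{R}$, and that assignment is already known to be a functor $\REL \fto \CABAREL$ by \Cref{lemma:lower-functor}, the only thing left to check is that $\Lower$ sends morphisms of $\FRMBISIM$ to morphisms of $\CABAOBISIM$. On objects we send $(X, R)$ to $(\Powerset*{X}, \Box_R)$, as in the simulation case.

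Let $Q : (X, R) \relto (Y, S)$ be a bisimulation.
\begin{enumerate}
  \item \textbf{Directionally atomic.} $\Lower{Q}$ is directionally atomic by \Cref{lemma:lower-then-cabarel}.
  \item \textbf{Simulatory.} Since $Q$ is a simulation, the implication (i)$\Rightarrow$(iii) of \Cref{lemma:simulatory} gives: $A \Lower{Q} \Box B$ implies $\blacklozenge A \Lower{Q} B$. This is exactly the condition for $\Lower{Q}$ to be simulatory. Equivalently, we could appeal directly to \Cref{lemma:lower-frmsim-functor}.
  \item \textbf{Cosimulatory.} Since $Q$ is a cosimulation, the implication (i)$\Rightarrow$(iii) of \Cref{lemma:cosimulatory} gives: $B \Rel{\Variant{\Lower{Q}}} \Box A$ implies $\blacklozenge B \Rel{\Variant{\Lower{Q}}} A$. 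This is literally the definition of $\Lower{Q}$ being cosimulatory, with the variant taken of the CABA relation $\Lower{Q}$.
\end{enumerate}
Hence $\Lower{Q}$ is bisimulatory.

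For identities, the identity relation on a frame is a bisimulation and $\Lower{\id} = {\subseteq}$, which is the identity of $\CABAOBISIM$. Composition is preserved because $\Lower{(Q \then Q')} = \Lower{Q} \then \Lower{Q'}$ by \Cref{lemma:lower-functor}, and composition in both $\FRMBISIM$ and $\CABAOBISIM$ is ordinary relational composition. \Cref{proposition:simulatory-compose,proposition:cosimulatory-compose} confirm that the target is a category, but they are not needed for this equality.

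I expect no real obstacle. The only point needing care is the third step: one must confirm that the variant appearing in the cosimulatory condition is the variant of the algebra-level relation $\Lower{Q}$. That is precisely the form stated in \Cref{lemma:cosimulatory}(iii), which was proved via \Cref{lemma:variant-lower}.
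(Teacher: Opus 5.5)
Your proposal is correct and matches the argument the paper leaves implicit (it states this lemma without proof, as the bisimulation analogue of \Cref{lemma:lower-frmsim-functor}). Your steps are the intended ones: $\Lower{Q}$ is directionally atomic by \Cref{lemma:lower-then-cabarel}, simulatory and cosimulatory by the (i)$\Rightarrow$(iii) directions of \Cref{lemma:simulatory,lemma:cosimulatory}, and preservation of identities and composition is inherited from \Cref{lemma:lower-functor}. You also correctly check that the variant in the cosimulatory condition is the algebra-level variant of $\Lower{Q}$, which \Cref{lemma:variant-lower} identifies with $\Lower{\Rev{Q}}$.
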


\begin{lemma}
  \(\Lower : \FRMBISIM \fto \CABAOBISIM\) is faithful.
\end{lemma}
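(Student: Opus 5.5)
Faithfulness of \(\Lower : \FRMBISIM \fto \CABAOBISIM\) follows from the faithfulness of the underlying relational functor, so essentially no new work is needed. First we check that the functor on \(\FRMBISIM\) is the restriction of \(\Lower : \REL \fto \CABAREL\). On objects it sends a frame \((X, R)\) to the CABAO \((\Powerset*{X}, \Box_R)\). On morphisms it sends a bisimulation \(Q : (X,R) \relto (Y,S)\) to \(\Lower{Q}\). This is exactly the action of \(\Lower : \REL \fto \CABAREL\) on the underlying relation, together with the operators on the objects. Moreover, \(\CABAOBISIM\) is a subcategory of \(\CABAREL\) on underlying data: its morphisms are in particular directionally atomic relations. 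So equality of morphisms in \(\CABAOBISIM\) is just equality of the underlying relations.

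With this observation, the proof is the same one-line argument as \Cref{lem:lower-faithful}. Suppose \(Q, Q' : (X,R) \relto (Y,S)\) are bisimulations with \(\Lower{Q} = \Lower{Q'}\). Then for all \(x \in X\) and \(y \in Y\),
\[
  x \Rel{Q} y
  \iff \Set{x} \Lower{Q} \Set{y}
  \iff \Set{x} \Lower{Q'} \Set{y}
  \iff x \Rel{Q'} y,
\]
so \(Q = Q'\). Equivalently, we can quote \Cref{lem:lower-faithful} directly, since the forgetful functors \(\FRMBISIM \to \REL\) and \(\CABAOBISIM \to \CABAREL\) are faithful and commute with \(\Lower\). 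Either way, the conclusion is immediate.

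The only point needing any care is the first step: the bisimulatory relation assigned to \(Q\) must be literally \(\Lower{Q}\), with no additional data, so that equality in the target is equality of relations. This holds by the definition of the functor in the preceding lemma. Hence no real obstacle arises, and the bisimulatory conditions play no role in faithfulness.
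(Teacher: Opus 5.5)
Your proposal is correct and matches the paper's intended argument. The paper gives no separate proof, but presents this lemma as the bisimulation analogue of the earlier faithfulness results, and it follows exactly as in \Cref{lem:lower-faithful} by evaluating on singletons, since morphisms on both sides are just the underlying relations.
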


\begin{lemma}\label{lem:lower-CABAOBISIM-full}
  \(\Lower : \FRMBISIM \fto \CABAOBISIM\) is full. In other words, every
  bisimulatory relation \(Q : (\Powerset*{X}, \Box_R) \relto (\Powerset*{Y},
  \Box_S)\) is \(\Lower{T}\) for some bisimulation \(T : (X, R) \relto (Y,
  S)\).
\end{lemma}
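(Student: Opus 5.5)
Let \(Q : (\Powerset*{X}, \Box_R) \relto (\Powerset*{Y}, \Box_S)\) be bisimulatory. The plan is to reuse the simulation case and then check the back condition separately.

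First, \(Q\) is in particular simulatory, so \Cref{lemma:lower-frmsim-full} gives a simulation \(T : (X, R) \relto (Y, S)\) with \(Q = \Lower{T}\). Concretely, by \Cref{lem:cabarel-then-lower}, \(T\) can be taken to be \(x \Rel{T} y \defequiv \{x\} \Rel{Q} \{y\}\). This settles the `forth' half, and it remains only to show that \(T\) is also a cosimulation.

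Since \(Q = \Lower{T}\), \Cref{lemma:variant-lower} gives \(\Variant{Q} = \Variant{\Lower{T}} = \Lower{\Rev{T}}\). The cosimulatory condition on \(Q\) states that \(B \Variant{Q} \Box_R A\) implies \(\blacklozenge_S B \Variant{Q} A\). Rewritten, this says that \(B \Rel{\Variant{\Lower{T}}} \Box A\) implies \(\blacklozenge B \Rel{\Variant{\Lower{T}}} A\) for all \(B \subseteq Y\) and \(A \subseteq X\). This is exactly condition (iii) of \Cref{lemma:cosimulatory} for the relation \(T\). Applying (iii)\(\Rightarrow\)(i) of that lemma, \(T\) is a cosimulation, and hence a bisimulation.

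The only point needing care is bookkeeping. We must check that the operators in the cosimulatory definition land on the correct sides: \(\Box\) is computed in \(\mathcal{B} = \Powerset*{X}\) with \(\Box_R\), and \(\blacklozenge\) in \(\mathcal{B}' = \Powerset*{Y}\) with \(\blacklozenge_S\). These must match the statement of \Cref{lemma:cosimulatory}, which they do by inspection.

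As a direct sanity check avoiding the lemma, take \(x \Rel{T} y\) and \(y \Transition{S} y'\). Then \(\{y\} \Variant{Q} \{x\}\), so \(\{y\} \Variant{Q} \Box\blacklozenge\{x\}\) by bimodularity of \(\Variant{Q} = \Lower{\Rev{T}}\). The cosimulatory condition then gives \(\blacklozenge\{y\} \Variant{Q} \blacklozenge\{x\}\). So every \(y'\) is \(T^\dagger\)-related to some \(x'\) with \(x \Transition{R} x'\), which is the back condition.
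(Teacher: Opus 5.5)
Your proof is correct and takes the route the paper intends: the paper gives no separate argument for this lemma, presenting it only as the bisimulation analogue of \Cref{lemma:lower-frmsim-full}. That analogue amounts to obtaining the simulation \(T\) with \(Q = \Lower{T}\), then converting the cosimulatory condition on \(Q\) into the back condition for \(T\) via \Cref{lemma:variant-lower,lemma:cosimulatory}, and your direct verification of the back condition via the unit \(\{x\} \subseteq \Box\blacklozenge\{x\}\) is also sound.
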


\begin{lemma}
  \label{lemma:lower-frmbisim-dense}
  \(\Lower : \FRMBISIM \fto \CABAOBISIM\) is essentially surjective.
\end{lemma}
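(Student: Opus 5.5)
We reuse the witnesses from the proof of \Cref{lemma:lower-frmsim-dense}. Given a CABAO \(\mathcal{B}\), consider the directionally atomic relations \(R_{\mathcal{B}} : \mathcal{B} \relto \Powerset*{\At{\mathcal{B}}}\) and \(R^{-1}_{\mathcal{B}} : \Powerset*{\At{\mathcal{B}}} \relto \mathcal{B}\), given by \(x \sqsubseteq \bigsqcup X\) and \(\bigsqcup X \sqsubseteq x\) respectively. Here \(\Powerset*{\At{\mathcal{B}}}\) carries the operator transported along the CABA isomorphism \(X \mapsto \bigsqcup X\); its \(\blacklozenge\) is \(X \mapsto \At(\blacklozenge \bigsqcup X)\). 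Equivalently, this is \(\Powerset*{\At{\mathcal{B}}}\) equipped with the frame relation \(a \Transition{} a'\) iff \(a' \sqsubseteq \blacklozenge a\). Since \(\bigsqcup\) commutes with \(\blacklozenge\), this is the same object implicitly used in \Cref{lemma:lower-frmsim-dense}.

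We already know that these relations are mutually inverse in \(\CABAREL\) and simulatory. Composites in \(\CABAOBISIM\) are the same relational composites as in \(\CABAREL\), and \(\sqsubseteq\) is the identity there too. So it suffices to show that both relations are also cosimulatory. The resulting isomorphism \(\mathcal{B} \cong \Powerset*{\At{\mathcal{B}}}\) in \(\CABAOBISIM\), with \(\Powerset*{\At{\mathcal{B}}} = \Lower\) applied to the frame on \(\At{\mathcal{B}}\), then gives essential surjectivity.

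To check cosimulatoriness, first compute the variants explicitly. For \(R_{\mathcal{B}}\): \(X \Variant{R_{\mathcal{B}}} x\) holds iff every \(\{a'\} \subseteq X\) has an atom \(a \sqsubseteq x\) with \(a \sqsubseteq a'\), i.e.\ \(a = a'\). So the variant reduces to \(\bigsqcup X \sqsubseteq x\), which is exactly \(R^{-1}_{\mathcal{B}}\). Symmetrically, \(x \Variant{R^{-1}_{\mathcal{B}}} X\) holds iff every atom \(a' \sqsubseteq x\) lies in \(X\), i.e.\ \(x \sqsubseteq \bigsqcup X\); this is \(R_{\mathcal{B}}\). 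Both computations use only the fact that atoms in a CABA are completely join-prime.

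Given this, the cosimulatory condition for \(R_{\mathcal{B}}\) reads: \(X \Rel{R^{-1}_{\mathcal{B}}} \Box x\) implies \(\blacklozenge X \Rel{R^{-1}_{\mathcal{B}}} x\). This is precisely the simulatory condition for \(R^{-1}_{\mathcal{B}}\), already verified in \Cref{lemma:lower-frmsim-dense}. Likewise, the cosimulatory condition for \(R^{-1}_{\mathcal{B}}\) is the simulatory condition for \(R_{\mathcal{B}}\). So both relations are bisimulatory, and the proof is complete.

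The main obstacle is making sure the operators really match under these identifications. The \(\Box\) and \(\blacklozenge\) appearing in the variant condition live on the opposite sides from those in the simulatory condition. One must check that \(\Box X\) in \(\Powerset*{\At{\mathcal{B}}}\) corresponds to \(\Box \bigsqcup X\), and \(\blacklozenge X\) to \(\blacklozenge \bigsqcup X\), under the isomorphism. This holds because the isomorphism is a CABA isomorphism intertwining \(\blacklozenge\), and hence also its right adjoint \(\Box\). Once this bookkeeping is in place, the argument is purely formal, leaning on the self-duality provided by the variant.
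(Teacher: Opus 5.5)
Your proposal is correct and follows the paper's route. Both compute \(\Variant{R_{\mathcal{B}}} = R^{-1}_{\mathcal{B}}\) and \(\Variant{R^{-1}_{\mathcal{B}}} = R_{\mathcal{B}}\) using atomicity and complete join-primality of atoms, then note that cosimulatoriness of each relation is exactly the simulatory condition of the other, already established in \Cref{lemma:lower-frmsim-dense}; your identification of the operator on \(\Powerset*{\At{\mathcal{B}}}\) via the frame \(a \to a'\) iff \(a' \sqsubseteq \blacklozenge a\) spells out a point the paper leaves implicit.
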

\begin{proof}
  We augment the proof of \Cref{lemma:lower-frmsim-dense} by showing
  that both
  \(R_{\mathcal{B}} : \mathcal{B} \relto
  \Powerset*{\At{\mathcal{B}}}\) and
  \(R_{\mathcal{B}}^{-1} : \Powerset*{\At{\mathcal{B}}} \relto
  \mathcal{B}\), given by \(x \Rel{R_{\mathcal{B}}} X\) iff
  \(x \sqsubseteq \bigsqcup X\) and \(X \Rel{R^{-1}_{\mathcal{B}}} x\)
  iff \(\bigsqcup X \sqsubseteq x\) respectively, are also
  cosimulatory.

  First, \(X \Variant{R_{\mathcal{B}}} x\) iff for every atom \(\{c\} \subseteq
  X\) there exists an atom \(a \sqsubseteq x\) with \(a \Rel{R_{\mathcal{B}}}
  \{c\}\). By definition this means \(a \sqsubseteq c\), and since \(a\) and
  \(c\) are atoms, \(a = c\). Thus every element of \(X\) (all are atoms) is
  below \(x\), or equivalently \(\bigsqcup X \sqsubseteq x\), i.e.\ iff \(X
  \Rel{R^{-1}_{\mathcal{B}}} x\). Hence \(\Variant{R_{\mathcal{B}}} =
  R_{\mathcal{B}}^{-1}\).

  Similarly, \(x \Variant{R_{\mathcal{B}}^{-1}} X\) iff \(x
  \Rel{R_{\mathcal{B}}} X\). Hence \(\Variant{R_{\mathcal{B}}^{-1}} =
  R_{\mathcal{B}}\).

  A relation is cosimulatory precisely when its variant is simulatory, and
  \(R_{\mathcal{B}}\) and \(R_{\mathcal{B}}^{-1}\) are both simulatory by
  \Cref{lemma:lower-frmsim-dense}.
\end{proof}

We thus obtain the following equivalence.

\begin{theorem}
  \(\FRMBISIM \Equiv \CABAOBISIM\).
\end{theorem}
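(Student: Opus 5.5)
The equivalence \(\FRMBISIM \Equiv \CABAOBISIM\) follows from the standard characterisation of equivalences: a functor is an equivalence iff it is full, faithful, and essentially surjective. The functor is \(\Lower : \FRMBISIM \fto \CABAOBISIM\) itself, which the preceding lemma establishes is a functor. Faithfulness, fullness (\Cref{lem:lower-CABAOBISIM-full}) and essential surjectivity (\Cref{lemma:lower-frmbisim-dense}) are exactly the three preceding lemmas. So the proof of the theorem is just the assembly of these facts, using choice to pick the pseudo-inverse. The real content lies in those lemmas, and I would make sure each holds as follows.

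For functoriality, I start from \Cref{lemma:lower-frmsim-functor}. If \(T\) is a bisimulation, then \(\Lower{T}\) is simulatory. For cosimulatory, I use \Cref{lemma:variant-lower}: \(\Variant{\Lower{T}} = \Lower{\Rev{T}}\). Since \(\Rev{T}\) is a simulation, \Cref{lemma:cosimulatory}(iii) holds for \(\Lower{T}\). Identities and composition are inherited from \Cref{lemma:lower-functor}, since composition in both categories is relational composition and the identity is \(\sqsubseteq\). Faithfulness is immediate from \Cref{lem:lower-faithful}, because \(\FRMBISIM\) is a subcategory of \(\REL\) on the underlying sets.

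For fullness, I take a bisimulatory \(Q\). By \Cref{lemma:lower-frmsim-full}, \(Q = \Lower{T}\) for a simulation \(T\), namely \(T = S_Q\) restricted to singletons. It remains to show \(T\) is a cosimulation. Because \(Q\) is cosimulatory, \(\Variant{\Lower{T}}\) satisfies condition (iii) of \Cref{lemma:cosimulatory}, and that lemma's equivalence (iii)\(\Rightarrow\)(i) gives that \(T\) is a cosimulation. I expect this to be the only step needing care: one must check that the cosimulatory condition, stated abstractly via \(\Variant{(-)}\) on CABAs, coincides on powersets with the condition in \Cref{lemma:cosimulatory}. That coincidence is exactly what \Cref{lemma:variant-lower} provides.

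Essential surjectivity is \Cref{lemma:lower-frmbisim-dense}. The relations \(R_{\mathcal{B}}\) and \(R^{-1}_{\mathcal{B}}\) are mutually inverse bisimulatory relations, so every CABAO \(\mathcal{B}\) is isomorphic in \(\CABAOBISIM\) to \(\Lower\) applied to the frame \((\At{\mathcal{B}}, \At{\blacklozenge})\). This completes the equivalence.
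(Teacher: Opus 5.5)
Your proposal is correct and follows the paper: the theorem is obtained by assembling the four preceding lemmas (functoriality, faithfulness, fullness and essential surjectivity of \(\Lower : \FRMBISIM \fto \CABAOBISIM\)), and your sketches of the first three correctly fill in details the paper leaves implicit, reducing cosimulatoriness of \(\Lower{T}\) to condition (iii) of \Cref{lemma:cosimulatory}. The one thing to tighten is the notation \((\At{\mathcal{B}}, \At{\blacklozenge})\): since \(\blacklozenge\) does not send atoms to atoms, state the frame explicitly as \(\At{\mathcal{B}}\) with \(a \to a'\) iff \(a' \sqsubseteq \blacklozenge a\).
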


As before, this is not explicitly a duality, but we can obtain one from the
formal duality \((-)^\dagger : \Op{\FRMBISIM} \Equiv \FRMBISIM\).

\begin{theorem}[Relational Thomason duality]
  \label{theorem:rel-thomason-bisim}
  \(\Op{\FRMBISIM} \Equiv \CABAOBISIM\).
\end{theorem}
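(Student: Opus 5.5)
The plan is to compose the equivalence \(\FRMBISIM \Equiv \CABAOBISIM\) just established with the formal duality \((-)^\dagger : \Op{\FRMBISIM} \Equiv \FRMBISIM\). The final result then follows immediately, since a composite of equivalences is an equivalence. So the work has two parts: confirm that the dagger really is an isomorphism of categories \(\Op{\FRMBISIM} \cong \FRMBISIM\), and confirm that the preceding theorem supplies an equivalence \(\FRMBISIM \Equiv \CABAOBISIM\).

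\textbf{Step 1: the dagger.} I take \((-)^\dagger\) to be the identity on objects, so each Kripke frame \((X,R)\) is sent to itself. On morphisms it sends a bisimulation \(Q : (X,R) \relto (Y,S)\) to \(Q^\dagger : (Y,S) \relto (X,R)\). This is well defined because the opposite of a simulation is a cosimulation and vice versa, so the opposite of a bisimulation is again a bisimulation. Functoriality on \(\Op{\FRMBISIM}\) comes from two standard identities in \(\REL\): \((Q \then P)^\dagger = P^\dagger \then Q^\dagger\) and \(\Id^\dagger = \Id\). Since \(Q^{\dagger\dagger} = Q\), the functor is its own inverse, giving an isomorphism of categories.

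\textbf{Step 2: the equivalence \(\FRMBISIM \Equiv \CABAOBISIM\).} This is assembled from the four preceding lemmas: \(\Lower\) is a functor, faithful, full (\Cref{lem:lower-CABAOBISIM-full}), and essentially surjective (\Cref{lemma:lower-frmbisim-dense}). A fully faithful, essentially surjective functor is an equivalence, so \(\Lower : \FRMBISIM \to \CABAOBISIM\) is one. I will not re-prove these lemmas here.

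\textbf{Step 3: compose.} The composite \(\Lower \circ (-)^\dagger : \Op{\FRMBISIM} \to \CABAOBISIM\) is the required equivalence. Explicitly, it sends a bisimulation \(Q\), viewed in the opposite category, to \(\Lower{Q^\dagger}\). By \Cref{lemma:variant-lower}, \(\Lower{Q^\dagger} = \Variant{\Lower{Q}}\), so the composite can equally be described as \(\Variant{\Lower{-}}\). This mirrors the self-duality \(\Variant{(-)}\) on \(\CABAREL\).

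The only step needing care is the well-definedness of the dagger on \(\FRMBISIM\), i.e.\ that the opposite of a bisimulation is a bisimulation. This is immediate because the forth and back conditions are exchanged under \((-)^\dagger\). There is no real obstacle; the substance lies entirely in the preceding fullness lemma for bisimulatory relations.
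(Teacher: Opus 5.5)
Your proposal is correct and follows the paper's route. The paper obtains the theorem by composing the equivalence \(\Lower : \FRMBISIM \Equiv \CABAOBISIM\) (from the functor, faithfulness, fullness, and essential-surjectivity lemmas) with the formal dagger duality \(\Op{\FRMBISIM} \cong \FRMBISIM\), giving \(\Lower \circ (-)^\dagger\) as in its commutative diagram. Your extra identification of this composite with \(\Variant{\Lower{-}}\) via \Cref{lemma:variant-lower} is also correct.
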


We can show that \Cref{theorem:rel-thomason-bisim} is an extension of the
Thomason duality; as with \eqref{diagram:reltarski-extends-tarski}, there is a
commutative diagram of functors
\begin{equation}
  \label{diagram:relthomason-extends-thomason-open}
  \begin{tikzcd}
    \Op{\FRMOPEN} \ar[r,hook,"\Op{\Graph}"] \ar[d,"\Powerset" left, "\Equiv" right]
    & \Op{\FRMBISIM} \ar[d,"\Lower \circ (-)^\dagger" right, "\Equiv" left] \\
    \CABAO \ar[r,hook,swap,"j"] & \CABAOBISIM
  \end{tikzcd}
\end{equation}
where $\Graph$ and $j$ are faithful and injective-on-objects.

We define \(j\) as before, by taking each CABAO
\((\mathcal{B}, \Box_{\mathcal{B}})\) to itself, and each complete
Boolean homomorphism \(\Pre{f} : \mathcal{B} \to \mathcal{B}'\) to
\(j(\Pre{f}) : \mathcal{B} \relto \mathcal{B}'\) given by
\(b \Rel{j(\Pre{f})} b'\) iff \(b \sqsubseteq f_!(b')\).

\begin{lemma}
  \(j : \CABAO \fto \CABAOBISIM\) is faithful and injective-on-objects.
\end{lemma}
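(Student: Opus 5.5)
Most of the work is already done by \Cref{lemma:j-faithful}. There we showed that for any complete Boolean homomorphism \(\Pre{f}\), the relation \(j(\Pre{f})\) is directionally atomic. We also showed that \(j\) preserves identities and composition, is faithful, and is the identity on objects. Since \(\CABAO\) is a (non-full) subcategory of \(\CABA\) and \(\CABAOBISIM\) is a subcategory of \(\CABAREL\) with the same composition and identities, all of this transfers verbatim. So the only new content is to show that \(j(\Pre{f})\) is bisimulatory whenever \(\Pre{f}\) is a morphism of \(\CABAO\), i.e.\ whenever \(\Pre{f} \circ \Box = \Box \circ \Pre{f}\).

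\textbf{Simulatory.} Take the adjoints \(\LKan{f} \dashv \Pre{f} \dashv \RKan{f}\) and \(\blacklozenge \dashv \Box\) on both sides. From \(\Pre{f}\circ\Box_{\mathcal{B}} = \Box_{\mathcal{B}'}\circ\Pre{f}\), taking left adjoints of both composites gives \(\blacklozenge_{\mathcal{B}}\circ\LKan{f} = \LKan{f}\circ\blacklozenge_{\mathcal{B}'}\). Note that \(j(\Pre{f})\) goes from \(\mathcal{B}\) to \(\mathcal{B}'\) and uses \(\LKan{f} : \mathcal{B}' \to \mathcal{B}\); I must keep the orientation consistent, since the map \(\Pre{f}\) itself runs \(\mathcal{B}\to\mathcal{B}'\). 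Now suppose \(a \Rel{j(\Pre{f})} \Box b\), i.e.\ \(a \sqsubseteq \LKan{f}(\Box b)\). Applying \(\blacklozenge\) gives
\[\blacklozenge a \sqsubseteq \blacklozenge\LKan{f}\Box b = \LKan{f}\blacklozenge\Box b \sqsubseteq \LKan{f}(b),\]
using the commutation and then the counit \(\blacklozenge\Box \sqsubseteq \mathrm{id}\). Hence \(\blacklozenge a \Rel{j(\Pre{f})} b\). This direction only needs the inequality \(\blacklozenge\LKan{f} \sqsubseteq \LKan{f}\blacklozenge\), which is the weak (\(\CABAOW\)) half.

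\textbf{Cosimulatory.} First compute the variant. We have \(b \Variant{j(\Pre{f})} a\) iff every atom \(a'\sqsubseteq b\) has some atom \(c\sqsubseteq a\) with \(c \sqsubseteq \LKan{f}(a')\). Now \(\LKan{f}(a')\) is an atom by \Cref{lemma:left-adj-atoms}, so this says \(\LKan{f}(a') \sqsubseteq a\) for all atoms \(a' \sqsubseteq b\). Since \(\LKan{f}\) preserves joins and \(\mathcal{B}'\) is atomic, this is equivalent to \(\LKan{f}(b) \sqsubseteq a\), i.e.\ \(b \sqsubseteq \Pre{f}(a)\). So \(\Variant{j(\Pre{f})}\) is the relation \(b \sqsubseteq \Pre{f}(a)\).

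The cosimulatory condition then reads: \(b \sqsubseteq \Pre{f}(\Box a)\) implies \(\blacklozenge b \sqsubseteq \Pre{f}(a)\). Using the hypothesis, \(\Pre{f}(\Box a) = \Box\Pre{f}(a)\), so the premise is \(b \sqsubseteq \Box \Pre{f}(a)\). This is equivalent to \(\blacklozenge b \sqsubseteq \Pre{f}(a)\) by the adjunction. This direction genuinely uses the reverse inequality \(\Box\Pre{f} \sqsubseteq \Pre{f}\Box\), i.e.\ openness.

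\textbf{Main obstacle.} The step needing care is the computation of the variant of \(j(\Pre{f})\). It relies on \Cref{lemma:left-adj-atoms} and complete join-primality of atoms, and on tracking the direction of the relation relative to \(\Pre{f}\). Once \(\Variant{j(\Pre{f})}\) is identified with \(b \sqsubseteq \Pre{f}(a)\), both conditions reduce to one-line adjunction arguments. Faithfulness and injectivity on objects follow exactly as in \Cref{lemma:j-faithful}.
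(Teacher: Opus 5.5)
Your proof is correct and follows the paper's argument: you inherit everything except bisimulatoriness from \Cref{lemma:j-faithful}, get simulatory from the commutation of $\blacklozenge$ with $\LKan{f}$ plus the counit, and get cosimulatory by identifying $\Variant{j(\Pre{f})}$ with $b \sqsubseteq \Pre{f}(a)$ using that $\LKan{f}$ sends atoms to atoms. Your derivation of $\blacklozenge \circ \LKan{f} = \LKan{f} \circ \blacklozenge$ by taking left adjoints also makes explicit a step the paper attributes only to ``properties of the adjunction''.

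Your side remarks about which half of the equality each direction uses are swapped, though. By the mate correspondence, $\blacklozenge\LKan{f} \sqsubseteq \LKan{f}\blacklozenge$ is equivalent to $\Box\Pre{f} \sqsubseteq \Pre{f}\Box$, which is the openness half; the weak $\CABAOW$ condition only gives $\LKan{f}\blacklozenge \sqsubseteq \blacklozenge\LKan{f}$. Conversely, the cosimulatory step from $b \sqsubseteq \Pre{f}(\Box a)$ to $b \sqsubseteq \Box\Pre{f}(a)$ uses only the weak half $\Pre{f}\Box \sqsubseteq \Box\Pre{f}$. This matches the frame picture: for $j(\Pre{f})$, being simulatory is the back condition and being cosimulatory is transition preservation.
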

\begin{proof}
  By extension of \cref{lemma:j-faithful}, we know that \(j(\Pre{f})\) is a
  directionally atomic relation, and that \(j\) preserves identities and
  composition, and that it is faithful. It remains only to show that
  \(j(\Pre{f})\) is bisimulatory.

  First, to show that \(j(\Pre{f})\) is simulatory, suppose that
  \(x \Rel{j(\Pre{f})} \Box y\), that is, \(x \sqsubseteq f_!(\Box y)\).
  Then
  \[
    x
    \sqsubseteq f_!(\Box y)
    \sqsubseteq \Box \blacklozenge f_!(\Box y)
    \sqsubseteq \Box f_!(\blacklozenge \Box y)
    \sqsubseteq \Box f_!(y)
  \]
  by the properties of the adjunction \(\blacklozenge \Adjoint \Box\). Hence
  \(\blacklozenge x \sqsubseteq f_!(y)\), so \(j(\Pre{f})\) is simulatory.

  Second, to show that \(j(\Pre{f})\) is cosimulatory, notice first that
  \(b \Rel{\Variant{j(\Pre{f})}} a\) iff, for any atom \(y \sqsubseteq b\) there
  exists some atom \(x \sqsubseteq a\) for which \(x \sqsubseteq f_!(y)\), which
  is the same as \(x = f_!(y)\) by the properties of atoms and recalling that
  \(f_!\) preserves atoms. Therefore, \(b \Rel{\Variant{j(\Pre{f})}} a\) iff
  \(b \sqsubseteq \Pre{f}(a)\). Hence, if
  \(b \sqsubseteq \Pre{f}(\Box a) = \Box \Pre{f}(a)\), then
  \(\blacklozenge b \sqsubseteq \Pre{f}(a)\) by adjunction, and \(j(\Pre{f})\) is
  cosimulatory. The functor is injective-on-objects because it is the identity
  on objects.
\end{proof}

As usual, \(\Graph\) takes each frame \((X, R)\) to itself and each open map
\(f : (X, R) \to (Y, S)\) to the graph \(\Graph{f} : X \relto Y\) defined by
\(x \Graph{f} y\) iff \(f(x) = y\).

\begin{lemma}
  The functor \(\Graph : \FRMOPEN \fto \FRMBISIM\) is faithful and
  injective-on-objects.
\end{lemma}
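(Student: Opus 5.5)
The plan has three parts: check that $\Graph$ lands in $\FRMBISIM$, check that it is a functor, and then read off faithfulness and injectivity on objects.

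\textbf{Well-definedness.} We must show that for an open map $f : (X, R) \to (Y, S)$ the graph $\Graph{f}$ is a bisimulation.
\begin{itemize}
\item The forth condition holds because $f$ preserves transitions. Suppose $x \Graph{f} y$ (so $y = f(x)$) and $x \Transition{R} x'$. Then $y = f(x) \Transition{S} f(x')$ and $x' \Graph{f} f(x')$.
\item The back condition is exactly openness. Suppose $x \Graph{f} y$ and $y \Transition{S} y'$. Since $f(x) \Transition{S} y'$, openness gives $x'$ with $x \Transition{R} x'$ and $f(x') = y'$, that is, $x' \Graph{f} y'$.
\end{itemize}
This is the observation already made in the text that open maps are precisely the functions whose graphs are bisimulations.

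\textbf{Functoriality.} We show that $\Graph$ preserves identities and composition in $\FRMBISIM$, using the diagrammatic order for relations.
\begin{itemize}
\item The graph of $\id_X$ is the identity relation $\Id_X$. This relation is a bisimulation and is the identity of $\FRMBISIM$.
\item For open maps $f : X \to Y$ and $g : Y \to Z$, we have $x \Rel*{\Graph{f} \then \Graph{g}} z$ iff there is $y$ with $f(x) = y$ and $g(y) = z$. Since $f$ is a function, this holds iff $g(f(x)) = z$. Hence $\Graph{f} \then \Graph{g} = \Graph{(g \circ f)}$.
\end{itemize}
These computations are the same as for $\Graph : \SET \fto \REL$. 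Composition in $\FRMOPEN$ and $\FRMBISIM$ is inherited from $\SET$ and $\REL$ respectively, so nothing new is required.

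\textbf{Faithfulness and injectivity on objects.} Suppose $\Graph{f} = \Graph{g}$ for open maps $f, g : (X,R) \to (Y,S)$. For each $x$ we have $x \Graph{f} f(x)$, hence $x \Graph{g} f(x)$, so $g(x) = f(x)$. Thus $f = g$. Injectivity on objects is immediate, since $\Graph$ acts as the identity on frames.

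The only step with any content is the bisimulation check in the first part. Even that is a direct unfolding of the definitions of transition preservation and openness, so I expect no real obstacle.
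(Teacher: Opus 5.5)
Your proposal is correct and takes the same approach as the paper. The paper's proof makes the same four observations: the graph of an open map is a bisimulation, the graph construction preserves identities and composition, a function is determined by its graph, and the functor is the identity on objects. You simply write each of these checks out in full.
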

\begin{proof}
  \(\Graph\) clearly takes open maps to bisimulations, preserves identity and
  composition. It is faithful because any function is determined entirely by its
  graph, and injective-on-objects because it is the identity on objects.
\end{proof}

Finally, a direct calculation shows that
\eqref{diagram:relthomason-extends-thomason-open} commutes.

\subsection{Sketch of a Formal System}

Recall the formal system of the judgment $\Simulates{}{Q}{}$ for a relation $Q :
X \relto Y$ that we sketched in \Cref{section:relational-tarski}. Now assume
that \((X, R)\) and \((Y, S)\) are Kripke frames. We can extend this system to a
modal logic where the logics over $X$ and $Y$ have corresponding modalities
$\blacklozenge$ and $\Box$. If $Q$ is a simulation then by
\Cref{lemma:sim-black,lemma:simulatory} we can extend the system with the rules
\begin{mathpar}
  \inferrule{
    \Simulates{\varphi}{Q}{\psi}
  }{
    \Simulates{\blacklozenge \varphi}{Q}{\blacklozenge \psi}
  }
  \and
  \inferrule{
    \Simulates{\varphi}{Q}{\Box \psi}
  }{
    \Simulates{\blacklozenge \varphi}{Q}{\psi}
  }
\end{mathpar}
If $Q$ is a cosimulation, then by
\Cref{lemma:cosim-black,lemma:cosimulatory,lemma:variant-lower} we
obtain the rules
\begin{mathpar}
  \inferrule{
    \Simulates{\varphi}{\Rev{Q}}{\psi}
  }{
    \Simulates{\blacklozenge \varphi}{\Rev{Q}}{\blacklozenge \psi}
  }
  \and
  \inferrule{
    \Simulates{\varphi}{\Rev{Q}}{\Box \psi}
  }{
    \Simulates{\blacklozenge \varphi}{\Rev{Q}}{\psi}
  }
\end{mathpar}
which could also be obtained just by noting $\Rev{Q}$ is a
simulation.

\subsection{A Worked Example: Bisimulation Between Buffer Systems}

We will now show how to use these rules to reason over a particular
bisimulation. The two frames both model a \emph{buffer} holding a single natural
number. However, the two buffers have \emph{internal branching}: the first has
two different `ways' of holding a number (the left and right slots), while the
second one has three (slots A, B, and C). This might happen if the buffer is
provided by a distributed cluster of machines whose internal structure is not
observable.

Define $(X, \to_X)$ by letting
\(
  X = \Set{\textsf{empty}_X} \cup \SetComp{\textsf{L}(n), \textsf{R}(n)}{n \in \mathbb{N}}
\)
with 
\[
  \textsf{empty}_X \to_X \textsf{L}(n)
  \qquad
  \textsf{empty}_X \to_X \textsf{R}(n)
  \qquad
  \textsf{L}(n) \to_X \textsf{empty}_X
  \qquad
  \textsf{R}(n) \to_X \textsf{empty}_X
\]
for all \(n \in \mathbb{N}\). Similarly, define $(Y, \to_Y)$ by 
\(
  Y = \Set{\textsf{empty}_Y} \cup \SetComp{\textsf{A}(n), \textsf{B}(n), \textsf{C}(n)}{n \in \mathbb{N}}
\)
with
\[
  \textsf{empty}_Y \to_Y \textsf{A}(n)
  \qquad
  \textsf{empty}_Y \to_Y \textsf{B}(n)
  \qquad
  \textsf{empty}_Y \to_Y \textsf{C}(n)
\]
\[
  \textsf{A}(n) \to_Y \textsf{empty}_Y
  \qquad
  \textsf{B}(n) \to_Y \textsf{empty}_Y
  \qquad
  \textsf{C}(n) \to_Y \textsf{empty}_Y
\]
for all \(n \in \mathbb{N}\). There is a bisimulation \(Q : X \relto Y\) between
these frames given by
\[
  Q = \Set{(\textsf{empty}_X, \textsf{empty}_Y)} \cup \SetComp{(w, v)}{n \in \mathbb{N},\ w \in F_X(n),\ v \in F_Y(n)}
\]
where \(F_X(n) = \Set{\textsf{L}(n), \textsf{R}(n)}\) and \(F_Y(n) =
\Set{\textsf{A}(n), \textsf{B}(n), \textsf{C}(n)}\). In other words, full states
holding the same value are related, and so are empty states. This is evidently a
non-functional bisimulation.

We define the following predicates, parametrised by \(n \in \mathbb{N}\). On
\(X\) let \(\mathit{empty}_X = \Set{\textsf{empty}_X}\), \(L_n =
\Set{\textsf{L}(n)}\), and \(R_n = \Set{\textsf{R}(n)}\). On \(Y\) let
\(\mathit{empty}_Y = \Set{\textsf{empty}_Y}\).

We will use the following facts as the `background theory':
\begin{align*}
  \blacklozenge\, \mathit{empty}_X &= \textstyle\bigvee_{n} F_X(n)
  &
  \Box\, \mathit{empty}_X &= \textstyle\bigvee_n F_X(n)
  &
  \Simulates{L_n}{Q}{F_Y(n)}
  \\
  \blacklozenge\, \mathit{empty}_Y &= \textstyle\bigvee_n F_Y(n)
  &
  \Box\, \mathit{empty}_Y &= \textstyle\bigvee_n F_Y(n)
  &
  \Simulates{R_n}{Q}{F_Y(n)}
\end{align*}
The first two columns arise by direct observation. The last column holds because
every left/right state is simulated by a full state holding the same value.

We derive the fact that if the previous state in the first buffer was a left or right one holding $n$, then the current state is simulated by an empty state in the second buffer:
\[
  \Simulates{(\blacklozenge\, L_n \lor \blacklozenge\, R_n)}{Q}{\mathit{empty}_Y}
\]
The derivation proceeds in two symmetric branches, combined by disjunction:
\begin{mathpar}
  \inferrule*{
    \inferrule*{
      \inferrule*{
        \Simulates{L_n}{Q}{F_Y(n)}
        \\
        F_Y(n) \vdash \Box\, \mathit{empty}_Y
      }{
        \Simulates{L_n}{Q}{\Box\, \mathit{empty}_Y}
      }
    }{
      \Simulates{\blacklozenge L_n}{Q}{\mathit{empty}_Y}
    }
    \\
    \inferrule*{
      \inferrule*{
        \Simulates{R_n}{Q}{F_Y(n)}
        \\
        F_Y(n) \vdash \Box\, \mathit{empty}_Y
      }{
        \Simulates{R_n}{Q}{\Box\, \mathit{empty}_Y}
      }
    }{
      \Simulates{\blacklozenge R_n}{Q}{\mathit{empty}_Y}
    }
  }{
    \Simulates{(\blacklozenge L_n \lor \blacklozenge R_n)}{Q}{\mathit{empty}_Y}
  }
\end{mathpar}

\section{Related Work}\label{section:related-work}

A few relational dualities have been previously described in the literature.
Most are of the form $\Op{\CC} \Equiv \DD$ where $\CC$ is a category whose
morphisms are relations, whereas $\DD$ is a category of algebras with some form
of \emph{hemimorphism}, i.e.\ a morphism preserving most---but not all!---of
the logical structure. The earliest duality of this form is $\REL \Equiv \CABAJoin$.
This duality has been rediscovered multiple times, but is likely due to
J\'{o}nsson \cite{jonsson_1951}. Kishida \cite[\S 2.3]{kishida_2018} argues that
this extends to a `2-duality,' as both of these are (strict) 2-categories.
Halmos \cite{halmos_1956} extended it to the continuous case, i.e.\ a duality
between Stone spaces with continuous relations on the one hand, and Boolean
algebras with hemimorphisms on the other. Cignoli et al.\ \cite{cignoli_1991} do
something similar for Priestley spaces and continuous monotone relations. Hofmann
and Nora \cite{hofmann_2015} have proposed a general framework for such
dualities, obtaining the relational side as the Kleisli category of a suitable
monad. In later work they extended such dualities to metric structures and
quantale-enriched categories \cite{hofmann_2023}.

Kurz, Moshier, and Jung \cite{kurz_2023} present dualities that are
much closer to the flavour we employ in this paper. They achieve this
by working in an order-enriched setting, where relations can be
presented as both spans and cospans in a 2-categorical manner. They
use this to extend (well-behaved) dualities to dualities between
categories of relations, and even adjunctions of categories to
adjunctions between framed bicategories \cite{shulman_2008}. For
example, if their recipe is applied to the category $\POSET$ of posets
and monotone functions it lifts a bimodule $R : X \relto Y$ to the
relation $\mathbb{2}^R : [X, \mathbb{2}] \relto [Y, \mathbb{2}]$
between upper sets that is defined by letting $A \Rel{\mathbb{2}^R} B$
just if $x \in A$ and $x \Rel{R} y$ implies $y \in B$. The type of
lifting they obtain is very different: as pointed out by one of our
reviewers, $A \Rel{\mathbb{2}^R} B$ just if $R[A] \subseteq B$,
whereas $A \Lower{R} B$ just if $A \subseteq R^{-1}[B]$.

Birkmann, Urbat, and Milius \cite{birkmann_2024} present extensions of
categorical dualities through monoidal adjunctions and apply them to algebraic
language theories. As a corollary they obtain some relational dualities, e.g.\
between well-behaved relations of profinite ordered monoids and natural
morphisms of residuation algebras.

Malacaria \cite{malacaria_1995} presents the Thomason duality in a new light,
and shows how it can be used to give bisimulation an algebraic meaning. In
particular, the main theorem shows that two Kripke frames are bisimilar just if
their associated (dual) algebras have an isomorphic subalgebra.


\bibliography{main}

\begin{thebibliography}{10}

\bibitem{abramsky_1991}
Samson Abramsky.
\newblock Domain theory in logical form.
\newblock {\em Annals of Pure and Applied Logic}, 51(1):1--77, 1991.
\newblock \href {https://doi.org/10.1016/0168-0072(91)90065-T}
  {\path{doi:10.1016/0168-0072(91)90065-T}}.

\bibitem{awodey_2010}
Steve Awodey.
\newblock {\em Category Theory}.
\newblock Oxford Logic Guides. Oxford University Press, 2010.

\bibitem{birkmann_2024}
Fabian Birkmann, Henning Urbat, and Stefan Milius.
\newblock Monoidal extended stone duality.
\newblock In Naoki Kobayashi and James Worrell, editors, {\em Foundations of
  Software Science and Computation Structures}, volume 14574 of {\em Lecture
  Notes in Computer Science}, pages 144--165. Springer Nature Switzerland,
  2024.
\newblock \href {https://doi.org/10.1007/978-3-031-57228-9_8}
  {\path{doi:10.1007/978-3-031-57228-9_8}}.

\bibitem{blackburn_2001}
Patrick Blackburn, Maarten de~Rijke, and Yde Venema.
\newblock {\em Modal Logic}.
\newblock Cambridge Tracts in Theoretical Computer Science. Cambridge
  University Press, 2001.
\newblock \href {https://doi.org/10.1017/CBO9781107050884}
  {\path{doi:10.1017/CBO9781107050884}}.

\bibitem{chagrov_1996}
Alexander Chagrov and Michael Zakharyaschev.
\newblock {\em Modal Logic}.
\newblock Number~35 in Oxford Logic Guides. Oxford University Press, 1996.
\newblock \href {https://doi.org/10.1093/oso/9780198537793.001.0001}
  {\path{doi:10.1093/oso/9780198537793.001.0001}}.

\bibitem{cignoli_1991}
R.~Cignoli, S.~Lafalce, and A.~Petrovich.
\newblock Remarks on priestley duality for distributive lattices.
\newblock {\em Order}, 8(3):299--315, 1991.
\newblock \href {https://doi.org/10.1007/BF00383451}
  {\path{doi:10.1007/BF00383451}}.

\bibitem{davey_2002}
B.~A. Davey and H.~A. Priestley.
\newblock {\em Introduction to Lattices and Order}.
\newblock Cambridge University Press, 2nd edition, 2002.
\newblock \href {https://doi.org/10.1017/CBO9780511809088}
  {\path{doi:10.1017/CBO9780511809088}}.

\bibitem{dijkstra_1976}
Edsger~W. Dijkstra.
\newblock {\em A Discipline of Programming}.
\newblock Prentice-Hall, 1976.

\bibitem{esakia_2019}
Leo Esakia.
\newblock {\em Heyting Algebras: Duality Theory}, volume~50 of {\em Trends in
  Logic}.
\newblock Springer International Publishing, 2019.
\newblock \href {https://doi.org/10.1007/978-3-030-12096-2}
  {\path{doi:10.1007/978-3-030-12096-2}}.

\bibitem{gehrke_2024}
Mai Gehrke and Sam {van Gool}.
\newblock {\em Topological Duality for Distributive Lattices: Theory and
  Applications}.
\newblock Number~61 in Cambridge Tracts in Theoretical Computer Science.
  Cambridge University Press, 2024.
\newblock URL: \url{http://arxiv.org/abs/2203.03286}.

\bibitem{halmos_1956}
Paul~R. Halmos.
\newblock Algebraic logic, {I.} {Monadic} boolean algebras.
\newblock {\em Compositio Mathematica}, 12:217--249, 1954-1956.
\newblock Publisher: Kraus Reprint.
\newblock URL: \url{https://www.numdam.org/item/CM_1954-1956__12__217_0/}.

\bibitem{hansoul_1983}
G.~Hansoul.
\newblock A duality for boolean algebras with operators.
\newblock {\em Algebra Universalis}, 17(1):34--49, 1983.
\newblock \href {https://doi.org/10.1007/BF01194512}
  {\path{doi:10.1007/BF01194512}}.

\bibitem{plotkin_1979}
M.~C.~B. Hennessy and G.~D. Plotkin.
\newblock Full abstraction for a simple parallel programming language.
\newblock In Jiří Bečvář, editor, {\em Mathematical Foundations of
  Computer Science 1979}, volume~74 of {\em Lecture Notes in Computer Science},
  pages 108--120. Springer Berlin Heidelberg, 1979.
\newblock \href {https://doi.org/10.1007/3-540-09526-8_8}
  {\path{doi:10.1007/3-540-09526-8_8}}.

\bibitem{heunen_2019}
Chris Heunen and Jamie Vicary.
\newblock {\em Categories for Quantum Theory: An Introduction}.
\newblock Oxford University Press, 2019.
\newblock \href {https://doi.org/10.1093/oso/9780198739623.001.0001}
  {\path{doi:10.1093/oso/9780198739623.001.0001}}.

\bibitem{hoare_1969}
C.~A.~R. Hoare.
\newblock An axiomatic basis for computer programming.
\newblock {\em Communications of the {ACM}}, 12(10):576--580, 1969.
\newblock \href {https://doi.org/10.1145/363235.363259}
  {\path{doi:10.1145/363235.363259}}.

\bibitem{hofmann_2015}
Dirk Hofmann and Pedro Nora.
\newblock Dualities for modal algebras from the point of view of triples.
\newblock {\em Algebra universalis}, 73(3):297--320, 2015.
\newblock \href {https://doi.org/10.1007/s00012-015-0324-5}
  {\path{doi:10.1007/s00012-015-0324-5}}.

\bibitem{hofmann_2023}
Dirk Hofmann and Pedro Nora.
\newblock Duality theory for enriched priestley spaces.
\newblock {\em Journal of Pure and Applied Algebra}, 227(3):107231, 2023.
\newblock \href {https://doi.org/10.1016/j.jpaa.2022.107231}
  {\path{doi:10.1016/j.jpaa.2022.107231}}.

\bibitem{jacobs_2016}
Bart Jacobs.
\newblock {\em Introduction to Coalgebra: Towards Mathematics of States and
  Observation}.
\newblock Cambridge University Press, 2016.
\newblock \href {https://doi.org/10.1017/CBO9781316823187}
  {\path{doi:10.1017/CBO9781316823187}}.

\bibitem{johnstone_1982}
Peter~T. Johnstone.
\newblock {\em Stone Spaces}.
\newblock Number~3 in Cambridge Studies in Advanced Mathematics. Cambridge
  University Press, 1982.

\bibitem{jonsson_1951}
Bjarni Jonsson and Alfred Tarski.
\newblock Boolean algebras with operators. part i.
\newblock {\em American Journal of Mathematics}, 73(4):891, 1951.
\newblock \href {https://doi.org/10.2307/2372123} {\path{doi:10.2307/2372123}}.

\bibitem{kavvos_2024a}
G.~A. Kavvos.
\newblock {Two-Dimensional Kripke Semantics I: Presheaves}.
\newblock In Jakob Rehof, editor, {\em 9th International Conference on Formal
  Structures for Computation and Deduction (FSCD 2024)}, volume 299 of {\em
  Leibniz International Proceedings in Informatics (LIPIcs)}, pages
  14:1--14:23, Dagstuhl, Germany, 2024. Schloss Dagstuhl -- Leibniz-Zentrum
  f{\"u}r Informatik.
\newblock \href {https://doi.org/10.4230/LIPIcs.FSCD.2024.14}
  {\path{doi:10.4230/LIPIcs.FSCD.2024.14}}.

\bibitem{kishida_2018}
Kohei Kishida.
\newblock Categories and modalities.
\newblock In Elaine Landry, editor, {\em Categories for the Working
  Philosopher}. Oxford University Press, 2018.
\newblock \href {https://doi.org/10.1093/oso/9780198748991.003.0009}
  {\path{doi:10.1093/oso/9780198748991.003.0009}}.

\bibitem{kripke_1963}
Saul Kripke.
\newblock Semantical {Considerations} on {Modal} {Logic}.
\newblock {\em Acta Philosophica Fennica}, 16:83--94, 1963.
\newblock \href {https://doi.org/10.1002/malq.19630090502}
  {\path{doi:10.1002/malq.19630090502}}.

\bibitem{kripke_1963b}
Saul~A. Kripke.
\newblock Semantical {Analysis} of {Modal} {Logic} {I}. {Normal} {Modal}
  {Propositional} {Calculi}.
\newblock {\em Zeitschrift für Mathematische Logik und Grundlagen der
  Mathematik}, 9(5-6):67--96, 1963.
\newblock \href {https://doi.org/10.1002/malq.19630090502}
  {\path{doi:10.1002/malq.19630090502}}.

\bibitem{kripke_1965b}
Saul~A. Kripke.
\newblock Semantical {Analysis} of {Intuitionistic} {Logic} {I}.
\newblock In J.~N. Crossley and M.~A.~E. Dummett, editors, {\em Formal
  {Systems} and {Recursive} {Functions}}, volume~40 of {\em Studies in {Logic}
  and the {Foundations} of {Mathematics}}, pages 92--130. Elsevier, 1965.
\newblock \href {https://doi.org/10.1016/S0049-237X(08)71685-9}
  {\path{doi:10.1016/S0049-237X(08)71685-9}}.

\bibitem{kurz_2023}
Alexander Kurz, Andrew Moshier, and Achim Jung.
\newblock Stone {Duality} for {Relations}.
\newblock In Alessandra Palmigiano and Mehrnoosh Sadrzadeh, editors, {\em
  Samson {Abramsky} on {Logic} and {Structure} in {Computer} {Science} and
  {Beyond}}, pages 159--215. Springer International Publishing, Cham, 2023.
\newblock \href {https://doi.org/10.1007/978-3-031-24117-8_5}
  {\path{doi:10.1007/978-3-031-24117-8_5}}.

\bibitem{kurz_2012}
Alexander Kurz and Jiri Rosicky.
\newblock Strongly complete logics for coalgebras.
\newblock {\em Logical Methods in Computer Science}, Volume 8, Issue 3:1231,
  2012.
\newblock \href {https://doi.org/10.2168/LMCS-8(3:14)2012}
  {\path{doi:10.2168/LMCS-8(3:14)2012}}.

\bibitem{kurz_2016}
Alexander Kurz and Jiří Velebil.
\newblock Relation lifting, a survey.
\newblock {\em Journal of Logical and Algebraic Methods in Programming},
  85(4):475--499, 2016.
\newblock \href {https://doi.org/10.1016/j.jlamp.2015.08.002}
  {\path{doi:10.1016/j.jlamp.2015.08.002}}.

\bibitem{lawvere_1970}
F.~William Lawvere.
\newblock Equality in hyperdoctrines and comprehension schema as an adjoint
  functor.
\newblock In {\em Proceedings of the {AMS} Symposium on Pure Mathematics},
  volume~17, pages 1--14, 1970.
\newblock URL: \url{https://ncatlab.org/nlab/files/LawvereComprehension.pdf}.

\bibitem{levy_2011}
Paul~Blain Levy.
\newblock Similarity quotients as final coalgebras.
\newblock In Martin Hofmann, editor, {\em Foundations of Software Science and
  Computational Structures}, volume 6604 of {\em Lecture Notes in Computer
  Science}, pages 27--41. Springer Berlin Heidelberg, 2011.
\newblock \href {https://doi.org/10.1007/978-3-642-19805-2_3}
  {\path{doi:10.1007/978-3-642-19805-2_3}}.

\bibitem{mac_lane_1978}
Saunders Mac~Lane.
\newblock {\em Categories for the Working Mathematician}, volume~5 of {\em
  Graduate Texts in Mathematics}.
\newblock Springer New York, 2 edition, 1978.
\newblock \href {https://doi.org/10.1007/978-1-4757-4721-8}
  {\path{doi:10.1007/978-1-4757-4721-8}}.

\bibitem{malacaria_1995}
Pasquale Malacaria.
\newblock Studying equivalences of transition systems with algebraic tools.
\newblock {\em Theoretical Computer Science}, 139(1-2):187--205, 1995.
\newblock \href {https://doi.org/10.1016/0304-3975(94)00047-M}
  {\path{doi:10.1016/0304-3975(94)00047-M}}.

\bibitem{mellies_2016}
Paul-André Melliès and Noam Zeilberger.
\newblock A bifibrational reconstruction of lawvere's presheaf hyperdoctrine.
\newblock In {\em Proceedings of the 31st Annual {ACM}/{IEEE} Symposium on
  Logic in Computer Science}, pages 555--564. Association for Computing
  Machinery, 2016.
\newblock \href {https://doi.org/10.1145/2933575.2934525}
  {\path{doi:10.1145/2933575.2934525}}.

\bibitem{nielsen_1981}
Mogens Nielsen, Gordon Plotkin, and Glynn Winskel.
\newblock Petri nets, event structures and domains, {Part I}.
\newblock {\em Theoretical Computer Science}, 13(1):85--108, 1981.
\newblock \href {https://doi.org/10.1016/0304-3975(81)90112-2}
  {\path{doi:10.1016/0304-3975(81)90112-2}}.

\bibitem{plotkin_1976}
Gordon~D. Plotkin.
\newblock A powerdomain construction.
\newblock {\em {SIAM} Journal on Computing}, 5(3):452--487, 1976.
\newblock \href {https://doi.org/10.1137/0205035} {\path{doi:10.1137/0205035}}.

\bibitem{riehl_2016}
Emily Riehl.
\newblock {\em Category {Theory} in {Context}}.
\newblock Dover Publications, 2016.
\newblock URL: \url{http://www.math.jhu.edu/~eriehl/context.pdf}.

\bibitem{sambin_1988}
Giovanni Sambin and Virginia Vaccaro.
\newblock Topology and duality in modal logic.
\newblock {\em Annals of Pure and Applied Logic}, 37(3):249--296, 1988.
\newblock \href {https://doi.org/10.1016/0168-0072(88)90021-8}
  {\path{doi:10.1016/0168-0072(88)90021-8}}.

\bibitem{sangiorgi_2009}
Davide Sangiorgi.
\newblock On the origins of bisimulation and coinduction.
\newblock {\em {ACM} Transactions on Programming Languages and Systems},
  31(4):1--41, 2009.
\newblock \href {https://doi.org/10.1145/1516507.1516510}
  {\path{doi:10.1145/1516507.1516510}}.

\bibitem{scott_1980}
Dana~S. Scott.
\newblock Relating {Theories} of the {Lambda} {Calculus}.
\newblock In Jonathan~P. Seldin and J.~Roger Hindley, editors, {\em To {H}.
  {B}. {Curry}: {Essays} on {Combinatory} {Logic}, {Lambda} {Calculus}, and
  {Formalism}}. Academic Press, London, 1980.

\bibitem{shulman_2008}
Michael~A. Shulman.
\newblock Framed bicategories and monoidal fibrations.
\newblock {\em Theory and Applications of Categories}, 20(18):650--738, 2008.

\bibitem{thijs_1996}
Albert~Marchienus Thijs.
\newblock {\em Simulation and Fixpoint Semantics}.
\newblock PhD thesis, University of Groningen, 1996.
\newblock URL:
  \url{https://hdl.handle.net/11370/13d08025-29ff-4193-a7f2-ea5bcd20f15d}.

\bibitem{thomason_1975}
S.~K. Thomason.
\newblock Categories of frames for modal logic.
\newblock {\em The Journal of Symbolic Logic}, 40(3):439--442, 1975.
\newblock \href {https://doi.org/10.2307/2272167} {\path{doi:10.2307/2272167}}.

\end{thebibliography}

\end{document}